\documentclass[12pt]{article}
\usepackage{url} 

\usepackage{latexsym, amssymb, amscd, amsthm, amsxtra, amsmath,amsthm, mathtools, mathrsfs}
\usepackage{graphics, graphicx, color}
\usepackage{natbib}
\usepackage{ifpdf}
\usepackage[format=hang,indention=-1cm,small]{caption}
\usepackage{subcaption}
\usepackage{multirow}
\usepackage{kotex}
\usepackage{hyperref}
\usepackage{epsfig}
\usepackage{algorithm2e}
\usepackage{arydshln}
\usepackage{ulem}
\usepackage{booktabs}

\setlength{\dashlinegap}{2pt}
\setcounter{MaxMatrixCols}{20}

\newcommand{\blind}{0}

\addtolength{\oddsidemargin}{-.5in}%
\addtolength{\evensidemargin}{-.5in}%
\addtolength{\textwidth}{1in}%
\addtolength{\textheight}{1.3in}%
\addtolength{\topmargin}{-.8in}%

\newtheorem{thm}{Theorem}
\newtheorem{cor}[thm]{Corollary}
\newtheorem{lem}[thm]{Lemma}

\theoremstyle{definition}
\newtheorem{defn}{Definition}
\newtheorem*{defn*}{Definition}

\theoremstyle{remark}
\newtheorem{remark}{Remark}
\newtheorem*{remark*}{Remark}

\newtheorem{example}{Example}

\begin{document}

\def\spacingset#1{\renewcommand{\baselinestretch}%
{#1}\small\normalsize} \spacingset{1}

\if0\blind
{
  \title{\bf Variable selection and basis learning for ordinal classification}
  \author{Minwoo Kim\\
    Department of Statistics, Seoul National University\\
    Sangil Han \\
    Department of Statistics, Seoul National University\\
	Jeongyoun Ahn$^*$\\
	Department of Industrial and Systems Engineering, KAIST \\
	and \\
	Sungkyu Jung\thanks{
    This work was supported by the National Research Foundation of Korea (NRF) grants funded by the Korea government (MSIT) (No. 
      2021R1A2C1093526, 
      2022M3J6A1063021, 
      RS-2023-00218231, 
      RS-2023-00301976, 
      RS-2024-00333399
      ).
	}\hspace{.2cm}\\
	Department of Statistics and Institute for Data Innovation in Science, \\Seoul National University}
  \maketitle
} \fi

\if1\blind
{
  \bigskip
  \bigskip
  \bigskip
  \begin{center}
    {\LARGE\bf Title}
\end{center}
  \medskip
} \fi

\bigskip
\vspace{-15pt}
\begin{abstract}
 We propose a method for variable selection and basis learning for high-dimensional classification with ordinal responses. The proposed method extends sparse multiclass linear discriminant analysis, with the aim of identifying not only the variables relevant to discrimination but also the variables that are order-concordant with the responses. For this purpose, we compute for each variable an ordinal weight, where larger weights are given to variables with ordered group-means, and penalize the variables with smaller weights more severely. A two-step construction for ordinal weights is developed, and we show that the ordinal weights correctly separate ordinal variables from non-ordinal variables with high probability. The resulting sparse ordinal basis learning method is shown to consistently select either the discriminant variables or the ordinal and discriminant variables, depending on the choice of a tunable parameter. Such asymptotic guarantees are given under a high-dimensional asymptotic regime where the dimension grows much faster than the sample size. We also discuss a two-step procedure of post-screening ordinal variables among the selected discriminant variables.  Simulated and real data analyses confirm that the proposed basis learning provides sparse and interpretable basis, as it mostly consists of ordinal variables. 
\end{abstract}

\noindent%
{\it Keywords:}  
Group lasso; Linear discriminant analysis; Ordinal weight; Sparse basis learning
\spacingset{1.5} 

\section{Introduction} \label{sec: intro}

Classification is one of the most important statistical analyses for handling real-world problems. We focus on ordinal classification problems where the responses, or the class labels, are ordered. As an instance, Tumor-Node-Metastasis (TNM) classification describes the severity of tumor progress on the ordinal scale: `Stage 0' $ \prec $ `Stage I' $ \prec $  `Stage II' $ \prec $ `Stage III' $ \prec $ `Stage IV'. Such ordinal responses frequently arise in a variety of application areas, including medical research, social science, and marketing.

In multi-category classification, especially the ultra-high dimensional problems (i.e., the number of predictors is much larger than the sample size), one seeks to reduce the dimension of the problem by utilizing a low-dimensional discriminant subspace, preferably involving only a small number of variables \citep{fan2008high, witten2011penalized,gaynanova2016sparseLDA, ahn2021trace}.
Accordingly, it has been of interest to correctly select variables relevant to discrimination, called \textit{discriminant} variables, and to estimate the basis of the low-dimensional discriminant subspace  for such ultra-high dimensional classification problems.
Under a common-variance model where observations from the $g$th class ($g = 1,\ldots, K$) have a mean $ \mu_g \in \mathbb{R}^p $ and a variance $\Sigma$ (where $ p $ is the number of variables), it amounts to assume that the rows of discriminant basis $\Sigma^{-1}M$, are mostly zero (or, simply \textit{sparse}), and that the non-zero rows correspond to discriminant variables. 
Here, $ M $ can be any mean difference matrix, for example, $M = [\mu_2 - \mu_1,\ldots,\mu_K - \mu_1]$.
There have been a number of important contributions in the literature, including \cite{clemmensen2011sparse}, \cite{gaynanova2016sparseLDA}, \cite{mai2019multiclass}, and \cite{jung2019poi}, for sparse estimation of the discriminant subspace. 
However, these previous works assume that the categories of the response are nominal.

In this work, we study variable selection and basis learning for ordinal classification. With the ordinal responses, it is natural to prefer order-concordant variables with the response, which in turn provides better interpretability to analysts. For instance, in medical research, one of the main concerns is identifying biomarkers that are up-regulated or down-regulated for the progression of a particular disease.
        
If a variable's class-wise means $\mu^1,\ldots,\mu^K$ are either increasing ($\mu^1<\cdots< \mu^K$) or decreasing, we say the variable is order-concordant and call it an \textit{ordinal} variable. However, not all ordinal variables are relevant to discrimination. To deal with this situation, we aim to select the variables that are both ordinal and discriminative. Under the common variance model, we show that the index set of \textit{ordinal discriminant} variables, $J_{disc}^{ord}$, is generally different from the index set of ordinal variables, $J_{ord}$, and also from the index set of discriminant variables, $J_{disc}$. (These index sets are formally defined in Section 2.2) Accordingly, we propose a method of sparse ordinal basis learning that can be tuned to select either the discrimination variables in $J_{disc}$ or the ordinal discriminant variables in $J^{ord}_{disc}$. 

Our proposal uses the sparse multiclass linear discriminant analysis (LDA) framework, studied in \cite{gaynanova2016sparseLDA}, \cite{mai2019multiclass},
and \cite{jung2019poi}. These sparse multiclass LDA methods result  in a row-wise sparse discriminant basis matrix $\widehat{Z}_\lambda \in \mathbb{R}^{p \times (K-1)}$, formulated as the minimizer of the optimization problem
\begin{equation} \label{intro: sparseLDA}
\min_{Z \in \mathbb{R}^{p \times (K-1)}}~\mbox{trace}\left(
	\frac{1}{2} Z^\top\widehat \Sigma Z - Z^\top \widehat M
	\right)
	+ \sum_{j=1}^{p}\lambda_j  \|\widetilde Z_j\|_2,
\end{equation}
where $ \widetilde Z_j $ is the $ j $th row of the candidate matrix $ Z $, and the tuning parameter $\lambda_j$ was assumed to be equal, i.e., $\lambda_j = \lambda$. 
Different authors choose different estimators $ (\widehat \Sigma, \widehat M) $, as discussed in  Section~\ref{sec:2.1}. 
To incorporate the ordinal information, we consider assigning a weight $w_j \in [0,1]$ to each variable, indicating the degrees to which the variable is order-concordant. Ideally, the weight is 1 if the variable is ordinal, and 0 if the variable is either non-ordinal or has no mean difference at all. We devise a two-step procedure for computing ordinal weights by thresholding two types of sample Kendall's $\tau$, which is guaranteed to separate ordinal variables from the others with high probability. By introducing an additional tuning parameter $\eta \ge 1$, we propose to set the penalty coefficients of (\ref{intro: sparseLDA}) by $\lambda_j = \lambda \eta^{1-w_j}.$
While $\lambda \ge 0$ controls the overall sparsity of the resulting basis, larger choices of $\eta$ result in more severe penalization for non-ordinal variables whose weights are $w_j \approx 0$.
Thus, with a suitable choice of $ (\lambda, \eta) $, the proposed basis learning method aims to find both sparse and ordinal discriminant subspace, and is called \textit{sparse ordinal basis learning} or SOBL for short.
 
Under the common-variance sub-Gaussian model, we theoretically show that the SOBL only selects the ordinal discriminant variables in $J_{disc}^{ord}$ with high probability if the tuning parameter $\eta$ is set sufficiently large. On the other hand, for small enough $\eta$, the SOBL selects the discriminant variables in $J_{disc}$ and consistently estimates the discriminant basis, $\Sigma^{-1}M$. When the SOBL is tuned to properly choose the ordinal discriminant variables, it consistently estimates the discriminant subspace corresponding to the variables in $J_{disc}^{ord}$. These asymptotic results are obtained under a high-dimensional setting where the dimension $p$ may grow as fast as any polynomial order of the sample size. Also, numerical studies in Section \ref{sec: numerical study} show that it is possible that SOBL has similar or better classification performance than sparse LDA while producing a sparser basis supported in $J_{disc}^{ord}$.

In addition, we investigate an \textit{ordinality-screened basis learning} (OSBL) procedure, which first fits the sparse LDA basis $ \widehat{Z}_{\lambda} $ and then sets the rows of $\widehat{Z}_{\lambda}$ corresponding to non-ordinal variables as $ 0 $. The OSBL also has consistent variable selection and basis learning properties. Furthermore, under a more strict Gaussian model, the OSBL theoretically outperforms SOBL, yielding improved asymptotic results. Despite theoretical advantages, our empirical studies in Section \ref{sec: numerical study} reveal that SOBL is often preferable to OSBL.
  
There have been a number of important contributions on ordinal classification. The ordinal logistic regression and its variations, such as  cumulative logit model and continuation ratio model \citep{mccullagh1980regression,archer2014ordinalgmifs}, are most notable. To deal with high-dimensional data, \cite{archer2014ordinalgmifs} and  \cite{wurm2021regularized} proposed various regularization approaches to ordinal logistic regression. From a Bayesian perspective, \cite{zhang2018predicting} used a Cauchy prior to hierarchical ordinal logistic models. 
Recently, \cite{ma2021bioinfo} proposed to learn ordinal discriminant basis, using (\ref{intro: sparseLDA}) but with the within-group variance matrix $\widehat\Sigma$ replaced by $\widehat\Sigma + \alpha \mbox{diag}(1-w_1,\ldots, 1-w_p)$, where $\alpha>0$ is a tuning parameter and $w_j$'s are the ordinal weights. This adjustment effectively reduces the (empirical) variance of ordinal variables, resulting in more pronounced selection of ordinal variables. We note that many of these previous work  lack theoretical guarantees. 
We also note that most support vector machine based ordinal classifiers
\citep{shashua2002ranking,chu2005new,qiao2017noncrossing} 
typically perform poorly for high-dimensional data. We numerically compare our proposal with a penalized logistic regression of \cite{archer2014ordinalgmifs} and the basis learning method of \cite{ma2021bioinfo} in Section~\ref{sec: numerical study}.

The rest of the paper is organized as follows. In Section \ref{sec: method}, we formally define the set of  ordinal discriminant variables, and propose the SOBL framework and the OSBL procedure. In Section \ref{sec: ordinal feature weight}, we devise a two-step procedure for computing ordinal weights, and provide theoretical and empirical evidence on the efficacy of the proposed weights. In Section \ref{sec: theory}, the variable selection and basis learning performances of the SOBL are evaluated theoretically. There, we provide non-asymptotic probability bounds, which translate into high-dimensional asymptotic guarantees for sparse ordinal basis estimation. In Section \ref{sec: numerical study}, we numerically show that the proposed methods select a much more sparse basis than competing methods and excel at selecting the ordinal discriminant variables in  simulated and real data examples. Technical details and proofs are contained in the supplementary material.

\section{Sparse ordinal basis learning} \label{sec: method}
We consider a setting of ordinal multiclass linear discriminant analysis with a predictor $X \in  \mathbb{R}^p $ and class membership $Y$, labeled as $1,2,\ldots, K$.
Here, the class labels have natural order relationship of
$ 1 \prec 2 \prec \dots \prec K $.
Given $ Y = g $, we assume $ \mathbb E[X~\vert~Y=g] = \mu_g$ and $\mbox{cov}(X~\vert~Y=g) = \Sigma_w, $ where $\mu_g = (\mu^g_1, \dots, \mu^g_p)^{\top} \in \mathbb{R}^p$ is the mean vector of the $g$th class. The within covariance matrix $ \Sigma_w $ is common across different classes (or groups).
For the class probability, we set $ \pi_g =  \Pr(Y = g)$.
We model that $ X $ conditionally follows a sub-gaussian distribution \citep{vershynin2018high} on the given class; there exists a $ \sigma \in (0, \infty) $ such that for any class $ g $ and $t > 0$,%
\begin{equation} \label{eq:subG-model}
	\sup_{\|v\|_2 = 1} \Pr(|v^{\top}(X - \mu_g)| \ge t ~|~ Y=g) \le 2\exp\left(-\frac{ct^2}{\sigma^2}\right)
\end{equation}
where $ c > 0 $ is an absolute constant.
Note that the model \eqref{eq:subG-model} contains the ordinary Gaussian model $X \sim N(\mu_g, \Sigma_w)$ given $Y = g$ which is generally assumed for linear discriminant analysis setting.
Suppose we have total $ N $ independent observations $ \{(X_i, Y_i)\}_{i=1}^N $.
Let
$
\mathbf{X} = [X_1, \dots, X_N]^\top = [\mathbf{X}_1, \dots, \mathbf{X}_p] \in \mathbb{R}^{N \times p}
$
be the matrix of predictors, where
$ \mathbf{X}_j = (X_{1j}, \dots, X_{Nj})^\top \in \mathbb{R}^N $ collects the $ N $ observations corresponding to the $ j $th variable,
and $ \mathbf{Y} = [Y_1, \dots, Y_N]^{\top} $ be the vector of responses.
For each $ g = 1, \dots, K $, we write
$ n_g = \sum_{i=1}^{N} I(Y_i = g)$ so that $ N = \sum_{g=1}^K n_g $.
We denote the overall mean as $ \mu = \sum_{g=1}^G \pi_g\mu_g $.
The total sample mean is $ \hat \mu = \frac{1}{N}\sum_{i=1}^{N} X_i $ and
the group-wise sample mean for the $g$th group is $ \hat \mu_g = \frac{1}{n_g}\sum_{Y_i = g} X_i $. 

\subsection{Sparse multiclass LDA}\label{sec:2.1}

Multiclass LDA is a widely used projection-based classification method. It finds a low-dimensional subspace, or the LDA subspace, that
maximizes the between-group variance and minimizes the within-group variance simultaneously.
Denote $ \Sigma_b $ for the population between-group covariance matrix, 
$
	\Sigma_b = \sum_{g=1}^K \pi_g (\mu_g - \mu)(\mu_g - \mu)^{\top}.
$

On the population level, LDA seeks the \textit{discriminant vectors}
$\beta_1, \dots, \beta_{K-1}$
which are the solutions to the following sequential optimization problem:
For $ k = 1, 2, \ldots, K-1$,
\begin{equation}
	\max_{\beta_k \in \mathbb{R}^p} \beta_k^\top \Sigma_b \beta_k
	\quad\mbox{ subject to }~
	\beta_k^\top \Sigma_w \beta_k = 1,~
	\beta_k^\top \Sigma_w \beta_l = 0 \mbox{ for } l < k.
\end{equation}
Let $ B = [\beta_1, \dots, \beta_{K-1}] \in \mathbb{R}^{p \times (K-1)}$ be the matrix of discriminant vectors. We call the subspace $\mathcal C(B) $ as the \textit{population discriminant subspace} where $ \mathcal C(\cdot) $ stands for the column space of the given matrix.
It is known that for any basis matrix $ M \in \mathbb{R}^{p \times (K-1)} $ of $ \mathcal C ( \Sigma_b) $, $\mathcal C(B) = \mathcal{C}(\Sigma_w^{-1}M) = \mathcal C (\Sigma_T^{-1}M)$,
where $ \Sigma_T = \mbox{cov}(X) = \Sigma_w + \Sigma_b $  \citep{safo2016general,gaynanova2016sparseLDA}.
In this perspective, estimating the \textit{population discriminant basis} $ \Psi = \Sigma^{-1}M $, for any $ \Sigma \in \{\Sigma_w, \Sigma_T\}$,  instead of directly estimating discriminant vectors $ B $, is enough for the discriminant subspace estimation.
The natural way to estimating $ \Psi $ is substituting $ \Sigma $ and $ M $ with their sample versions.
For the sample versions of covariances, we take
$ \widehat \Sigma_w = \frac{1}{N}\sum_{g=1}^{K}\sum_{Y_i =g }	(X_i - \hat \mu_g)(X_i - \hat \mu_g)^\top,$ and 
$ \widehat{\Sigma}_T = \widehat \Sigma_w + \widehat \Sigma_b $, where
$ \widehat \Sigma_b = \sum_{g=1}^{K} \frac{n_g}{N} (\hat \mu_g - \hat \mu)(\hat \mu_g - \hat \mu)^\top. $

However, in high-dimensional situations with $ p > N $, $ \widehat{\Sigma}_w $ and $ \widehat{\Sigma}_T $  are singular and classification rules based on such a sample subspace perform poorly \citep{bickel2004some}. Moreover, since the classifier involves so many variables, it is not easy to interpret. 
To resolve these problems, sparse estimators of the discriminant subspace have been proposed. In particular, the sparse multiclass LDA estimators proposed by \cite{gaynanova2016sparseLDA}, \cite{mai2019multiclass} and \cite{jung2019poi} are all formulated by 
the same form of convex optimization problems.
To motivate these estimators, observe that
\begin{equation} \label{eq: pop-basis-lowdim}
	\Psi
	= \underset{Z \in \mathbb{R}^{p \times (K-1)}}{\arg\min}
	\mbox{trace}\left(
	\frac{1}{2} Z^\top \Sigma Z - Z^\top  M
	\right).
\end{equation}
Sparse multiclass LDA methods estimate a sparse basis of discriminant subspace by adding a sparsity-inducing convex penalty term to the sample version of (\ref{eq: pop-basis-lowdim}):
\begin{equation} \label{eq: sparse LDA}
	\widehat Z_\lambda = \underset{Z \in \mathbb{R}^{p \times (K-1)}}{\arg\min}
	\left\{
	\mbox{trace}\left(
	\frac{1}{2} Z^\top \widehat \Sigma Z - Z^\top \widehat M
	\right)
	+ \lambda \sum_{j=1}^{p} \|\widetilde Z_j\|_2
	\right\},	
\end{equation}
where $ \widetilde Z_j $ is the $j$th row of $Z$ and
$ \lambda > 0 $ is a sparsity-inducing tuning parameter.
The group-lasso \citep{yuan2006grouplasso} penalty term $\lambda \sum_{j=1}^{p} \|\widetilde Z_j\|_2$
in (\ref{eq: sparse LDA}) 
leads to variable-wise sparsity of the estimated basis since the elements of $Z$ are grouped row-wise. 
We note that the variable-wise sparsity is preserved under rotation, which is a desirable property in basis learning. See \cite{jung2019poi} for a detailed discussion on other choices of penalty.

\cite{gaynanova2016sparseLDA}, \cite{mai2019multiclass} and \cite{jung2019poi} choose to use different combinations of $ (\widehat \Sigma, \widehat M) $ in obtaining their sparse LDA basis matrix $ \widehat Z_{\lambda} $.
\cite{gaynanova2016sparseLDA} uses $\widehat \Sigma = \widehat \Sigma_T$ and $ \widehat M $ given by setting the $r$th column of $\widehat M$ to be
$
	\hat m_r = \frac
	{\sqrt{n_{r+1}} \sum_{i=1}^{r}n_i(\hat \mu_i - \hat \mu_{r+1})}
	{\sqrt{N\sum_{i=1}^{r}n_i \sum_{i=1}^{r+1}n_i}}
$
for $r = 1, \dots, K-1$.
This choice of $ (\widehat \Sigma, \widehat M) $ will be referred to as
multi-group sparse discriminant analysis (MGSDA), following the terminology of \cite{gaynanova2016sparseLDA}.
The combination of $ \widehat \Sigma = \widehat \Sigma_w$ and
$ \widehat M = [\hat \mu_2 - \hat \mu_1, \dots, \hat \mu_K - \hat \mu_1] $
are used in \cite{mai2019multiclass}, which will be called the
multiclass sparse discriminant analysis (MSDA).
Finally, \cite{jung2019poi} chose $ \widehat \Sigma = \widehat \Sigma_w$ and to use
the leading $ K-1 $ eigenvectors of $ \widehat \Sigma_b $
corresponding to nonzero eigenvalues of $ \widehat \Sigma_b $ for the columns of $ \widehat M $.
Following \cite{jung2019poi}, this choice will be called
fast penalized orthogonal iteration (fastPOI). 

\subsection{Ordinal discriminant variables}\label{sec:2.2}

We categorize the $p$ predictors, or variables, by the roles they play in the classification under the LDA model.
Let $ J =  \{1, 2, \dots, p\} $ be the index set of all variables.
For $ j \in J $, the $ j $th variable is a \textit{mean difference} variable
if $ \mu_j^{g_1} \neq \mu_j^{g_2} $ for some $ g_1 \neq g_2 $,
and $ J_{md} $ denotes the index set of all mean difference variables. We say that
the $ j $th variable is a \textit{noise} variable  if
$ \mu_j^1 = \mu_j^2 = \dots = \mu_j^{K} $, and  $ J_{noise} $ denotes the set of noise variables.
We further split $ J_{md} $ as follows.
The $ j $th variable is an order-concordant mean-difference variable, or \textit{ordinal} variable, if $ j \in J_{md} $ and $\mu^1_j \le \mu^2_j \le \dots \le \mu^K_j $ or $\mu^1_j \ge \mu^2_j \ge \dots \ge \mu^K_j,$
and	we denote the set of ordinal variables by $ J_{ord} $; 
and finally, 
the $ j $th variable is a nominal (non-ordinal mean-difference) variable if $ j \in J_{nom}$ where $ J_{nom} := J_{md} \setminus J_{ord}$.

Next, we define variables relevant to the population discriminant subspace.
The index set of all discriminant variables $ J_{disc} $ is defined as
$J_{disc} = \{j: \Psi_{j, :} \neq \mathbf{0}_{K-1}\} $,
where $\Psi_{j, :}$ denotes the $ j $th row of $ \Psi $.
Note that  $ J_{disc} $ consists of all variables contributing to the Bayes rule classification under the Gaussian model.

In general, $ J_{md} $ is not the same as $J_{disc}$, and furthermore, it is possible to be either $J_{md} \not\subset J_{disc} $ or $J_{disc} \not\subset J_{md} $ as shown in Proposition 1 and following examples of \cite{mai2012direct}.
Thus, only using  the mean-difference variables in classification, as done in e.g. ``independence rules'' \citep{bickel2004some, fan2008high}, may fail to capture the discriminant variables.
Similarly, in ordinal classification problems, not all ordinal variables contribute to discrimination.
To elaborate on this point, we define ordinal discriminant variables as follows.
\begin{defn} \label{def: odr-disc-var}
	For $ j \in \{1, 2, \dots, p\} $,
	the $ j $th variable is an \textit{ordinal discriminant variable}
	if $ j \in J_{disc}^{ord} \coloneqq J_{ord} \cap J_{disc} $.
\end{defn}

In general, $J_{disc}^{ord} \neq J_{disc}$ and $J_{disc}^{ord} \neq J_{ord}$.

\begin{example} \label{example-1}
	Let us consider $ p=8 $, $ K=3 $,
	$ \Sigma_w = 0.5 (I_p + \mathbf 1_p \mathbf 1_p^\top)$ and
	\[
	[\mu_1, \mu_2, \mu_3]
	=
	\left[
	\begin{array}{cccc:cccc}
		0.5 & 0   & 0  & 0  & 0 & 0 & 0 & 0 \\
		1   & 0.5 & 1  & -1 & 3 & 2 & -1 & -0.5 \\
		1.5 & 1.0 & 2  & -1.5 & 2 & -0.5 & 2 & 3
	\end{array}
	\right]^\top.
	\]
	Then for the three LDA methods introduced at Section \ref{sec:2.1},
	we have
	\begin{align*}
		\Psi_{\mbox{\scriptsize{MGSDA}}} &= \begin{bmatrix}
			0 & 0 & -0.187 & 0.305 & -0.374 & -0.135 & 0.157 & 0.0480 \\
			0 & 0 & -0.102 & 0.251 & -0.0639 & 0.196 & -0.140 & -0.243	
		\end{bmatrix}^\top, \\
		\Psi_{\mbox{\scriptsize{MSDA}}} &= \begin{bmatrix}
			0 & 0 & 1 & -3 & 5 & 3  & -3 & -2  \\
			0 & 0 & 2 & -5 & 2 &-3  & 2  & 4
		\end{bmatrix}^\top, \\
		\Psi_{\mbox{\scriptsize{fastPOI}}} &= \begin{bmatrix}
			0 & 0 & -0.270 & 0.609 & 0.256 & 0.996 & -0.796 & -1.07 \\
			0 & 0 & 0.320 & -0.909 & 1.12 & 0.505 & -0.556 & -0.236 	
		\end{bmatrix}^\top.
	\end{align*}
	In this example, we have
	$ J_{disc} = \{3, 4, 5, 6, 7, 8\} $, $ J_{ord} = \{1, 2, 3, 4\} $ and
	$ J_{disc}^{ord} = \{3, 4\} $.
	So, the first and second variables are ordinal
	but do not contribute to discriminant directions.
	See Figure \ref{fig: index} for an illustration.
\end{example}

\begin{figure}[t]
	\begin{center}
		\includegraphics[width=0.4\textwidth]{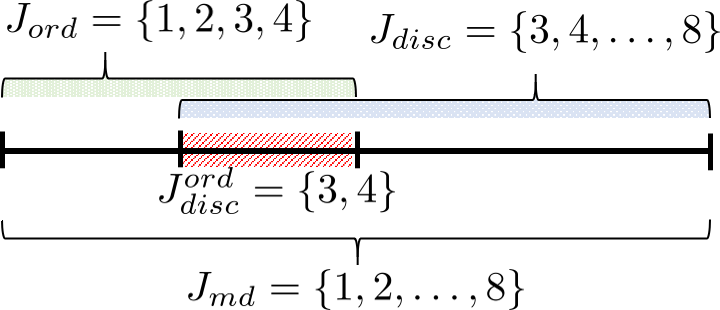}
		\caption{
        Visual illustration of variable sets in Example \ref{example-1} with respect to their variable types.
        }
		\label{fig: index}
	\end{center}
\end{figure}

We aim to find a discriminant basis that contains ordinal information, and consequently, our target variables are the variables in $J_{disc}^{ord}$. Our proposed methods can be tuned to detect either only the ordinal discriminant variables or the discriminant variables, as we demonstrate in later sections, while other methods do not capture $ J_{disc}^{ord} $ properly.

\subsection{Proposed methods} \label{sec:2.3}
We incorporate variable-wise order-concordance information 
into the sparse LDA basis learning framework (\ref{eq: sparse LDA}).
To extract and use ordinal information,
we adopt the idea of ``ordinal feature weights'' in \cite{ma2021bioinfo}.
For each $ j \in J $, we will define and use a variable-wise ordinal weight $w_j \in [0, 1]$, computed from the data $ (\mathbf{X}_j, \mathbf{Y}) $.
The ordinal weight indicates the degrees to which a variable is order-concordant with the class order, i.e., larger weights for order-concordant variables.
A natural candidate for $w_j = w_j(\mathbf{X}_j, \mathbf{Y})$ is the absolute value of rank correlation, as used in \cite{ma2021bioinfo}, but is not an ideal choice. We will return to this matter shortly.
Once the ordinal weights $w_j$ are given, we adjust the penalty coefficient for the  $ j $th variable by
$
\lambda_j = \lambda \eta^{1-w_j},
$
where $ \lambda > 0 $ and $ \eta \ge 1 $ are tuning parameters,
and attach $ \lambda_j $ to the group lasso penalty term of (\ref{eq: sparse LDA})
and propose the following basis:
\begin{equation} \label{eq: SOBL}
	\widehat Z_{\eta, \lambda}^{ord} = \underset{Z \in \mathbb{R}^{p \times (K-1)}}{\arg\min}
	\left\{
	\mbox{trace}\left(
	\frac{1}{2} Z^\top\widehat \Sigma Z - Z^\top \widehat M
	\right)
	+ \lambda \sum_{j=1}^{p} \eta^{1-w_j} \|\widetilde Z_j\|_2
	\right\}.
\end{equation}
A block-coordinate descent algorithm \citep{tseng1993desc} can efficiently solve the convex minimization problem (\ref{eq: SOBL}).
The detailed procedure is given in supplementary material S2. 

When the additional tuning parameter $\eta$ is set to 1, the estimated basis $\widehat Z_{1, \lambda}^{ord}$ is exactly the $\widehat Z_\lambda$ from the sparse multiclass LDA (\ref{eq: sparse LDA}).
On the other hand, when $\eta > 1$ (and $\lambda >0$), the penalty coefficient for the $j$th variable, $\lambda_j = \lambda \eta^{1-w_j}$, is a decreasing function of $w_j$. In particular, smaller ordinal weight $w_j \approx 0$ results in larger $\lambda_j$ which in turn gives more severe penalization for the magnitude of the $j$th variable loadings $\widetilde{Z}_j$ in the basis matrix. 
Thus, the basis obtained from (\ref{eq: SOBL}) tends to include ordinal variables with larger weights and also to discard non-ordinal and noise variables with smaller weights. In this vein, our proposal (\ref{eq: SOBL}) may be called a method for \textit{sparse ordinal basis learning} (SOBL).

Other formulations for the penalty coefficient that provide theoretical properties comparable to those of $\eta^{1-w_j}$ are available. For instance, a penalty coefficient $\lambda_j = \lambda \eta^{(1-w_j)^2}$ would also work. While exploring the best choice for the penalty term may be interesting, we do not address this specific issue in this paper.

The result of SOBL is highly dependent on the ordinal weights $w_j$. Ideally, properly chosen weights separate the variables in $J_{ord}$ from those in $J_{nom} \cup J_{noise}$, by assigning larger weights for $J_{ord}$ and smaller weights for $J_{nom} \cup J_{noise}$.
In Section \ref{sec: ordinal feature weight},
we review some candidates for ordinal weights based on rank correlations and
monotone trend tests, and
propose a two-step ordinal weight that satisfies
$w_j = 1$ if $j \in J_{ord}$, and $w_j = 0$ if $j \in J_{nom} \cup J_{noise}$
with high probability under some regularity conditions.

Denote the selected variables of SOBL by $\widehat D^{sobl}
	:= \widehat{D}(\widehat{Z}_{\eta, \lambda}^{ord}) 
	= \{j: \mbox{the }j\mbox{th row of } \widehat Z^{ord}_{\eta, \lambda} \neq 0\}.$
Taking $ \eta $ large enough, the selected variables in $ \widehat D^{sobl} $ are targeted at estimating $ J_{disc}^{ord} =  J_{disc} \cap J_{ord} $ which is generally not equal to $ J_{disc} $ or $ J_{ord} $ as seen in Example \ref{example-1}. Thus, the variable selection by SOBL is not in general equivalent to either a variable screening (of screening-in $J_{ord}$) or a variable selection by vanilla sparse multiclass LDA, targeted to select $J_{disc}$.
In Section \ref{sec: theory}, we show theoretically that, for large enough $\eta$, $ \widehat D^{sobl} \subset J_{disc}^{ord} $ and for small $\eta$, $ \widehat D^{sobl} \subset J_{disc}$ in both non-asymptotic and high-dimensional asymptotic settings.
We note that these types of theoretic results are completely new to
the ordinal classification literature.
Based on the theoretical result, we propose a tuning parameter selection scheme
in Section \ref{sec: numerical study}.
In the same section, we also numerically demonstrate that, using the proposed tuning procedure, the variables selected by SOBL are mostly ordinal discriminant variables in $J_{disc}^{ord} $, a feature that other variable selection methods do not have.

In addition to the proposed SOBL basis of \eqref{eq: SOBL}, one can consider obtaining a sparse basis as follows: Fit the sparse multiclass LDA \eqref{eq: sparse LDA} and subsequently screen out nominal variables based on ordinal weights. 
For a predetermined threshold $ \theta_w \in [0, 1] $, 
the set of screened ordinal variables is defined as $ \hat{J}_{ord} := \hat{J}_{ord}(\theta_w) = \{j: w_j \ge \theta_w\}, $ and the following ordinal basis $ \widehat{Z}_{\lambda}^{scr} $ is defined as $ (\widehat{Z}_{\lambda}^{scr})_{j,:} = (\widehat{Z}_{\lambda})_{j, :} $ if $ j \in \hat{J}_{ord} $, and $ (\widehat{Z}_{\lambda}^{scr})_{j,:} = 0 $ otherwise.
(Here, the subscript denotes the $ j $th row.)
We call this screening procedure as \textit{ordinality-screened basis learning} (OSBL) and call $\widehat{Z}_{\lambda}^{scr}$ as OSBL basis. 
Thus, the selected variables are in $ \widehat{D}(\widehat{Z}^{scr}_{\lambda}) = \hat{J}_{ord} \cap \hat J_{disc}$, where $\hat J_{disc} = \widehat{D}(\widehat{Z}_{\lambda}) $.
If both $ \hat J_{disc} = J_{disc} $ and $ \hat J_{ord} = J_{ord} $ occur with high probability, the OSBL basis also consistently selects variables contained in $ J_{disc}^{ord}. $
In Section \ref{sec: theory}, we theoretically demonstrate that the OSBL basis has the variable selection consistency under the sub-Gaussian model (\ref{eq:subG-model}), and even has a better asymptotic result than  SOBL under the Gaussian model.
Although both methods consistently select variables in $ J_{disc}^{ord} $, the two methods work differently in finite samples.
SOBL has more flexibility through utilizing the tuning parameter $ \eta $, and our empirical results, presented at Section 5, show that SOBL performs better than the OSBL basis in classification accuracy.

\begin{remark}
It is natural to ask whether SOBL is equivalent to the other choice of two-step variable selection, first screening in ordinal variables and then applying the multiclass LDA. The answer is no.
Specifically, while SOBL aims to select the  ordinal discriminant variables in $J_{disc}^{ord}$, the set $J_{disc}^{ord}$ is not in general equal to the variables selected by such a two-step method.
As an example, consider the model introduced in Example \ref{example-1}. Suppose we only use the variables in $ J_{ord} = \{1, 2, 3, 4\} $ in finding the discriminative subspace. This two-step approach gives the discriminative basis matrix
$ \Psi_{\mbox{\scriptsize{MSDA}}}^o = (\Sigma_{1:4, 1:4})^{-1} M_{1:4, :} = [v_1, v_2]$ where $v_1=(0.6, 0.6, 1.7, -2.4)^\top$ and $v_2=(1, 1, 3, -4)^\top$. 
This implies that resulting
discriminant variables are indexed $ \{1, 2, 3, 4\} $.
This does not coincide with $ J_{disc}^{ord} = \{3, 4\} $, which is the target for SOBL and the OSBL basis.
We mention in passing that such a two-step approach  for ordinal classification was considered in \cite{zhang2018predicting}.
\citeauthor{zhang2018predicting} proposed to first screen variables based on $p$-values obtained from univariate ordinal logistic regression, then to fit a Bayesian hierarchical ordinal logistic model with the screened variables.
\end{remark}

\section{The choice of ordinal weights} \label{sec: ordinal feature weight}
The ordinal weight $w_j$ measures order-concordance between the $N $ sample of the $ j $th variable $ \mathbf{X}_j $ and $ \mathbf{Y} $. We briefly review two natural choices for ordinal weights, stemming from rank correlations and monotone trend tests, respectively, and reveal their drawbacks. We then propose a two-step composite procedure of computing ordinal weights and show that our proposal provides the largest empirical gap between weights in $ J_{ord} $ from those in $ J_{ord}^c $ and is also guaranteed to identify $ J_{ord} $ asymptotically.

\subsection{Rank correlation ordinal weights}
Natural choices for ordinal weights are given by 
Spearman's rank correlation and Kendall's $ \tau $ \citep{kendall1990rank}.
Spearman's rank correlation for the $ j $th variable is defined by the sample Pearson correlation coefficient between two rank variables with respect to $ \mathbf X_j $ and $ \mathbf Y $.
The sample Kendall's $ \tau $ of the $ j $th variable, denoted by $ \hat \tau_j $, is defined as
\begin{equation} \label{eq: tau}
\hat \tau_j(\mathbf X, \mathbf Y) = \frac{1}{N(N-1)/2}
\sum_{1 \le i_1 < i_2 \le N} \mbox{sign}(X_{i_2 j} - X_{i_1 j}) \cdot \mbox{sign}(Y_{i_2}-Y_{i_1}).
\end{equation}
\cite{ma2021bioinfo} used $w_j^R=\vert \mbox{rank corr}(\mathbf{X}_j, \mathbf{Y}) \vert$ as an
ordinal weight of the $ j $th variable.
Here,  `$\mbox{rank corr}$' is either Spearman's rank correlation or Kendall's $ \tau $.
From the definition of rank correlations, higher $ w_j $ indicates that the $ j $th variable has
an ordinal trend along with the group ordinality $ 1 \prec \dots \prec K $.

While for $ j \in J_{noise} $, $ w_j^R \to 0 $ as $ N \to \infty $,
it is not generally true that $ w_j^R \to 1 $ as $ N \to \infty $ for $ j \in J_{ord} $.
Moreover, for $ j \in J_{nom} $, $ w_j^R $ does not necessarily converge to $ 0 $ as $ N \to \infty $. Thus, the gap, $ \min\{w_j^R: j \in J_{ord}\} - \max \{w_j^R: j \in J_{ord}^c\} $, can be narrow.

\subsection{Ordinal weights based on monotone trend tests}
For each $ j \in J $, consider testing
whether the $ j $th variable is ordinal or not, and set
\begin{equation} \label{eq: trend test}
	H_0^{(j)}: j \in J_{ord}^c
	~\mbox{ vs. }~
	H_a^{(j)}: j \in J_{ord}.
\end{equation}
Suppose a monotone trend test is used to test (\ref{eq: trend test}), and let $p_j$ be the $p$-value of the test applied for the $j$th variable.
Define an ordinal weight for the $ j $th variable by $w_j^P = 1 - p_j$.
Although  $ w_j^P $ may not converge to $0$ for all $ j \in  J_{ord}^c$,
for $ j \in J_{ord} $, $ w_j^P \to 1 $ as $ N \to \infty $ if the test procedure is
``well-designed''.
Hence, $ w_j^P$ is a viable candidate for an ordinal weight.

Most existing trend tests,
such as Jonckheere trend test \citep{jonckheere1954distribution}, Cuzick test \citep{cuzick1985wilcoxon}, and likelihood ratio test \citep{robertson1988order},
only consider the null hypothesis of
$ \mu_j^1 = \mu_j^2 = \dots = \mu_j^K $,
which causes a failure of controlling type I error rate when $ j \in J_{nom} $
\citep{robertson1988order, hu2020testing}.
To handle the whole parameter space, \cite{hu2020testing} proposed
sequential test procedures based on the one-sided two sample $ t $-test
for testing
\begin{equation} \label{eq: broad test}
	H_0^{(j)}:~ \mbox{not } H_a^{(j)} \quad \mbox{vs.} \quad
	H_a^{(j)}:~ \mu^{1}_j < \mu^{2}_j < \cdots < \mu^{K}_j  ~\mbox{ or }~
	\mu^{1}_j > \mu^{2}_j > \cdots > \mu^{K}_j.
\end{equation}
The ordinal weight based on sequential $t$-tests is given as follows.  
For each $ j \in J $,
\begin{itemize}
	\item[1.] (Increasing order)
	Calculate the $p$-value $ p_i^{inc} $ of the one-sided $ t $-test for testing
	$ H_{0,i}^{(j)}: \mu^i_j \ge \mu^{i+1}_j ~ \mbox{vs.} ~
	H_{a,i}^{(j)}: \mu^i_j < \mu^{i+1}_j$
	for each $ i = 1, \dots, K-1 $ and define
	$ p^{inc} = \max\{p_1^{inc}, \dots, p_{K-1}^{inc}\} $.

	\item[2.] (Decreasing order)
	Calculate the $p$-value $ p_i^{dec} $ of the one-sided $t$-test for testing
	$ H_{0,i}^{(j)}: \mu^i_j \le \mu^{i+1}_j ~ \mbox{vs.} ~ H_{a,i}^{(j)}:
	\mu^i_j > \mu^{i+1}_j$ for each $ i = 1, \dots, K-1 $ and define
	$ p^{dec} = \max\{p_1^{dec}, \dots, p_{K-1}^{dec}\} $.
	\item[3.] The overall $p$-value is $ p_j = \min(p^{inc}, p^{dec}) $ and 
    the ordinal weight is given by $ w_j^P = 1-p_j $.
\end{itemize}

Although the parameter space of $ H_a^{(j)} $ in (\ref{eq: broad test}) is slightly smaller than that given by $ J_{ord} $ in (\ref{eq: trend test}),
this strict ordinality simplifies the analysis, and makes the ordinality more interpretable.
For sufficiently large $ N $,
$ w_j^P \approx 0 $ for $ j \in J_{nom} $, and $ w_j^P \approx 1 $ for $ j \in J_{ord} $, if $ H_a^{(j)} $ of (\ref{eq: broad test})
is also true, as desired.
However, there is no evidence for $ w_j^P \to 0 $ as $ N \to \infty $ when $ j \in J_{noise} $.

\subsection{Two-step ordinal weight} \label{subsec: two-step}
To construct an ordinal weight $ w_j $ satisfying $ w_j \to 1 $ for $ j \in J_{ord} $ and
$ w_j \to 0 $ for $ j \in J_{ord}^c $,
we propose a two step construction based on  Kendall's rank correlation.

In the first step, we use the variable-wise Kendall's $ \tau $ (\ref{eq: tau})
to screen out the noise variable.
Second, Kendall's $ \tau $ calculated from group means and class labels
is used to discern ordinal variables from nominal variables.
A detailed procedure is now given.
Let $ \theta_1, \theta_2 \in (0, 1) $ be pre-defined constants.
For each $ j \in J $,
\begin{enumerate}
	\item (Screening out $ J_{noise} $) Calculate
	$$ \hat \tau_j =  \frac{2}{N(N-1)}\sum_{1 \le i_1 < i_2 \le N}
	\mbox{sign}(X_{i_2 j} - X_{i_1 j}) \mbox{sign}(Y_{i_2} - Y_{i_1}).$$
	Set $ w_j = 0 $ if $ \vert \hat \tau_j \vert \le \theta_1 $.
	Otherwise, proceed to Step 2.
	\item (Identifying $ J_{ord} $ and $ J_{nom} $) Calculate
	$$ \tilde{\tau}_j = \frac{2}{K(K-1)} \sum_{1 \le g_1 < g_2 \le K} \mbox{sign}
	(\hat \mu_j^{g_2} - \hat \mu_j^{g_1}).$$
	Ordinal weight is given as $ w_j = 1 $ if $ \vert \tilde \tau_j \vert > 1 - \theta_2 $ and
	$ w_j = 0 $ otherwise.
\end{enumerate}
Summing up, the proposed \textit{two-step} ordinal weight for the $ j $th variable is
\begin{equation} \label{eq: theo ordinal weight}
	w_j = I(\vert \hat{\tau}_j \vert > \theta_1) I(\vert \tilde{\tau}_j \vert > 1 - \theta_2).
\end{equation}

To see why the noise variables in $ J_{noise} $ are screened out in the first step,
note that $\hat \tau_j$ satisfies
$\mbox{Pr}\left( \vert \hat \tau_j - \mathbb E \hat \tau_j \vert \ge t	 \right)
\le 2\exp( -Nt^2 / 8)$ for any $ j \in J $
as shown in Lemma  S1 in the supplementary material. 
Since $\hat \tau_j$ has a sub-Gaussian tail bound around $\mathbb E \hat \tau_j$, we shall investigate $\mathbb E \hat \tau_j$ for $j  \in J_{noise}$ and $j \in J_{md}$. 
The expectation of $ \hat \tau_j $ becomes $ 0 $ when $ j \in J_{noise} $. (See Section S1.1 of the supplementary material for derivation.)
In contrast, $| \mathbb E \hat{\tau}_j |$ is strictly greater than zero for all $ j \in J_{md} $.
If $ N $ is taken large enough, then for an appropriately chosen constant $\theta_1$,
we have $0 \approx \hat \tau_j < \theta_1$ for all $ j \in J_{noise} $, and
$| \mathbb E \hat{\tau}_j | \approx \vert \hat \tau_j \vert > \theta_1 > 0$ for
all $ j \in J_{md} $ with high probability.
So, we can detect whether the group mean structure is noise or not by thresholding
$ \hat \tau_j $ as in Step 1, $ I(|\hat \tau_j| > \theta_1) $.
Note that Step 1 does not always discern the ordinal variables from non-ordinal mean-difference variables. This is because $\max_{j \in J_{nom}} |\mathbb E\hat\tau_j|$ can be arbitrarily close to, or even larger than, $\min_{j \in J_{ord}}  |\mathbb E\hat\tau_j|$.

In the second step, the group means are only used in computing  $\tilde \tau_j$.
We assume that the strict ordinality holds for all $ j \in J_{ord} $, i.e.,
$ \mu_j^1 < \mu_j^2 < \dots < \mu_j^K $ or $ \mu_j^1 > \mu_j^2 > \dots > \mu_j^K $.
Since for all $g = 1,\ldots, K$, $ \hat \mu_j^{g} \to \mu_j^{g}$ almost surely as
$ N \to \infty $, we have
for any $j \in J_{ord}$, $ \vert  \tilde \tau_j \vert \to 1$ almost surely (a.s.).  Moreover, $ \lim_{N \to \infty} \vert \tilde \tau_j \vert < 1 \mbox{ a.s.}$
when $ j \in J_{nom} $.
This implies that by thresholding $ \tilde \tau_j $, as in $ I(|\tilde \tau_j| > 1-\theta_2) $,
we can determine whether the given variable is ordinal or not. Note that using Step 2 alone does not screen out the noise variables since, for any $j \in J_{noise}$, $\tilde \tau_j$ can be 1 with positive probability.

\subsection{Maximal separating property of the two-step ordinal weights} \label{subsec: ts-theory}

Theoretically, appropriately chosen threshold values $ \theta_1$ and $ \theta_2 $
achieve a maximal separating property in the sense that
$ w_j = 1 $ for $ j \in J_{ord} $ and $ w_j = 0 $ for $ j \in J_{ord}^c $.
Define
$ \Delta = \min_{j \in J_{md}} \vert \mathbb E \hat \tau_j \vert $ and
$\vartheta_{\min} = \min_{j \in J_{md}} \min_{1 \le g_1 < g_2 \le K }
\vert \mu_{j}^{g_2} - \mu_{j}^{g_1}\vert / \sigma,$
where $\sigma > 0$ is the sub-gaussian parameter appeared in \eqref{eq:subG-model}.
We require the following conditions:
\begin{itemize}
	\item[(C1)] There exist $ c_1, c_2 > 0 $ such that $ c_1 \le \pi_g \le c_2 $ for all $ g = 1, \dots, K $.
	\item[(C2)] For $j \in J_{noise}$, it holds that $X_j - \mu_j \overset{d}{=} -(X_j - \mu_j) ~|~ Y$, where $\mu_j \equiv \mu_j^1 = \dots = \mu^K_j$.
    \item[(C3)] $ \Delta > 0 $.
	\item[(C4)] $ \vartheta_{\min} > 0 $.
\end{itemize}
Condition (C1) guarantees balanced class sizes.
Condition (C2) assumes symmetry for noise variables and implies that $ \mathbb E \hat \tau_j = 0 $ for $j \in J_{noise}.$
Condition (C3) is necessary for separating mean difference variables from noise variables.
We note that (C3) is a loose condition since we can allow $\Delta \to 0$ as $N \to \infty$ in our asymptotic results in Section \ref{subsec: asymptotic}.
Condition (C4) requires all class means are different, i.e.,
$ \mu_j^{g_1} \neq \mu_j^{g_2} $ for all $ j \in J_{md} $ and $ 1 \le g_1 < g_2 \le K $,
which
is necessary for separating the Kendall's $ \tau $
between $ J_{ord} $ and $J_{nom}$ in the second step.
We emphasize again that
we allow $\vartheta_{\min} \to 0$ as $N \to \infty$ in Section \ref{subsec: asymptotic}. 
The following theorem shows that the two-step ordinal weights
are guaranteed to maximize the separation of $ w_j $ values between $ J_{ord} $ and $ J_{ord}^c $
with high probability.
\begin{thm} \label{thm: ordinal weight}
	Assume conditions (C1)-(C4).
	Let $ \{w_j\}_{j=1}^p $ be the two-step ordinal weights with
    $
	\theta_1 =(\Delta/2)\vee \max_{j \in J_{noise}} |\hat \tau_j| $ and $	\theta_2=\frac{2}{K(K-1)}.
	$
    Then, for sufficiently large $N \gtrsim (\vartheta_{\min}^2 \wedge 1)^{-1}$, 
    $w_j=I(j \in J_{ord})$ holds
    for all $ j = 1, \dots, p $ with probability at least
    $
	1 - 2p\exp(-N \Delta^2/32)
	-  6p K^2\exp(-CN(\vartheta_{\min}^2 \land 1)),
	$
	where $ C > 0 $ is a generic constant only depending on $ c_1 $ and $ c_2 $.
\end{thm}

For any $ j \in J_{md} $, $ |\hat \tau_j| > \Delta/2 $ holds with high probability. 
On the other hand, $ |\hat \tau_j| \le \theta_1 $ for any $ j \in J_{noise} $.
Thus, $ \theta_1 $ in Theorem \ref{thm: ordinal weight} successfully filters out noise variables.
For the choice of $ \theta_2 $, consider the population version of $ \tilde \tau_j $:
$$ \tau_j^o = \frac{2}{K(K-1)} \sum_{1 \le g_1 < g_2 \le K}
\mbox{sign}(\mu_j^{g_2} - \mu_j^{g_1}). $$
Indeed, $ \mathbb E \tilde \tau_j \to \tau_j^o $ as $ N \to \infty $
(see Lemma S2 in the supplementary material).
For $ j \in J_{nom} $, suppose that
$ \mu^1_j < \dots < \mu^{g}_j < \mu^{g+2}_j$ and $\mu^{g-1}_j < \mu^{g+1}_j < \dots < \mu_j^K$ but
$ \mu^{g}_j > \mu^{g+1}_j $ for some $ 1 \le g \le K $.
In this case, $ |\tau_j^o| = 1 - \frac{4}{K(K-1)} $ and this achieves the maximum:
$ \max_{j \in J_{nom}}|\tau_j^o| = 1 - 4/\{K(K-1)\}.$
Since $ \mathbb E \tilde \tau_j  \to \tau_j^o $,
we have $ \tilde \tau_j < 1 - \theta_2 $ for large enough $ N $ if $ j \in J_{nom} $.
Conversely, if $ j \in J_{ord} $ then $ |\tau_j^o| = 1 $ which implies that
$ \tilde\tau_j \to |\tau_j^o| = 1 > 1 - \theta_2 $.
Thus, we can successfully separate ordinal and nominal variables by setting
$ \theta_2 $ as in Theorem \ref{thm: ordinal weight}.

In practice, we propose to use
\begin{equation} \label{eq: th-choices}
	\hat \theta_1 = \left(\frac{1}{2}\min_{j \in \hat J_{md}} |\hat \tau_j|\right)
	\vee \left(\max_{j \in \hat J_{noise}} |\hat \tau_j|\right), \quad
	\hat \theta_2 = \theta_2 = \frac{2}{K(K-1)},
\end{equation}
where $ \hat J_{md} $ and $ \hat J_{noise} $ are estimated index sets.
This choice of $\hat\theta_1$ is obtained by replacing $J_{md}$ and $ \Delta $ in the expression of $\theta_1$ 
with corresponding empirical quantities. 
To decide whether $ j \in \hat{J}_{md} $, a hypothesis test for testing $H_0^{(j)}: \mu_j^1 = \dots = \mu_j^K $ is conducted for each $ j \in J $. We adopt the usual $F$-test in our implementation. More precisely, we set $ j \in \hat J_{md} $ if the $F$-test rejects $ H_0^{(j)} $ at a pre-specified significance level, and we set $ j \in \hat J_{noise} $ if the $F$-test does not reject $ H_0^{(j)} $. In the numerical studies presented in Section \ref{sec: numerical study}, each variable is tested with a significance level of $ \alpha = 0.05$.

One may utilize a multiple testing correction for the simultaneous application of $F$-tests, thus controlling the false discovery rate (FDR)  or the familywise error rate (FWER), particularly in scenarios with a large $p$. However, controlling the FWER or FDR is not deemed essential for our ultimate purpose of learning ordinal basis. Uncorrected $F$-tests often lead to an excess of noise variables in $\hat{J}_{{md}}$, the detected mean difference set. Nevertheless, the two-step ordinal weight construction is robust to such false positives. This is because the second step of identifying $J_{ord}$ tends to set the corresponding weight to zero, mitigating the impact of false positives. Moreover, even in cases where a noise variable is assigned a high weight $w_j = 1$, the noise variable is inclined to be discarded during the basis learning procedure \eqref{eq: SOBL}, thanks to the utilization of the sparsity-inducing penalty term.
 
The two-step ordinal weights with $\hat \theta_1$ and $ \hat \theta_2 $ of (\ref{eq: th-choices}) effectively separate the ordinal weights in our numerical studies, as demonstrated next.

\subsection{Comparison of ordinal weights} \label{sec. comparing weights}
We present a numerical comparison of the ordinal weights, based on Spearman's rank correlation, Kendall's $ \tau $, the $t$-test-based trend test, and the two-step procedure.
We consider $K=4$ groups where the first five are all ordinal variables, next five are nominal variables, and the rest of variables have no mean difference. 
We assume Gaussian model, $ X\:|\:Y = j \sim N(\mu_j, \Sigma)$, with $p=200$, and 
generate $ 50 $ random samples per group and calculate ordinal weights. The ordinal weights, averaged over 50 repetitions, are shown in
Figure \ref{fig: ordinal weights}. (See Section S3 of the supplementary material for simulation setting.)

The two-step ordinal weight $w_j$ is much more superior than any other ordinal weights, in the sense that the average weight difference between ordinal variables and the others is the largest.
The  ordinal weights $w_j^P$ based on the $t$-test-based trend test (denoted as TSTT in the figure) perform almost similar to the two-step ordinal weights for mean difference variables.
However, for noise variables, $w_j^P$ is higher than the other three ordinal weights.
The weights based on rank correlations are similar to each other, and, as expected, the gap between weights of ordinal variables and weights of non-ordinal variables is thin. 

\begin{figure}[t]
	\begin{center}
		\includegraphics[width=0.7\textwidth]{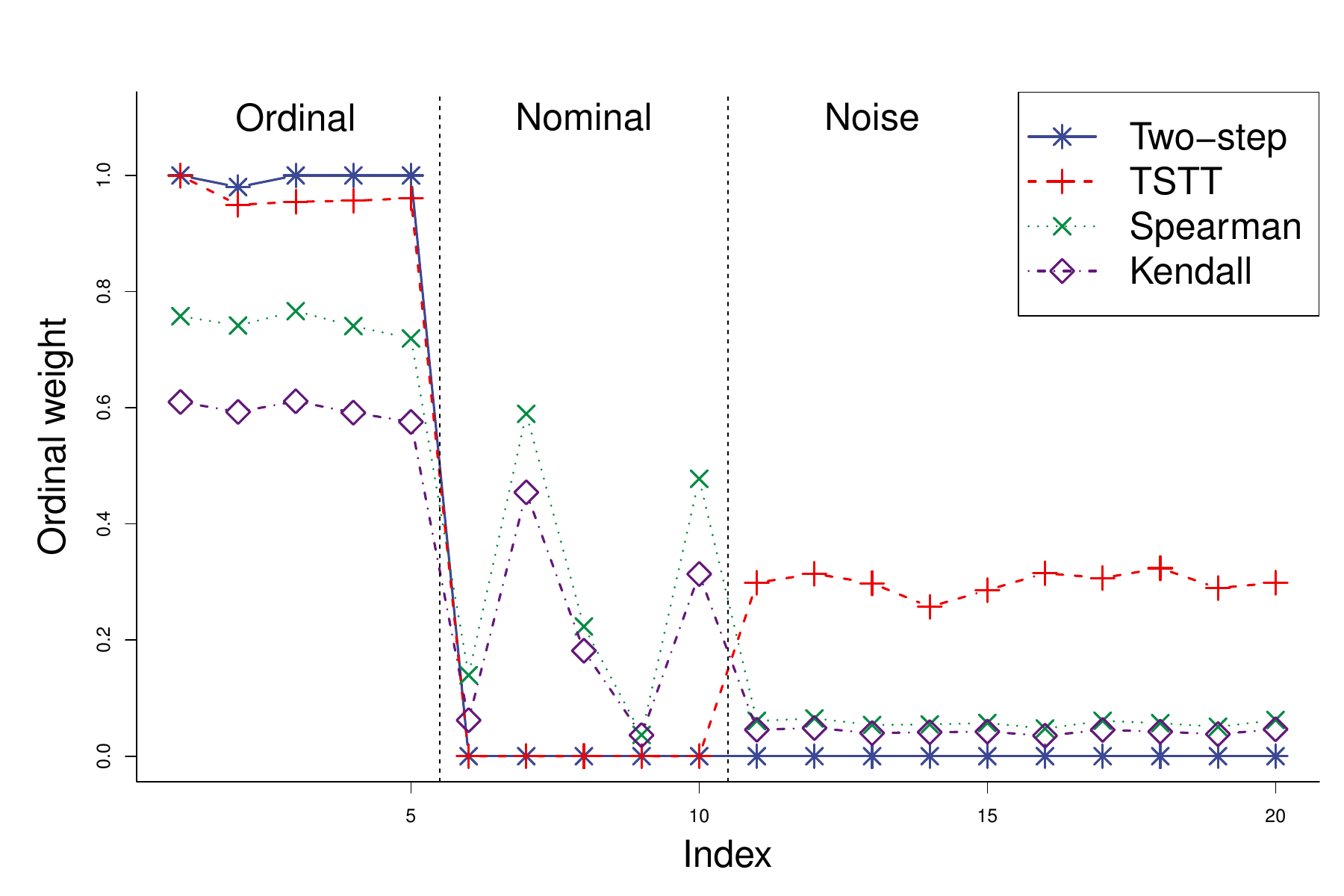}
		\caption{Averaged ordinal weights for the first $ 20 $ variables (The patterns for other $p-20$ variables are similar to the noise variables shown here).
		See Section \ref{sec. comparing weights}.}
		\label{fig: ordinal weights}
	\end{center}
\end{figure}

\section{Variable selection and basis learning properties}  \label{sec: theory}
We begin with introducing notation for matrix indexing and matrix norms.
For a matrix $V \in \mathbb{R}^{m \times n}$, $V_{i, :}$ denotes the $i$th row of $V$ and
$V_{:, j}$ denotes the $j$th column of $V$.
For index sets $I \subset \{1, 2, \dots, m\}$ and $J\subset \{1, 2, \dots, n\}$,
$V_{I, :}$ is a submatrix of $V$ with rows indexed by $I$.
We also denote $ V_{I} = V_{I, :} $ if there is no confusion.
$V_{I, J}$ denotes the submatrix of $V$ with row indices $I$ and column indices $J$.
Matrix norms $ \|\cdot\|_{\infty} $ and $ \|\cdot\|_{\infty, 2} $ are defined as
$\Vert V \Vert_{\infty} = \max_{1\le i \le m} \|V_{i,:}\|_{1} $ and
$\Vert V \Vert_{\infty, 2} = \max_{1\le i \le m} \|V_{i,:}\|_{2} $.
Next, for two sequences $ \{a_n\} $ and $ \{b_n\} $, we write $ a_n \lesssim b_n $ if there exists a positive constant $ C > 0  $ and $n_0 \in \mathbb N$ such that
$ a_n \le C b_n $ for all $ n \ge n_0 $. 
We denote $ a_n \asymp b_n $ if $ a_n \lesssim b_n $ and $ b_n \lesssim a_n $.

In this section, we give theoretical results for SOBL. Theorems presented in this section provide us with an understanding of the role of parameters $(\lambda, \eta)$ in SOBL. Under suitable choices of $(\lambda, \eta)$, statistical properties of SOBL, such as variable selection property and basis error bounds in terms of $\|\cdot\|_{\infty, 2}$ are presented in both non-asymptotic and asymptotic cases. Also, we investigate the variable selection property of OSBL basis. Proofs are given in supplementary material S1.2. 

%
\subsection{Sparse ordinal basis learning} \label{subsec: theory}

The proposed SOBL (\ref{eq: SOBL}) takes $ (\widehat \Sigma, \widehat M) $ as an input. 
Among the candidates of $ (\widehat \Sigma, \widehat M) $, we analyze the choices corresponding to MGSDA \citep{gaynanova2016sparseLDA} and MSDA \citep{mai2019multiclass}. 
While all results in this section apply to both choices, we do not present theoretical results corresponding to fastPOI \citep{jung2019poi}. This is because our strategy of dealing with mean differences does not directly apply to the eigenvectors of $ \widehat \Sigma_b $.

For the ordinal weights, we use the two-step ordinal weights, discussed in Section~\ref{subsec: two-step}, with
$ (\theta_1, \theta_2) $ defined in Theorem \ref{thm: ordinal weight}.
Although $ \theta_1 $ contains population quantities which we do not know in general,
Theorem \ref{thm: ordinal weight} enables us to derive the theoretical results presented in the current section.

For notational convenience, we let $ A = J_{disc} $, $ A_1 = J_{disc}^{ord} $, and $ A_2 = J_{disc} \cap J_{ord}^c $. Note that in general $A_2 \neq J_{nom} = J_{md} \cap J_{ord}^c$. Under these notation, we introduce population quantities that appear in the theorems. Let $ \kappa = \|\Sigma_{A^c A} (\Sigma_{A A})^{-1}\|_{\infty} $ stand for the irrepresentability quantity. Define $ \phi = \|(\Sigma_{AA})^{-1}\|_{\infty} $, $ \delta = \|M\|_{\infty, 2} $, $ \delta_1 = \|M_{A_1}\|_{\infty, 2} $ and $ \delta_2 = \|M_{A_2}\|_{\infty, 2} $. Let $ d = |J_{disc}| $.

The following theorem shows the variable selection property of SOBL.
\begin{thm} \label{thm: selection property}
	Assume conditions (C1)-(C4) in Section \ref{subsec: ts-theory} and $ \kappa < 1 $. Suppose that $\lambda > 0$ and $\eta \ge 1$.
	Let $ \widehat{D}^{sobl} = \widehat{D}(\widehat Z^{ord}_{\eta, \lambda}) $ be the selected variables 
	with ordinal weights defined in Theorem \ref{thm: ordinal weight}. 
        Then, the following hold for all sufficiently large $N \gtrsim (\vartheta_{\min}^2 \wedge 1)^{-1}$,
	\begin{enumerate}
		\item
		If $\lambda < \min(\frac{1}{2\phi}, \frac{(1+\kappa)(\phi^{-1} + \delta)}{2})$ and $1 \le \eta \le (\kappa^{-1}+1)/2 $, then
		$  
		\Pr\left(\widehat{D}^{sobl} \subset J_{disc} \right) 
		\ge
		1 - C\gamma,
		$
		where 
		$  
		\gamma 
		\equiv \gamma(N, p, d, \lambda)
		= pd e^{-C\min (\frac{\lambda^2}{d^2\sigma^4 K^2}, \frac{\lambda}{d \sigma^2 K})N}
		+ p e^{-\frac{CN\lambda^2}{d^2\sigma^2}} 
		+ p e^{-\frac{N \Delta^2}{32}}
		+ p e^{-CN(\vartheta_{\min} \land \Delta)^2}.
		$

		\item
		Suppose that $ \Sigma_{A_2 A_2^c} = \mathbf  0 $.
		If $ \lambda < \min(\frac{1}{2\phi}, \frac{2\beta(1+\phi\delta_1)(1+\kappa)}{(2+\phi(\beta' + \delta_1))(1-\kappa)}) $ and $ \beta \le \lambda\eta - \delta_2 \le \beta' $ for some constants $ \beta, \beta' > 0 $, then
		$ 
		\Pr\left(\widehat{D}^{sobl} \subset J_{disc}^{ord} \right)
		\ge 1 - C\gamma.
		$
	\end{enumerate}
\end{thm}

For the choice of $ \lambda $ that ensures consistent variable selection, see the asymptotic analysis in 
Section \ref{subsec: asymptotic}.

The first part of Theorem \ref{thm: selection property} states that when $ \eta \approx 1 $,
SOBL successfully screens out all variables in $ J_{disc}^c $
with probability at least $ 1-C\gamma $.
For $\eta = 1$, SOBL is equivalent to 
non-ordinal sparse LDA basis learning.
For such a special case, the finding of Theorem \ref{thm: selection property} is comparable to
the variable selection properties appeared in
\cite{gaynanova2016sparseLDA} and \cite{mai2019multiclass}.
This makes sense since for a small $ \eta $, there is no significant difference
between ordinal and non-ordinal sparse LDA methods.

The second part of Theorem \ref{thm: selection property} shows that
with a suitably large choice of
$\eta$, SOBL only chooses the variables contained in $ J_{disc}^{ord}$
and does not select variables in $ (J_{disc}^{ord})^c $ with high probability.
If a pair $ (\lambda, \eta) $ satisfies the conditions of the second part of the theorem, then
it also satisfies 
$ \lambda\eta > \delta_2 = \|M_{A_2}\|_{\infty, 2}$.
Under the setting of of the theorem, for $ j \in J_{ord}^c $, the penalty coefficient $ \lambda_j = \lambda \eta^{1-w_j} $ becomes $ \lambda \eta $ with high probability. Thus, for $ \widehat Z^{ord}_{\eta, \lambda} $ to choose only $ J_{disc}^{ord} $,
one should choose $ (\lambda, \eta) $ large so that $ \lambda\eta $ is larger than
the mean differences of variables in $ A_2 = J_{disc} \cap J_{ord}^c $.

Note that the seemingly restrictive assumption $ \Sigma_{A_2 A_2^c} = \mathbf 0 $ in the second part of the theorem is only for simplicity. In fact, for the probability to converge to $1$, it is enough to set $ \Sigma_{A_2 A_2^c} $ sufficiently small and converging to $ \mathbf 0 $ with a suitable order of convergence as $ N,p \to \infty $. To properly select the ordinal discriminant variables, it is necessary that the variables of $ A_2 = J_{disc} \cap J_{ord}^c $ are (nearly) uncorrelated with the other variables.

Based on Theorem \ref{thm: selection property}, the next theorem states a non-asymptotic bound for the
error of the estimated SOBL basis $ \widehat Z_{\eta, \lambda}^{ord} $ in estimation of the population discriminant basis $ \Psi $.

\begin{thm} \label{thm: bound}
	Assume conditions (C1)-(C4) in Section \ref{subsec: ts-theory} and $ \kappa < 1 $.
	Let $ \widehat Z^{ord}_{\eta, \lambda} $ be the SOBL basis with ordinal weights defined in Theorem \ref{thm: ordinal weight}.
        The following hold for all sufficiently large $N \gtrsim (\vartheta_{\min}^2 \wedge 1)^{-1}$,
	\begin{enumerate}
		\item
		Suppose that $ \lambda $ and $ \eta $ satisfy same condition in the first part of Theorem 2. Then
		$  
		\Pr	(\| \widehat Z^{ord}_{\eta, \lambda} - \Psi\|_{\infty, 2} < 6\phi\eta\lambda
		) \ge 1 - C\gamma
		$
		
		\item 
		Suppose that $ \lambda $ and $ \eta $ satisfy same condition in the second part of Theorem 2. Then
		$  
		\Pr(\| (\widehat Z^{ord}_{\eta, \lambda})_{J_{disc}^{ord}} - \Psi_{J_{disc}^{ord}}\|_{\infty, 2}
		\le 6\phi\lambda) \ge 1 - C\gamma.
		$
		In addition, we have that
		$\| (\widehat Z^{ord}_{\eta, \lambda})_{J_{disc}^c} - \Psi_{J_{disc}^c}\|_{\infty, 2}
		= 0$, and
		$
		\| (\widehat Z^{ord}_{\eta, \lambda})_{A_2} - \Psi_{A_2}\|_{\infty, 2}
		\le \phi\delta_2
		$
		hold with probability at least $ 1 - C\gamma $.
	\end{enumerate}
\end{thm}

Similar to the case of variable selection, for a small $ \eta \approx 1 $,
$ \|\widehat Z_{\eta, \lambda}^{ord} - \Psi\|_{\infty, 2} $ is less than $\lambda $ up to constant with
probability greater than or equal to $ 1-C\gamma $.
Similar results were observed in sparse multiclass LDA contexts \citep{gaynanova2016sparseLDA,mai2019multiclass}.

The second part of Theorem \ref{thm: bound} reveals that the SOBL basis restricted to
$ J_{disc}^{ord} $ estimates $ \Psi_{J_{disc}^{ord}} $ with the error bounded by a multiple of  $ \lambda $. 
In this case, however, SOBL discards the variables in $ A_2 = J_{disc} \cap J_{ord}^c $,
which also affects the population discriminant subspace.
In this perspective, classification performance may decline when we use SOBL rather than a sparse LDA.
On the other hand, if inherent ordinal signals are strong enough,
then the classification performance of SOBL is comparable to that of a sparse LDA, while SOBL
selects significantly fewer variables than sparse LDA. In  Section \ref{sec: numerical study}, we will numerically investigate the situations under which the sparse and ordinal basis, estimated by SOBL, provides better interpretability  while showing comparable classification performances.

\subsubsection{Asymptotic analysis} \label{subsec: asymptotic}

For asymptotic analysis, we assume that $p, d$ may grow as $ N\to \infty $. We also assume that the tuning parameters $ \lambda \equiv \lambda_N $ and $ \eta \equiv \eta_N $ depend on the sample size $N$.
To guarantee consistent variable selection of $ \widehat Z_{\eta_N, \lambda_N}^{ord} $ (Theorem \ref{thm: selection property}), $ \gamma(N, p, d, \lambda_N)$ should converge to $ 0 $ as $ N \to \infty $.  
For the consistent basis learning (Theorem \ref{thm: bound}), we need to set $ \lambda_N \to 0 $ as $ N \to \infty $. 
However, if $ \lambda_N$ decreases excessively fast, then $ \gamma $ may not be converge to $ 0 $.
Thus, $ \lambda_N $ must decrease to $ 0 $ in a moderate speed with respect to the asymptotic conditions of $ (N, p, d) $.
In a similar manner, we need asymptotic conditions on $ \Delta $ and $  \vartheta_{\min} $ for the second and third terms of $ \gamma $.

To develop asymptotic versions of Theorems \ref{thm: selection property} and \ref{thm: bound}, the following asymptotic conditions are required.
\begin{itemize}
	\item[(AC1)] $ \frac{\log(pd)d^2}{N} \to 0$.
	\item[(AC2)] $ \lambda_N \gtrsim \left( \frac{\log(pd)d^2}{N} \right)^{\frac{1-\alpha}{2}} $
	for some $ \alpha \in (0, 1) $ and $ \lambda_N \to 0 $.
	\item[(AC3)] $ \Delta \wedge \vartheta_{\min} \gtrsim \frac{1}{d} $.
	\item[(AC4)] $\beta < \lim_{N \to \infty} \eta_N\lambda_N - \delta_2 < \beta' $ for some constants $ \beta, \beta' > 0 $.
\end{itemize}

Note that (AC1) and (AC2) also appear in the asymptotic results of sparse LDA
\citep{mai2012direct,gaynanova2016sparseLDA, mai2019multiclass}.
Condition (AC1) implies that $ p $ may grow faster than any polynomial order of
$N$ \citep{mai2012direct}.
(AC2) controls the error bound for the estimated basis with probability tending to $1$.
Condition (AC3) is needed to guarantee consistency of the two-step ordinal weights.
We note that $\Delta \wedge \vartheta_{\min} \to 0$ is allowed in (AC3) if $d \to \infty$.
Finally, (AC4) is only required when the variable selection is targeted at
$ J_{disc}^{ord} $, as in the second part of Theorem \ref{thm: selection property}.

With conditions (AC1)-(AC4), SOBL enjoys consistency in variable selection and
of the estimated basis, as shown in the following two corollaries. 
\begin{cor} \label{cor: AS1}
	Suppose conditions (C1)-(C2) and asymptotic conditions (AC1)-(AC3) hold.
	Assume that $ \kappa < 1 $ and $ 1 < \eta_N < \frac{\kappa^{-1}+1}{2} $ for any $ N $.
	Then, as $N\to \infty$, 
	$
	\Pr\left(\widehat{D}^{sobl} \subset J_{disc} \right)  \to 1
	$
	and
	$
	\Pr\left(\| \widehat Z^{ord}_{\eta, \lambda} - \Psi\|_{\infty, 2} <
	\varepsilon
	\right) \to 1
	$
	for any $ \varepsilon > 0 $.
\end{cor}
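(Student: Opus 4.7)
The plan is to read Corollary \ref{cor: AS1} as the asymptotic packaging of the first parts of Theorems \ref{thm: selection property} and \ref{thm: bound}, so that the proof reduces to (i) exhibiting an admissible free parameter $\epsilon=\epsilon_N$ in those theorems, and (ii) showing that the failure probability $\gamma$ and the error bound $6\phi\eta_N\lambda_N$ both tend to $0$ under (AC1)--(AC3) and the assumption $1<\eta_N<1/\kappa$.

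First I would set $\epsilon_N = c\lambda_N$ for a sufficiently small constant $c>0$. Since $\eta_N<1/\kappa$ is bounded and $\phi,\delta,\kappa$ are fixed constants, the quantity $\epsilon_1$ in Theorem \ref{thm: selection property} is of order $\lambda_N$ as $\lambda_N\to 0$ (the numerator $\lambda(1-\eta\kappa)$ is $\asymp \lambda_N$, the denominator is bounded). The additional caps $\epsilon<1/(2\phi)$ and $\epsilon<2\lambda\eta/(1+\phi\delta)$ needed in Theorem \ref{thm: bound} are also met by $\epsilon_N=c\lambda_N$ for all large $N$, because $\lambda_N\to 0$ by (AC2). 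Hence $\epsilon_N$ is admissible eventually, and both theorems apply.

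Next I would show that each of the three terms of
$$\gamma = Cpd\exp\!\Bigl(-\tfrac{CN\epsilon_N^2}{d^2}\Bigr) + 2p\exp\!\Bigl(-\tfrac{N\Delta^2}{32}\Bigr) + \tfrac{Cp}{\sqrt{N}\bar\delta_{\min}}\exp\!\bigl(-CN\bar\delta_{\min}^2\bigr)$$
tends to $0$. For the first term, $\epsilon_N\asymp\lambda_N$ together with (AC2) gives $N\lambda_N^2/d^2 \gtrsim (\log(pd))^{1-\alpha}(d^2/N)^{-\alpha}$, and since $d^2/N\to 0$ by (AC1), the ratio $\log(pd)/(N\lambda_N^2/d^2)$ is bounded above by $(\log(pd)\,d^2/N)^{\alpha}\to 0$, so the first term vanishes. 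For the second term, (AC3) gives $\Delta\gtrsim 1/d$, so $N\Delta^2 \gtrsim N/d^2$, which dominates $\log p$ by (AC1). The third term is handled identically, using $1/\bar\delta_{\min}\lesssim d$ and (AC1) to absorb the prefactor $p/(\sqrt{N}\bar\delta_{\min})$. This yields $\gamma\to 0$ and hence $\Pr((\widehat Z^{ord}_{\eta,\lambda})_{J_{disc}^c}=0)\to 1$. The basis consistency statement is then immediate: Theorem \ref{thm: bound} gives $\|\widehat Z^{ord}_{\eta,\lambda}-\Psi\|_{\infty,2}<6\phi\eta_N\lambda_N$ with probability at least $1-\gamma$, and the right side tends to $0$ since $\phi$ is constant, $\eta_N$ is bounded, and $\lambda_N\to 0$.

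The main obstacle will be the rate bookkeeping in the first term of $\gamma$: the exponent $\alpha\in(0,1)$ in (AC2) must leave enough margin for $N\lambda_N^2/d^2$ to dominate $\log(pd)$, even though (AC1) lets $\log(pd)d^2/N$ vanish arbitrarily slowly. Once that calibration is verified (as above, via the identity $N\lambda_N^2/d^2\gtrsim(\log(pd))^{1-\alpha}(d^2/N)^{-\alpha}$), the rest is routine checking that the admissibility constraints on $\epsilon_N$ survive in the limit.
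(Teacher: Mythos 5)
Your proposal is correct and follows essentially the same route as the paper's own proof: choose $\epsilon_N \asymp \lambda_N$, use (AC2) to get $\lambda_N^2 \gtrsim (\log(pd)d^2/N)^{1-\alpha}$ so that $\log(pd)\,d^2/(N\epsilon_N^2) \lesssim (\log(pd)d^2/N)^{\alpha} \to 0$ by (AC1), kill the second and third terms of $\gamma$ via $\Delta \wedge \bar\delta_{\min} \gtrsim 1/d$ and (AC1), and conclude consistency from $6\phi\eta_N\lambda_N \to 0$ with $\eta_N < 1/\kappa$ bounded. The only cosmetic difference is that the paper isolates the elementary step ``$\log a_n/b_n \to 0 \Rightarrow a_n e^{-Cb_n} \to 0$'' as a small lemma and explicitly remarks that (AC1) and (AC3) keep the threshold inequality in the proof of Theorem \ref{thm: ordinal weight} valid when $\bar\delta_{\min}$ drifts with $N$ --- a point your third-term computation ($N\bar\delta_{\min}^2 \gtrsim N/d^2 \to \infty$) implicitly covers.
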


\begin{cor} \label{cor: AS2}
	Suppose conditions (C1)-(C2) and asymptotic conditions (AC1)-(AC4) hold.
	Assume that $ \kappa < 1 $ and $ \Sigma_{A_2 A_2^c} = \mathbf 0 $ for any $ p $.
	Then, as $N\to \infty$,  
	$
	\Pr\left(\widehat{D}^{sobl} \subset J_{disc}^{ord} \right)  \to 1
	$
	and
	$
	\Pr\left(\| (\widehat Z^{ord}_{\eta, \lambda})_{J_{disc}^{ord}} - \Psi_{J_{disc}^{ord}}\|_{\infty, 2} <
	\varepsilon
	\right) \to 1
	$
	for any $ \varepsilon > 0 $.
\end{cor}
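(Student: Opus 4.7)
The plan is to derive Corollary \ref{cor: AS2} as a direct consequence of the non-asymptotic bounds in the second parts of Theorem \ref{thm: selection property} and Theorem \ref{thm: bound}, by verifying that the prerequisite inequalities on $(\lambda_N,\eta_N,\epsilon)$ are eventually satisfied and that the failure probability $\gamma$ tends to zero under (AC1)--(AC3). The first step is to check that the tuning parameter constraints hold for large $N$: condition (AC4) yields $\eta_N\lambda_N > C\delta_2$ for some $C>1$, hence $\eta_N > \delta_2/\lambda_N \vee 1$ eventually, which is the admissibility condition for the second parts of both theorems. I would then choose $\epsilon = c\lambda_N$ for a sufficiently small constant $c>0$; using (AC4) the numerator $\lambda_N\eta_N - \delta_2$ in the second component of $\epsilon_2$ stays bounded away from zero, and since $\lambda_N \to 0$, the other components of $\epsilon_2$, together with $1/(2\phi)$ and $2\lambda_N/(1+\phi\delta_1)$ from Theorem \ref{thm: bound}, are each of order at least $\lambda_N$, so $\epsilon = c\lambda_N$ lies in the admissible range for all $N$ large.

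The second step is to bound each of the three summands in $\gamma$ individually. Plugging $\epsilon \asymp \lambda_N$ into the first summand gives $Cpd\exp(-CN\lambda_N^2/d^2)$. Writing $L = \log(pd)d^2/N$ and using (AC2) in the form $\lambda_N^2 \gtrsim L^{1-\alpha}$, one obtains $N\lambda_N^2/d^2 \gtrsim \log(pd)\cdot L^{-\alpha}$, and (AC1) forces $L\to 0$, so $L^{-\alpha}\to\infty$ and the first summand vanishes. For the second and third summands, (AC3) gives $\Delta \wedge \bar\delta_{\min} \gtrsim 1/d$, so the exponents $N\Delta^2/32$ and $CN\bar\delta_{\min}^2$ are each of order at least $N/d^2$. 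Since (AC1) implies $N/d^2 \gg \log(pd)\ge \log p$, both terms decay to zero; the polynomial prefactor $Cp/(\sqrt N \bar\delta_{\min})$ in the third term is absorbed easily because the exponential decays faster than any polynomial growth in $p$ (and than any negative power of $\bar\delta_{\min} \gtrsim 1/d$ under (AC1)).

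With $\gamma\to 0$ established, the first conclusion $\Pr((\widehat Z^{ord}_{\eta_N,\lambda_N})_{(J^{ord}_{disc})^c}=\mathbf 0)\to 1$ follows immediately from Theorem \ref{thm: selection property} part 2. For the second conclusion, Theorem \ref{thm: bound} part 2 yields $\|(\widehat Z^{ord}_{\eta_N,\lambda_N})_{J^{ord}_{disc}} - \Psi_{J^{ord}_{disc}}\|_{\infty,2} \le 6\phi\lambda_N$ with probability at least $1-\gamma$, and (AC2) enforces $\lambda_N\to 0$, so for any fixed $\varepsilon>0$ we have $6\phi\lambda_N < \varepsilon$ eventually, giving the claimed probability convergence.

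The main technical obstacle is the joint bookkeeping in the first summand of $\gamma$: because both $d$ and $\lambda_N$ depend on $N$ and the lower bound on $\lambda_N$ in (AC2) involves the free parameter $\alpha$, one has to verify that the chosen $\epsilon \asymp \lambda_N$ drives $N\epsilon^2/d^2$ above $\log(pd)$ by a factor tending to infinity rather than merely matching it; the reduction to $\log(pd)\cdot L^{-\alpha}$ outlined above is the crucial maneuver and hinges on using (AC1) not merely to ensure $L$ is summable but to turn any $\alpha>0$ into a diverging multiplicative gain. Once this is resolved, the remainder is essentially bookkeeping on constants and the admissible range of $\epsilon$.
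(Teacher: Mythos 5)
Your proposal is correct and follows essentially the same route as the paper: choose $\epsilon \asymp \lambda_N$, use (AC4) to satisfy the admissibility condition $\eta_N > \delta_2/\lambda_N \vee 1$ of the second parts of Theorems \ref{thm: selection property} and \ref{thm: bound}, show each summand of $\gamma$ vanishes by reducing (AC2) to the divergence $\log(pd)\cdot L^{-\alpha}$ with $L = \log(pd)d^2/N \to 0$ under (AC1), handle the other two terms via $\Delta \wedge \bar\delta_{\min} \gtrsim 1/d$ from (AC3), and conclude with $6\phi\lambda_N \to 0$. The paper packages the exponent comparisons into a small lemma ($a_n \exp(-Cb_n) \to 0$ when $\log a_n / b_n \to 0$) and additionally remarks that $\sqrt{N}\,\bar\delta_{\min} \to \infty$ is needed for the ``sufficiently large $N$'' clause of Theorem \ref{thm: ordinal weight} with varying $\bar\delta_{\min}$, but your direct computation covers the same content.
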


\subsection{Ordinality-screened basis learning}
In this section, we briefly investigate consistent variable selection properties of the OSBL basis $\widehat Z_{\lambda}^{scr}$ introduced in Section \ref{sec:2.3}.
We set the threshold $ \theta_w = 1 $, and use the two-step ordinal weights as used in the previous section.
Based on Theorems \ref{thm: ordinal weight} and \ref{thm: selection property}, we obtain a similar variable selection consistency results for $\widehat Z_{\lambda}^{scr}$.
Moreover, if the sub-Gaussian model \eqref{eq:subG-model} is replaced by a more strict Gaussian model, then the OSBL basis possesses 
better asymptotic rates of consistency than those of SOBL, when the number $d$ of discriminant variables is increasing. 
The additional result is 
obtained by 
the consistent variable selection result of the MGSDA (Theorem 2 of \cite{gaynanova2015optimal}) and Theorem \ref{thm: ordinal weight}.

\begin{cor} \label{cor: osbl-selection}
	Let $ \widehat{D}^{osbl} = \widehat{D}(\widehat Z^{scr}_{\lambda}) $ be the set of variables selected by OSBL basis with ordinal weights defined in Theorem \ref{thm: ordinal weight}.
	\begin{enumerate}
		\item  Suppose conditions (C1)-(C2) and asymptotic conditions (AC1)-(AC3) hold. Assume that $ \kappa < 1 $. Then, as $N\to \infty$, $ \Pr\left(\widehat{D}^{osbl} \subset J_{disc}^{ord} \right) \to 1 $.

		\item Assume that $ X \mid Y=g \sim N(\mu_g, \Sigma_w) $ for each $g$. Consider a MGSDA based OSBL basis, and suppose conditions (C1)-(C2) and asymptotic condition (AC3). Further, suppose that $\min_{j \in J_{disc}}\|\Psi_{j}\|_2 \gtrsim \sqrt{\frac{\log(pd)}{N}} $, $ \lambda_N \gtrsim \sqrt{\frac{\log(pd)}{N}} $ and $ \frac{\log(pd)d}{N} \to 0$. Then, as $N\to \infty$,  
        $ 
        \Pr\left(\widehat{D}^{osbl} = J_{disc}^{ord} \right) \to 1.
        $
	\end{enumerate}
\end{cor}

The first part of the corollary says that the consistent variable selection property also holds for 
$\widehat{D}^{osbl}$
under the same conditions for SOBL. Furthermore, the OSBL basis exactly recovers ordinal discriminant variables with high probability. 

On the other hand, if the more strict Gaussian assumption is imposed, then consistency holds under the asymptotic regime $ \log(pd)d/N \to 0 $, which is better than (AC1) by a factor of $ d $. 
To obtain this better sample complexity, it is critical to utilize 
the distributional properties of $\widehat{\Sigma}^{-1}\widehat{M}$ and $\|\widehat{M}^{\top}\widehat{\Sigma}^{-1}\widehat{M}\|_2$, characterized by \cite{gaynanova2015optimal}, that are obtained under under the Gaussian model assumption.
In comparison, the theoretical properties of SOBL in Theorem \ref{thm: selection property} are based on elementwise concentration results for $\widehat{\Sigma}$ and $\widehat M$, which are needed to handle the ordinal penalty term $\lambda\eta^{1-w_j}$ in \eqref{eq: SOBL}. 
It has not been straightforward to extend the technical arguments of
\cite{gaynanova2015optimal} to our situation. 
In the next section, we empirically compare the two proposed methods with simulation and real data examples. Although the OSBL basis has theoretical benefits, our numerical results suggest that SOBL is preferable.

\section{Numerical studies} \label{sec: numerical study}

In this section, we numerically demonstrate the performances of variable selection and classification of the proposed methods. Once a sparse ordinal basis is obtained from either SOBL or OSBL, one can conduct classification or visualization using the data projected onto the low-dimensional subspace.
Specifically, we perform QR decomposition of $\widehat Z_{\eta, \lambda}^{ord}$ or $\widehat Z_{ \lambda}^{scr}$, as done in \cite{jung2019poi}, to get an orthogonal basis $\hat Q$ that spans the low-dimensional subspace $\mathcal C(\widehat Z_{\eta, \lambda}^{ord})$ or  $\mathcal C(\widehat Z_{ \lambda}^{scr})$. The classification rule is then fitted by applying classical multiclass LDA on $(\mathbf{X} \hat Q, \mathbf{Y})$. Also, the projected dataset $(\mathbf{X} \hat Q, \mathbf{Y})$ can be used to visualize the low-dimensional projection via a scatter plot \citep{gaynanova2016sparseLDA}.

In the rest of the section, for SOBL and OSBL, we only consider the two-step ordinal weights with the thresholds \eqref{eq: th-choices}. In the implementation step, we add a small $\epsilon I_p$ to $\widehat \Sigma$ for numerical stability when we compute SOBL or OSBL basis.

\subsection{Tuning parameter selection scheme} \label{subsec: tuning}
For the parameter tuning of the OSBL basis $ \widehat{Z}_{\lambda}^{scr} $, we only need to decide the value of $ \lambda $. Following the common practice in the sparse multiclass LDA literature, we conducted a grid search to identify the optimal value of $ \lambda $ that maximizes classification accuracy among the candidate values. 

Similarly, for the SOBL basis $ \widehat{Z}_{\eta, \lambda}^{ord} $,
a standard approach for the tuning parameters involves cross-validation on
a predetermined $ (\lambda, \eta) $ grid. 
However, as we will see in our simulation results, while such a choice enhances classification accuracy, it
typically results in selecting non-ordinal variables as well as ordinal discriminant variables. 
For better interpretability, we aim to capture the ordinal discriminant variables in $ J_{disc}^{ord} $.
For this purpose, we propose a two-step tuning procedure for determining $(\tilde \lambda, \tilde \eta)$:
\begin{enumerate}
	\item Fix $ \eta = 1 $. Choose the best $ \tilde\lambda $ on a fine grid
	$ (0, \lambda_{\max}) $ with respect to the classification accuracy.
	\item 
    Fix $ \lambda = \tilde \lambda $ and fit $ \widehat Z_{\eta, \tilde\lambda}^{ord} $ for a sequence of $\eta$, ranging from $ 1 = \eta_0 < \eta_1 < \dots < \eta_m =  \eta_{\max}$. 
    Let $ \tilde m = \min \{i: \|\widehat Z_{\eta_i, \tilde\lambda}^{ord} - \widehat Z_{\eta_{i-1}, \tilde\lambda}^{ord}\|_F < 10^{-10}\}$ and choose $\tilde \eta = \eta_{\tilde m}$.
\end{enumerate}
We set $ \lambda_{\max} = \|\widehat M\|_{\infty, 2} $,
and $ \eta_{\max} = 2(\|\widehat M\|_{\infty, 2}/\tilde{\lambda} + 1) $.
If $ \lambda > \lambda_{\max} $ then $ \widehat Z_{1, \lambda}^{ord} $ becomes
$ 0 $ \citep{jung2019poi}.
The choice of $ \eta_{\max} $ comes from the assumption for $ \eta $ in
the second part of Theorem \ref{thm: selection property}, that is, $ \lambda\eta - \delta_2  \asymp 1 $ and $ \eta > 1 $.
The basic idea of the two-step tuning procedure is based on the second part of
Theorem \ref{thm: selection property} and the asymptotic condition (AC2):
To choose $ J_{disc}^{ord} $ properly, we need large enough $ \eta $ since proper $ \tilde\lambda $ tends to $ 0 $ as $ N $ increases. 
Unless otherwise noted, the numerical results of SOBL in this section assumes the two-step tuning procedure. 

How the two-step tuning procedure works is illustrated with a toy data in  Figure \ref{fig: cv-process}.
The data are generated from Model III of Section \ref{subsec: sim-var}, in which we set $ p = 200 $ and $ (n_1, n_2, n_3) = (100, 100, 100)$.
The SOBL, based on fastPOI, is fitted for this data set.
In the first step of the tuning procedure, $ \tilde \lambda = 0.277$ (with $\eta =1$ fixed) provides the largest cross-validated classification accuracy. With $ \tilde \lambda$ fixed, we observe that as $\eta $ increasing,  non-ordinal variables tend to ``die'' while
the ordinal discriminant variables in $ J_{disc}^{ord} $ resist and stay included in the estimate for larger $\eta$. 
The left panel of Figure \ref{fig: cv-process} shows that as $ \eta $ increases,
$ \|(\widehat{Z}_{\eta,\tilde\lambda}^{ord})_{j,:}\|_{2}$ decreases to $ 0 $ for $j \in J_{ord}^c$.
Remarkably, one of the two variables belonging to $ J_{disc}^{ord} $  was not selected for small
values of $ \eta $, then became included for larger $ \eta > 2 $.
After all nominal and noise variables are removed at $\eta = \tilde\eta = 2.26$,
there is no further change in the estimate as $ \eta $ continues to increase.
This is because the penalty coefficient for ordinal variables,
$ \lambda_j = \tilde \lambda $, is fixed throughout the second tuning step,
while the penalty coefficient for non-ordinal variables,
$ \lambda_j =  \tilde \lambda \eta $, grows as $ \eta $ increases. Thus, once the non-ordinal variables become zero, a larger penalty simply does not make a difference in estimates.
This two-step tuning procedure aims to select $ J_{disc}^{ord} $, albeit
making compromises on classification accuracy, if the ordinal signal strength is weak.
As can be seen in the scatter plots in Figure \ref{fig: cv-process}, the  data projected on the ordinal basis $\widehat{Z}_{\tilde\eta,\tilde\lambda}^{ord}$ (shown in the right panel) clearly show an ordinal group tendency in the first direction, while the data projected on $\widehat{Z}_{1,\tilde\lambda}^{ord} = \widehat{Z}_{\tilde\lambda}$ in the middle panel do not reflect the order.

\begin{figure}[t]
	\begin{center}
		\includegraphics[width=\textwidth]{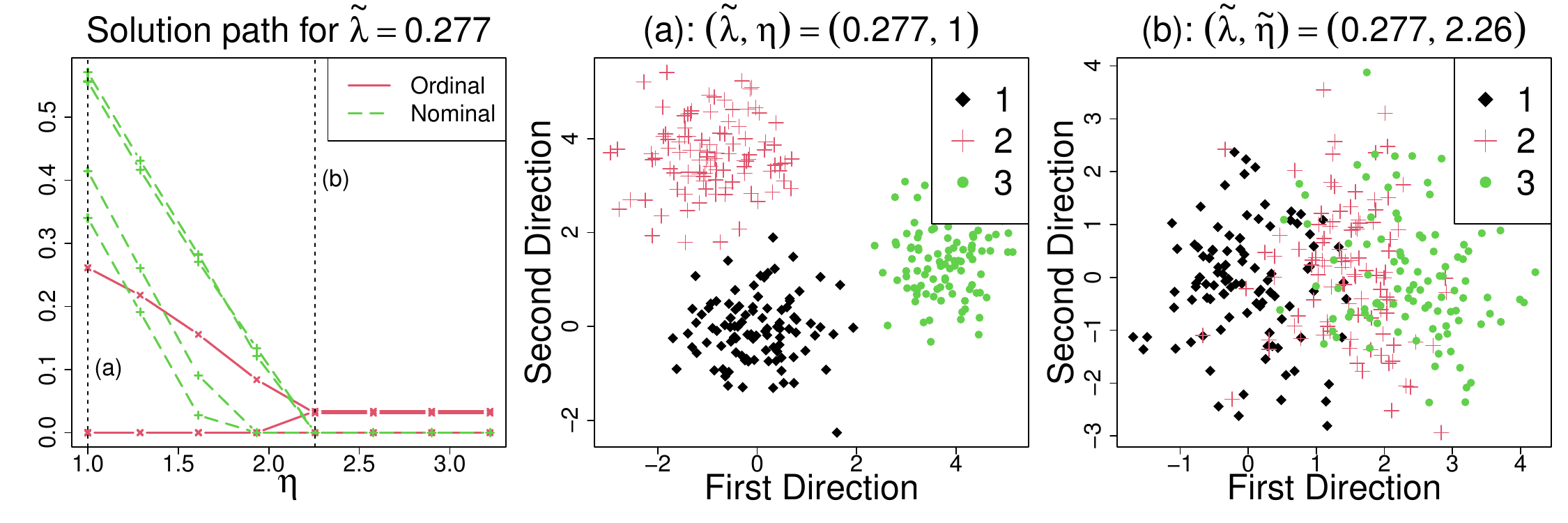}
		\caption{%
		The left panel shows a solution path for
		$ \tilde\lambda=0.277 $ of SOBL based on fastPOI with respect to $ \eta $.		
		Each line denotes $ \|(\widehat Z_{\eta, \tilde\lambda}^{ord})_{j, :}\|_2 $ for
		$ j \in J_{md} $.
%
%
%
%
%
%
		The middle and right panels show corresponding projected scatter plots for (a) and (b)
		in the first plot.
		}
		\label{fig: cv-process}
	\end{center}
\end{figure}

\subsection{Competing methods} \label{subsec: competing}
We compare variable selection and classification performances of our proposed ordinal basis learning to the methods of
\cite{ma2021bioinfo} and \cite{archer2014ordinalgmifs}.
\cite{ma2021bioinfo} proposed feature-weighted ordinal classification (FWOC)
which solves the sparse LDA objective (\ref{eq: sparse LDA}) with $\widehat \Sigma$ adjusted by ordinal weights. The FWOC is also a sparse basis learning method.
In \cite{archer2014ordinalgmifs}, several types of penalized ordinal logistic regression
models are proposed.
Among those, we choose to compare with a penalized cumulative logistic model (PCLM), which estimates the coefficient vector $\beta \in \mathbb{R}^p$ maximizing a penalized likelihood $ L(\alpha, \beta) -\lambda \sum_{j=1}^p |\beta_j| $ under
the cumulative logit model:
$\mbox{logit}(\mbox{Pr}\left( Y \le j \:|\: X \right))=\alpha_j+\beta^\top X $.
We follow tuning procedures outlined in the original papers.

\subsection{Simulation study} \label{subsec: sim-var}
The variable selection and classification properties of the proposed methods are demonstrated with a simulation study. 
For the numerical simulation, consider the mean vectors of three groups defined as  
\begin{align*}
    \mu_1^\top 
    &= (\overbrace{s_{ord} \cdot (1, 0, 0,\:\:\:~0)}^{J_{ord}},
    ~\overbrace{s_{nom} \cdot (0,\:\:\:\:\:\:~~0, \:\:\:~0, \:\:\:\:\:\:\:0)}^{J_{nom}},
    ~\overbrace{\mathbf{0}_{p-8}^\top}^{J_{noise}}), \\
    \mu_2^\top 
    &= (s_{ord} \cdot (2, 1, 2, -2),
    ~s_{nom} \cdot (3,\:\:\:\:\:\:~~2, -1, -0.5),
    ~\mathbf{0}_{p-8}^\top), \\
    \mu_3^\top 
    &= (s_{ord} \cdot (3,2, 4, -3),
    ~s_{nom} \cdot (2, -0.5, \:\:\:2, \:\:\:\:\:\:\:3),
    ~\mathbf{0}_{p-8}^\top).
\end{align*}
Here, $ (s_{ord}, s_{nom}) $ are scale constants for ordinal and nominal variables, which vary in different model settings. We set common within covariance matrix as $\Sigma_w = \mbox{diag}(\Sigma_8, I_{p-8})$, where $\Sigma_j = \frac{1}{2}(I_j + 1_j1_j^\top)\ \mbox{for} \ j \in  \mathbb N.$
In the simulation, we used a fixed design data generation process.
For the $g$th group, we generate random samples from $ N_p(\mu_g, \Sigma_w) $ with $ p=800 $.
We repeat 100 times to generate  $(n_1, n_2, n_3) =  (50, 50, 50) $ random samples
for the train set and generate a test set with the same sample sizes.
For a choice of scale constants, we consider three models as follows. 
Model I: $ (s_{ord}, s_{nom}) = (1, 0) $, $ |J_{disc}|=8 $ and $ |J_{disc}^{ord}|=4 $.
Model II: $ (s_{ord}, s_{nom}) = (0.75, 0.5) $, $ |J_{disc}|=8 $ and $ |J_{disc}^{ord}|=4 $. 
Model III: $ (s_{ord}, s_{nom}) = (0.5, 1) $, $ |J_{disc}|=6 $ and $ |J_{disc}^{ord}|=2 $. 
As models vary from I to III, the magnitudes of mean differences for ordinal variables become small while those of nominal variables become large.
Note that Model III is the same as the mean structure that appeared in Example \ref{example-1}.
Tuning parameters for our proposed methods are selected as described in Section \ref{subsec: tuning}.
In addition, for SOBL, we also conducted a  5-fold cross-validation procedure on a $(\lambda, \eta)$ grid to compare with the two-step tuning procedure.
In Tables \ref{tab: varsel} and \ref{tab: sim-acc}, ``SOBL-grid'' stands for the results of SOBL, based on the two-dimensional $ (\lambda, \eta) $ grid search.
For a baseline comparison, we also fit and report the results from the sparse LDA method (either from \cite{gaynanova2016sparseLDA} or \cite{mai2019multiclass} or \cite{jung2019poi}), denoted as SLDA.
After fitting, we measure variable selection performance on the fitted model and
report classification performances with respect to $ l_0, l_1 $ and $ l_2 $ losses. Here, the $ l_k $ loss, for $ k = 0, 1, 2 $, is given by $\frac{1}{N} \sum_{i=1}^N |\hat Y_i - Y_i|^k$ where $\hat Y_i$ is an estimated class of the $ i $th observation in the test set. 
Note that $l_0$ loss is same as the misclassification rate.
In the ordinal response setting, using the $ l_1 $ or $ l_2 $ loss may be preferable to employing the $ l_0 $ loss, which only measures classification accuracy.
Even with the same values for the $ l_0 $ loss, the fitted result may possess a  lower $ l_1 $ or $ l_2 $ loss if the fitted $\{\hat Y_i\}_{i=1}^N $ exhibit some order-concordant tendency to $ \{Y_i\}_{i=1}^N $.

\begin{table}[t]
	\centering
	\resizebox{\columnwidth}{!}{
\begin{tabular}{ccc|ccccccc}
    Model & True & Index set & SOBL & OSBL & SOBL-grid & SLDA & FWOC & PCLM & Weight  \\ \hline
    \hline
    \multirow{4}{*}{I}
    & & $|\widehat D|$  & 3.36(0.131) &  3.37(0.130) &  16.71(3.824) &  18.56(4.466) & 27.49(4.700) &  9.14(0.440) & 11.0(0.312)\\
    & $|J_{disc}| = 8$ & $|\widehat D \cap J_{disc}|$   & 3.08(0.088) & 3.09(0.085) & 3.73(0.099) & 3.36(0.111) & 4.27(0.092) & 3.90(0.067) & 4.03(0.017)\\
    & $|J_{disc}^{ord}| = 4$ & $|\widehat D \cap J_{disc}^{ord}|$ & 3.08(0.088) & 3.08(0.085) & 3.52(0.072) & 3.17(0.084) & 3.84(0.047) & 3.70(0.048) &  4.00(0.000)\\
    & & $|\widehat D \cap J_{disc}^{ord}|/|\widehat D|$  & 0.95(0.011) & 0.95(0.012) & 0.65(0.035) & 0.68(0.038) & 0.54(0.039) & 0.49(0.021) & 0.40(0.014) \\
    \hline
    \multirow{4}{*}{II}
    & & $|\widehat D|$  & 3.79(0.090) &  3.58(0.121) &  15.81(1.265) & 14.12(0.988) & 34.41(4.452) & 15.59(0.791) & 10.7(0.351)\\
    & $|J_{disc}| = 8$ & $|\widehat D \cap J_{disc}|$  & 3.52(0.063) & 3.13(0.073) & 7.13(0.084) & 7.03(0.086) & 7.47(0.076) & 5.11(0.091) & 4.16(0.037)\\
    & $|J_{disc}^{ord}| = 4$ & $|\widehat D \cap J_{disc}^{ord}|$ & 3.42(0.065) & 3.03(0.076) & 3.26(0.072) & 3.10(0.075) & 3.50(0.069) & 3.55(0.059) & 4.00(0.00)\\
    & & $|\widehat D \cap J_{disc}^{ord}|/|\widehat D|$ & 0.92(0.012) & 0.88(0.017) & 0.28(0.013) & 0.29(0.012) & 0.24(0.016) & 0.30(0.016) & 0.42(0.015)\\
    \hline
    \multirow{4}{*}{III}
    & & $|\widehat D|$ &   1.44(0.087) &  1.35(0.096) &  9.82(2.58) &  9.02(2.503) & 22.86(7.333) & 17.98(1.156)  & 11.4(0.326)\\
    & $|J_{disc}| = 6$ & $|\widehat D \cap J_{disc}|$ &   1.41(0.082) & 1.17(0.037) & 5.38(0.051) & 5.13(0.040) & 5.36(0.056) & 4.51(0.060)  & 2.02(0.014)\\
    & $|J_{disc}^{ord}| = 2$ & $|\widehat D \cap J_{disc}^{ord}|$& 1.34(0.074) & 1.17(0.037) & 1.40(0.051) & 1.13(0.040) & 1.36(0.056) & 2.00(0.000)& 2.00(0.00) \\
    & & $|\widehat D \cap J_{disc}^{ord}|/|\widehat D|$  & 0.97(0.015) & 0.95(0.015) & 0.22(0.008) & 0.19(0.006) & 0.18(0.010) & 0.16(0.008) & 0.193(0.007)
\end{tabular}
}
	\caption{Variable selection performances of each method with 100 repetitions on simulation settings. Averaged measures are presented. Values in parentheses are standard errors. 
	\label{tab: varsel}}
\end{table}

In Tables \ref{tab: varsel} and \ref{tab: sim-acc}, we  report the case of proposed ordinal basis learning methods only based on fastPOI, since the results of the proposed methods based on other choices of sparse LDAs have similar patterns. 

Table \ref{tab: varsel} shows variable selection performance results. Each method's index set of selected variables is denoted by $ \widehat D $. Thus, $|\widehat{D}|$ is the number of selected variables. 
The rightmost column of the table represents the variable selection performance of variables selected by two-step ordinal weights, which is defined as $\widehat D = \{j: w_j = 1\}$.
In all three models, the proposed SOBL and OSBL show the highest ratio of $ J_{disc}^{ord} $ in selected variables, $ |\widehat D \cap J_{disc}^{ord}|/|\widehat D| $, among all methods considered. Although no nominal variables exist and only ordinal variables exist in Model I, SOBL and OSBL methods capture $ J_{disc}^{ord} $ more likely than the other methods. 
In Model II and Model III, SOBL's selected ratio $ |\widehat D \cap J_{disc}^{ord}|/|\widehat D| $ is slightly higher than OSBL's. In comparison, SOBL-grid selects almost all $ J_{disc} $, but some of the noise variables are falsely included. PCLM always selects all variables in $ J_{disc}^{ord} $ but also mistakenly selects too many of $ J_{disc}^c $. FWOC can not properly select $ J_{disc}^{ord} $ and also frequently selects noise variables. 
Similarly, the baseline method, SLDA, tends to include noise variables in all three models, and the selected ratio $ |\widehat D \cap J_{disc}^{ord}|/|\widehat D| $ of SLDA decreases as models vary from I to III.
Lastly, the ordinal weight always selects all variables in $J^{ord}_{disc}$ but also selects variables not included in $J^{ord}_{disc}$. Ordinal weight aims to capture $J_{ord}$, so it also selects variables in $J_{ord} \setminus J_{disc}$. Under Model III, $J_{ord} \setminus J_{disc}$ is a nonempty set, as seen in Example \ref{example-1}. Furthermore, utilizing only ordinal weights does not guarantee noise variables not to be selected. Based on these observations, we conclude that using ordinal weight for the variable selection is insufficient for selecting  variables in $J^{ord}_{disc}$.

The classification performance results are summarised in Table \ref{tab: sim-acc} in terms of $ l_k $ losses. In Model I, all classification performances are at a similar level. This is because there is no nominal variable. Remarkably, SOBL and OSBL are the only methods that provide comparable classification performances to other competing methods while only using the smallest numbers of variables contained in $ J_{disc}^{ord} $.  In Models II and III, SOBL and OSBL have higher losses than competing methods. 
This is because the magnitude of the mean difference of ordinal variables is smaller than that of nominal. 
%
Nevertheless, it is observed that the proposed methods properly select $ J_{disc}^{ord} $, in any case. Even if the ordinal signal strength is weak as in Model III, the proposed SOBL and OSBL properly identify $J_{disc}^{ord}$ with high probabilities. Finally, we note that the classification performance of SOBL-grid is among the best across all situations.

\begin{table}[t]
	\centering
	\resizebox{\columnwidth}{!}{
		\begin{tabular}{cc|cccccc}
			Model & Metric & SOBL & OSBL & SOBL-grid & SLDA & FWOC & PCLM  \\ \hline
			\hline
			\multirow{3}{*}{I}
   		&$l_0$&0.054(0.022) & 0.052(0.021) & 0.057(0.023) & 0.057(0.023) & 0.054(0.022) & 0.052(0.019) \\
			&$l_1$&0.054(0.022) & 0.052(0.021) & 0.057(0.023) & 0.057(0.023) & 0.054(0.022) & 0.052(0.019) \\
			&$l_2$&0.054(0.022) & 0.052(0.021) & 0.057(0.023) & 0.057(0.023) & 0.054(0.022) & 0.052(0.019)\\
			\hline
			\multirow{3}{*}{II}
			&$ l_0 $&0.118(0.027) & 0.121(0.028) & 0.036(0.015) & 0.036(0.015) & 0.038(0.015) & 0.114(0.026)\\
			&$ l_1 $&0.118(0.027) & 0.121(0.028) & 0.037(0.016) & 0.036(0.015) & 0.039(0.016) & 0.114(0.026)\\
			&$ l_2 $&0.119(0.027) & 0.121(0.029) & 0.040(0.018) & 0.038(0.017) & 0.041(0.018) & 0.114(0.026)\\
			\hline
			\multirow{3}{*}{III}
			&$l_0$&0.352(0.012) & 0.355(0.005) & 0.009(0.001) & 0.010(0.001) & 0.011(0.001) & 0.156(0.003)\\
			&$l_1$&0.391(0.017) & 0.376(0.006) & 0.013(0.001) & 0.015(0.001) & 0.017(0.002) & 0.156(0.003)\\
			&$l_2$&0.469(0.027) & 0.418(0.009) & 0.021(0.002) & 0.025(0.003) & 0.030(0.003) & 0.156(0.003)
	\end{tabular}
	}
	\caption{Classification performances of each method with
		100 repetitions on simulation settings. Averaged measures are presented.
		Values in parentheses are standard errors.
		\label{tab: sim-acc}}
\end{table}

\subsection{Real data analysis} \label{subsec: real-data}
We apply the proposed methods to two high-dimensional gene expression data sets with ordinal responses. 
The first data set is the primary human  Glioma data set \citep{sun2006neuronal}.
Glioma is a type of tumor that starts in the brain and spine and has severe prognosis results
for patients.
In the Glioma data set, each observation has an ordinal class among
`Normal' $ \prec $ `Grade II' $ \prec $ `Grade III' $ \prec $ `Grade IV.'
The predictor $ X $ is preprocessed gene expression data with $ p = 54,612 $.
The sample sizes of the four classes are $ (n_1, n_2, n_3, n_4) = (23, 45, 31, 81) $,
with $ N =180 $.
The second gene expression data is
B-cell Acute Lymphoblastic Leukemia (ALL) data set, provided by
\cite{chiaretti2004gene} and maintained by \cite{RpkgALL}.
In ALL data set, we have $ X \in \mathbb{R}^{12,625} $ and
each observation belongs to one of the ordinal classes
$B_1 \prec B_2 \prec B_3 \prec B_4$ where
$ (n_1, n_2, n_3, n_4) = (19, 36, 23, 12) $.
The total sample size of the ALL data set is $ N = 90 $.
For these two data sets, we have compared the proposed methods with the competing methods discussed in Section \ref{subsec: competing}.
The analysis aims to identify up-regulated or down-regulated variables that are relevant to the ordinal classes while maintaining adequate classification accuracy.

We split the whole data set into train and test sets by 4:1, and then
further split the train set into the fitting and validation sets by 3:1.
That is, we randomly divide the data set into three parts with fitting:validation:test $=$ 3:1:1.
As in the sparse multiclass LDA literature, we first apply variable screening.
We chose variables by MV-SIS, a screening method for high dimensional classification problem \citep{cui2015model}, to rule out irrelevant variables at the training step.
MV-SIS scores are obtained from train set, and top 1,000 variables in Glioma data and 500 variables in ALL data are selected.
On the train set, we utilize the fitting and validation sets to tune and
fit the best model on the whole train set.
Finally, we measure losses on the test set and record the  number of
selected variables for each method.
This is repeated for 100 times, and Table \ref{tab: realdata} collects averaged
performance measures with standard error.

\begin{table}[t]
    \centering
    \resizebox{\columnwidth}{!}{%
    \begin{tabular}{cc|ccc|ccc|ccc|cc}
       & & \multicolumn{3}{c|}{MGSDA based} & \multicolumn{3}{c|}{fastPOI based} & \multicolumn{3}{c|}{MSDA based} & &  \\ \hline
    Data set & Metric & SOBL & OSBL & SLDA & SOBL & OSBL & SLDA & SOBL & OSBL & SLDA & FWOC & PCLM  \\
    \hline \hline
    \multirow{10}{*}{Glioma}
    & $l_0$
    &0.385& 0.491 &0.596&0.378& 0.382 &0.356  & 0.416  & 0.513 & 0.683 &0.328 &0.362\\
    & &(0.012)& (0.011) &(0.013)&(0.006)& (0.006) &(0.007)  & (0.014)  & (0.011) & (0.011) &(0.007) &(0.007)\\[5pt]
    & $l_1$
    &0.583 & 0.743 &1.02 &0.525& 0.522 &0.511   & 0.657 & 0.765 & 1.19 &0.456 &0.453\\
    & &(0.027)& (0.021) &(0.025)&(0.012)& (0.010) &(0.012)   & (0.034) & (0.022) & (0.023) &(0.013) &(0.009)\\[5pt]
    & $l_2$
    &1.07& 1.34 &2.09&0.866& 0.840 &0.899     & 1.27 & 1.37 & 2.51 &0.775 &0.648\\
    & &(0.070)& (0.053) &(0.061)&(0.031)& (0.024) &(0.031)     & (0.092) & (0.054) & (0.062) &(0.031) &(0.018)\\[5pt]
    & $|\widehat D|$
    &44.2& 153 &228.0&53.6& 45.2 &79.7              & 73.2 & 190.0 & 268.0 &334.0  &70.9\\
    & &(7.75)& (8.52) &(9.96)&(12.2)& (11.5) &(18.4)              & (9.16) & (7.69) & (9.63) &(33.7)  &(3.01)\\[5pt]
    & $|\widehat D \cap \widehat J_{ord}| / |\widehat D|$
    &0.968& 0.996 &0.930&0.956& 0.950 &0.475  & 0.982 & 0.995 & 0.912 &0.636&0.846\\
    & &(0.001)& (0.001) &(0.008)&(0.013)& (0.011) &(0.026)  & (0.003) & (0.001) & (0.004) &(0.031)&(0.005)\\
    \hline
    \multirow{10}{*}{ALL}
    &$l_0$
    &0.445 & 0.504 &0.434 &0.503 & 0.532 &0.508 & 0.464 & 0.517 & 0.451 &0.450 &0.476\\
    & &(0.012) & (0.011) &(0.011) &(0.012) & (0.011) &(0.010) & (0.011) & (0.010) & (0.011) &(0.011) &(0.021)\\[5pt]
    &$l_1$
    &0.555 & 0.618 &0.540 &0.625 & 0.649 &0.641 & 0.584 & 0.629 & 0.570 &0.561 &0.549\\
    & &(0.016) & (0.015) &(0.014) &(0.017) & (0.013) &(0.015) & (0.015) & (0.015) & (0.016) &(0.014) &(0.024)\\[5pt]
    &$l_2$
    &0.779 & 0.855 &0.764 &0.880 & 0.891 &0.928 & 0.827 & 0.866 & 0.824 &0.795 &0.694\\
    & &(0.029) & (0.028) &(0.026) &(0.033) & (0.025) &(0.032) & (0.026) & (0.030) & (0.032) &(0.028) &(0.037)\\[5pt]
    &$|\widehat D|$
    &18.7 & 7.32 &49.5 &14.7& 11.4 &101.0 & 28.6 & 11.9 & 75.6 &140.0 &39.9 \\
    & &(1.52) & (0.59) &(3.92) &(1.68) & (1.60) &(15.3)            & (1.60) & (0.613) & (4.22) &(14.5) &(2.9)\\[5pt]
    &$|\widehat D \cap \widehat J_{ord}|/|\widehat D|$
    &0.750 & 0.800 &0.178 &0.895 & 0.918 &0.244 & 0.795 & 0.813 & 0.212 &0.230 &0.395 \\
    & &(0.015) & (0.022) &(0.011) &(0.011) & (0.011) &(0.012) & (0.013) & (0.016) & (0.010) &(0.001) &(0.011)
    \end{tabular}%
    }
    \caption{Classification and variable selection performances of each method with
	100 repetitions on real data sets. Averaged measures are presented.
	Values in parentheses are standard errors.}
    \label{tab: realdata}
\end{table}

Denote the index set of sample ordinal variables as $\widehat J_{ord} := \{j: \hat{\mu}_{j}^1 \le \dots \le \hat{\mu}_{j}^K \mbox{ or } \hat{\mu}_{j}^1 \ge \dots \ge \hat{\mu}_{j}^K \}.$ To compare the variable selection performance, we report the number of selected variables denoted as $\widehat D$, and $|\widehat D \cap \widehat J_{ord}|/|\widehat D|$ which is the ratio of sample ordinal variables in selected variables.
In both datasets, regardless of the choice of base SLDA methods (MGSDA or fastPOI or MSDA), SOBL and OSBL select fewer variables than the corresponding sparse multiclass LDA methods while maintaining similar or lower losses than sparse LDA methods.
Furthermore, all SOBL and OSBL methods have high values of $|\widehat D \cap \widehat J_{ord}|/|\widehat D|$. Note that other competing methods, even the ordinal classification-oriented methods such as FWOC and PCLM selected many nominal variables than our proposed methods.

It is notable that in the case of the Glioma dataset, SOBL-fastPOI and OSBL-fastPOI report smaller values of $ l_2 $ loss than fastPOI even though fastPOI has a lower $ l_0 $ loss than SOBL-fastPOI and OSBL-fastPOI.
In Glioma dataset, FWOC performs best in terms of classification performance with respect to $ l_0 $ loss, but FWOC selects too many variables which makes interpretations somewhat clumsy. Similarly, while PCLM appears to be optimal in terms of the $ l_2 $ loss, it selects significantly more variables than SOBL-MGSDA and OSBL-fastPOI.  
Overall, SOBL methods report lower $ l_k $ losses than OSBL methods.

We take a further look at the difference between the selected variables of SOBL and other methods.
In both datasets, we fit SOBL-MGSDA, MGSDA, FWOC and PCLM. After fitting, we plot the sample group means of selected variables in Figure \ref{fig:varsel-realdata}.
The results from SOBL method based on other SLDAs and the OSBL are omitted as their patterns are similar to that of SOBL-MGSDA. 

\begin{figure}
\centering
\begin{subfigure}{\textwidth}
    \includegraphics[width=\textwidth]{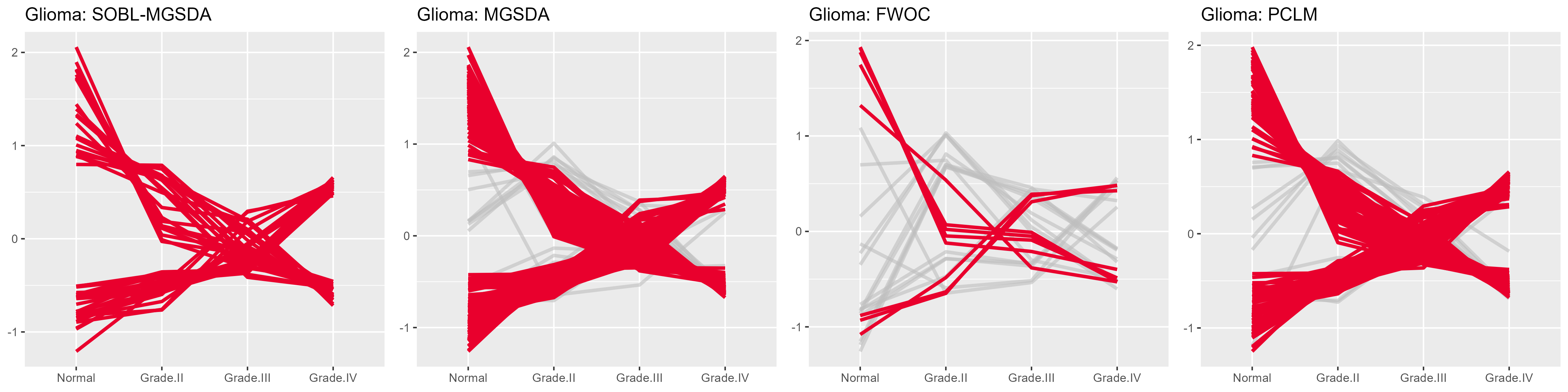}
\end{subfigure}

\begin{subfigure}{\textwidth}
    \includegraphics[width=\textwidth]{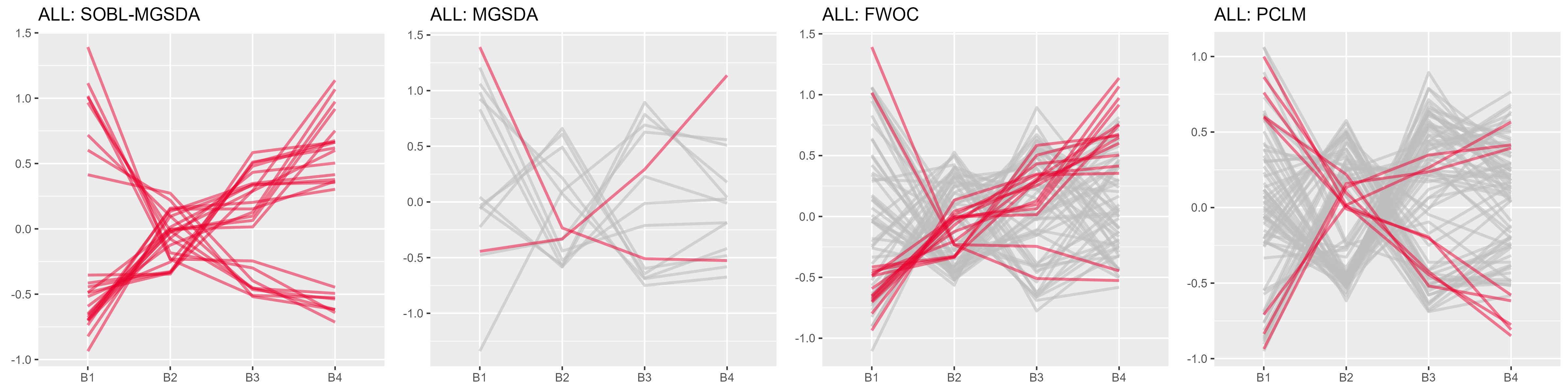}
\end{subfigure}
\caption{The class mean plots of selected variables in Glioma (top) and ALL (bottom) data sets.}
\label{fig:varsel-realdata}
\end{figure}

For the Glioma data, not only ordinal classification methods but also the SLDA method (MGSDA) has a high value of $|\widehat D \cap \widehat J_{ord}|/|\widehat D|$ in Table \ref{tab: realdata}, which implies that the Glioma data set has a strong ordinal signal. This is also observed in the first row of Figure \ref{fig:varsel-realdata}, in which most selected variables are order-concordant. 
The distinction between SOBL and the other methods is more clearly revealed in the class mean plot of the ALL dataset (bottom row). In the ALL dataset, nominal variables' signal strength is comparable to or even stronger than ordinal variables. Despite this, It is remarkable that only SOBL properly selects ordinal variables, while FWOC and PCLM do not.

The selected variables (genes) by SOBL are  all either upregulated or downregulated in the Glioma and ALL datasets. Many of the identified genes are experimentally known to have such relations. For example, the gene RIMS3 identified for the ALL dataset is known to be highly upregulated \citep{hicks2016molecular, malmberg2019accurate}, 
and KIAA0101 for the Glioma dataset is known to be related to progression of cancer \citep{wang2022kiaa0101}. 
For a detailed discussion, see Section S4 of the supplementary material.

\section{Discussion}\label{sec:discussion}

In this paper, we have discussed a novel basis learning and variable selection procedure for ordinal classification. Identifying both order-concordant and discriminatory variables provides better interpretability and explainability than simply pursuing classification accuracy. We believe that data analysts should decide whether to aim only at enhancing classification performance or at providing order-concordant and discriminatory variables. The proposed SOBL can be tuned to satisfy both purposes. This is shown analytically in Section \ref{subsec: theory}. We also have numerically demonstrated that SOBL with a greedy tuning can provide  classification accuracy at least comparable to, if not better than,  regular sparse LDA methods, and that SOBL with a two-step tuning procedure excels at selecting the order-concordant and discriminatory variables.

\clearpage
\appendix

\renewcommand{\thesection}{S\arabic{section}}
\renewcommand{\theequation}{S.\arabic{equation}}
\renewcommand{\thethm}{S\arabic{thm}}
\renewcommand{\thefigure}{S\arabic{figure}}

\begin{center}
    \textbf{\LARGE \MakeUppercase{Supplementary Material}}
\end{center}

\section{Technical details}
\subsection{Proof of Theorem 1}
We proceed the proof with proving the result for univariate case and then aggregate the results of all variables.
In this section, we illustrate tail probabilities of two types of Kendall's rank correlations $ \hat \tau $ and $ \tilde \tau $ introduced
in Section 3.3.

Recall that
\begin{enumerate}
	\item $ \hat \tau_j =  \frac{2}{N(N-1)}\sum_{1 \le i_1 < i_2 \le N} \mbox{sign}(X_{i_2j} - X_{i_1j}) \mbox{sign}(Y_{i_2} - Y_{i_1}) $,
	\item $ \tilde \tau_j  =
	\frac{2}{K(K-1)}\sum_{1 \le g_1 < g_2 \le K} \mbox{sign}(\hat \mu_j^{g_2} - \hat \mu_j^{g_1})$.
\end{enumerate}

\begin{lem} \label{lem: tau tail bound}
	For any $ \mu_{j}^{1}, \dots, \mu_j^{K} $, we have
	\begin{equation} \label{eq: conc tau hat}
		\mbox{Pr}\left( \vert \hat \tau_j - \mathbb E \hat \tau_j \vert \ge t	 \right)
		\le 2\exp\left( - \frac{Nt^2}{8} \right).
	\end{equation}
\end{lem}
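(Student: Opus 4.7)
The plan is to apply McDiarmid's inequality (Theorem \ref{thm: McDiamard}) to $\hat\tau$ viewed as a function of the independent random variables $X_1,\ldots,X_N$. Since $\mathbf Y=(y_1,\ldots,y_N)$ is treated as fixed, the $X_i$'s are independent (though not identically distributed, as their mean depends on $y_i$), so the hypothesis of McDiarmid is satisfied provided we establish a bounded differences condition with appropriate constants $c_i$.

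First, I would compute the bounded differences constant. Fix an index $i$ and consider replacing $X_i$ with an arbitrary $X_i'$, keeping $X_k$ fixed for $k\neq i$. Write
\[
\hat\tau=\frac{2}{N(N-1)}\sum_{1\le k<\ell\le N}\mathrm{sign}(X_\ell-X_k)\,\mathrm{sign}(y_\ell-y_k).
\]
Only the $N-1$ summands that involve index $i$ can change when $X_i$ is replaced. Each such summand is of the form $\pm\mathrm{sign}(X_\ell-X_k)\,\mathrm{sign}(y_\ell-y_k)\in\{-1,0,1\}$, so its absolute change is at most $2$. Aggregating, the sum changes by at most $2(N-1)$, hence
\[
\bigl|\hat\tau(X_1,\ldots,X_i,\ldots,X_N)-\hat\tau(X_1,\ldots,X_i',\ldots,X_N)\bigr|\le\frac{2}{N(N-1)}\cdot 2(N-1)=\frac{4}{N}.
\]
Thus McDiarmid applies with $c_i=4/N$ for every $i$, giving $\sum_{i=1}^N c_i^2=16/N$.

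Second, plugging into Theorem \ref{thm: McDiamard} yields
\[
\mathrm{Pr}\bigl(\hat\tau-\mathbb E\hat\tau\ge t\bigr)\le\exp\!\left(-\frac{2t^2}{16/N}\right)=\exp\!\left(-\frac{Nt^2}{8}\right).
\]
The same bound applied to $-\hat\tau$ (or equivalently to $f'=-f$, which satisfies the same bounded differences property with the same constants) controls the lower tail. A union bound over the two one-sided events gives the stated two-sided conclusion
\[
\mathrm{Pr}\bigl(|\hat\tau-\mathbb E\hat\tau|\ge t\bigr)\le 2\exp\!\left(-\frac{Nt^2}{8}\right).
\]

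There is no real obstacle here: the argument is almost entirely bookkeeping of the bounded differences constant. The only mild subtlety is that $\mathrm{sign}$ takes values in $\{-1,0,1\}$ rather than $\{-1,1\}$, but this does not affect the $|\Delta|\le 2$ bound per term, and the normalization $2/[N(N-1)]$ absorbs cleanly to give $c_i=4/N$. No distributional properties of the $X_i$ beyond independence are used, so the bound holds uniformly over all choices of $(\mu_1,\ldots,\mu_K)$, as claimed.
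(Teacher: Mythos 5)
Your proposal is correct and follows essentially the same route as the paper: both establish the bounded differences constant $c_i = 4/N$ by noting that replacing $X_i$ changes at most $N-1$ summands of $\hat\tau$, each by at most $2$ in absolute value, and then invoke McDiarmid's inequality (Theorem~\ref{thm: McDiamard}). If anything, your write-up is slightly more complete than the paper's, which states the result directly from the one-sided McDiarmid bound, whereas you make explicit the symmetrization step (applying the bound to $-\hat\tau$ and taking a union bound) needed for the two-sided statement.
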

\begin{proof}
	Suppose that the class membership $ \mathbf{Y} = (y_1, \dots, y_N) $ are given.
	For any $ i = 1, \dots, N $ and for any $ x_1, \dots, x_N, x_i' \in \mathbb R $,
	Then we have
	\begin{align*}
		&\vert \hat \tau_j(x_1, \dots, x_{i-1}, x_i,  x_{i+1}, \dots, x_N) -
		\hat \tau_j(x_1, \dots, x_{i-1}, x_i',  x_{i+1}, \dots, x_N) \vert \\
		&\le \frac{2}{N(N-1)} \sum_{k \neq i} \vert \mbox{sign}(x_i - x_k) - \mbox{sign}(x_i' - x_k) \vert  \\
		&\le \frac{2}{N(N-1)} \cdot 2(N-1) = \frac{4}{N}.
	\end{align*}
	By McDiamard's inequality, 
    we have 
	\begin{equation*} 
		\mbox{Pr}\left( \vert \hat \tau_j - \mathbb E \hat \tau_j \vert \ge t	 
		~|~ \mathbf Y \right)
		\le 2\exp\left( - \frac{Nt^2}{8} \right), 
	\end{equation*}
	and by taking expectation on $ \mathbf Y $ on both sides, we have \eqref{eq: conc tau hat}.
\end{proof}

\begin{lem} \label{lem: tau-o}
	Suppose that $ \mu^1_j, \dots, \mu^K_j $ are all different.
	Also, assume that $ c_1 \le \pi_g \le c_2 $ for some fixed constants $ c_1, c_2 > 0 $ for all $ g = 1, \dots, K $. 
	Let
	$$ \tau^o_j = \frac{2}{K(K-1)} \sum_{1 \le g_1 < g_2 \le K} I(\mu^{g_1}_j < \mu^{g_2}_j) - I(\mu^{g_1}_j > \mu^{g_2}_j) $$
	and
	$ \vartheta_j = \min_{1 \le g_1 < g_2 \le K} = \frac{|\mu_j^{g_2} - \mu_j^{g_1}|}{\sigma} .$
	Then
	$$
	\vert
	\mathbb{E} \tilde{\tau}_j - \tau^o_j
	\vert
	\le
	8\exp(-CN(\vartheta_j^2 \land 1)).
	$$
\end{lem}
\begin{proof}
	Fix some indices $ 1 \le g_1 < g_2 \le K $. 
	Note that for $ t > 0 $, 
	$$
		\mbox{Pr}(\hat \mu_j^{g_2} - \hat \mu_j^{g_1} - (\mu_j^{g_2} - \mu_j^{g_1}) > t)
		\le \mbox{Pr}\left(\hat \mu_j^{g_2} - \mu_j^{g_2} > \frac{t}{2}\right)
		+ \mbox{Pr}\left(\hat \mu_j^{g_1} - \mu_j^{g_1} < -\frac{t}{2}\right).
	$$
	By Hoeffding's inequality (Theorem 2.6.3 of \cite{vershynin2018high}),
	$$
		\mbox{Pr}\left(\hat \mu_j^{g_2} - \mu_j^{g_2} > \frac{t}{2} ~\bigg|~ \mathbf{Y}\right)
		\le \exp\left(-\frac{C n_{g_2} t^2}{\sigma^2}\right).
	$$
	By applying same argument from the proof of Proposition 1 in the supplementary materials of \cite{mai2019multiclass}, we get 
	\begin{equation*}
		\mbox{Pr}\left(\hat \mu_j^{g_2} - \mu_j^{g_2} > \frac{t}{2}\right)
		\le
		\mathbb E \left[\exp\left(-\frac{C n_{g_2} t^2}{\sigma^2}\right)\right]
		\le \exp(-CNt^2/\sigma^2) + \exp(-CN).
	\end{equation*}
	With the same process, $\mbox{Pr}\left(\hat \mu_j^{g_1} - \mu_j^{g_1} < -t/2\right)  \le \exp(-CNt^2/\sigma^2) + \exp(-CN) $ holds.
	Thus, we have 
	\begin{equation}\label{eq:mean-diff-concentration}
		\mbox{Pr}(\hat \mu_j^{g_2} - \hat \mu_j^{g_1} - (\mu_j^{g_2} - \mu_j^{g_1}) > t)
		\le 
		2\left(\exp(-\frac{CNt^2}{\sigma^2}) + \exp(-CN)\right)
	\end{equation}
    Similarly, we also have $\mbox{Pr}(\hat \mu_j^{g_2} - \hat \mu_j^{g_1} - (\mu_j^{g_2} - \mu_j^{g_1}) < -t)
		\le 
		2\left(\exp(-\frac{CNt^2}{\sigma^2}) + \exp(-CN)\right)$
    \begin{equation}\label{eq:mean-diff-concentration-negative}
		\mbox{Pr}(\hat \mu_j^{g_2} - \hat \mu_j^{g_1} - (\mu_j^{g_2} - \mu_j^{g_1}) < -t)
		\le 
		2\left(\exp(-\frac{CNt^2}{\sigma^2}) + \exp(-CN)\right)
	\end{equation}

	Now, suppose that $ \mu^{g_2}_{j} > \mu^{g_1}_{j} $.
	By \eqref{eq:mean-diff-concentration-negative}, we have
	\begin{align}
		|\mathbb E \mbox{sign}(\hat \mu^{g_2}_j - \hat \mu^{g_1}_j) - 1|
		&= 2|\mathbb P (\hat \mu^{g_2}_j - \hat \mu^{g_1}_j > 0 ) - 1 | \nonumber\\ 
		&= 2\mathbb P(\hat \mu^{g_2}_j - \hat \mu^{g_1}_j - (\mu^{g_2}_j - \mu^{g_1}_j) < -(\mu^{g_2}_j - \mu^{g_1}_j)) \nonumber \\
		&\le 4\left(\exp(-CN(\mu_j^{g_2} - \mu_j^{g_1})^2/\sigma^2 ) 
		+ \exp(-CN)\right) \nonumber \\ 
		&\le 8\exp(-CN(\vartheta_j^2 \land 1)). \label{eq: bd-Wg12}
	\end{align}
    In the case of $ \mu_j < \mu_i $, we have the same bound for $|\mathbb E \mbox{sign}(\hat \mu_j - \hat \mu_i) + 1|$, i.e., 
    \[
    |\mathbb E \mbox{sign}(\hat \mu_j - \hat \mu_i) + 1| \le 8\exp(-CN(\vartheta_j^2 \land 1)).
    \]
	Finally, we have that
	\begin{align*}
		\vert \mathbb{E} \tilde{\tau_j} - \tau^o_j \vert
		&\le \frac{2}{K(K-1)} \left\{
		\sum_{\substack{g_1 < g_2, \\ \mu^{g_1}_j < \mu^{g_2}_j}} \vert \mathbb E \mbox{sign}(\hat \mu^{g_2}_j - \hat \mu^{g_1}_j) - 1 \vert	+
		\sum_{\substack{g_1 < g_2, \\ \mu^{g_1}_j > \mu^{g_2}_j}} \vert \mathbb E \mbox{sign}(\hat \mu^{g_2}_j - \hat \mu^{g_1}_j) + 1 \vert
		\right\} \\
		&\le 8\exp(-CN(\vartheta_j^2 \land 1)).
	\end{align*}
\end{proof}

\begin{lem} \label{lem: tilde tau conc}
	Assume the same setting of Lemma \ref{lem: tau-o}. Then for any $ t > 0 $,
	$$
	\mbox{Pr}\left( \vert \tilde{\tau}_j - \mathbb{E} \tilde{\tau}_j \vert > t \right)
	\le
	6K^2\exp(-CN(\vartheta_j^2 \land 1)).
	$$
\end{lem}
\begin{proof}
	Fix some $ 1 \le g_1 < g_2 \le K $ and denote $ W_{j}^{g_1, g_2} = \mbox{sign}(\hat \mu_j^{g_2} - \hat \mu_j^{g_1}) $ for a notational convenience. 

	Suppose that $\mu_{j}^{g_1} < \mu_{j}^{g_2}.$
	Then
	\begin{align*}
	&\mbox{Pr}(|W_{j}^{g_1, g_2} - \mathbb E W_{j}^{g_1, g_2}| > t) \\
	&= \mbox{Pr}(|W_{j}^{g_1, g_2} - \mathbb E W_{j}^{g_1, g_2}| > t,~\hat \mu_{j}^{g_2} > \hat \mu_{j}^{g_1})
	+ \mbox{Pr}(|W_{j}^{g_1, g_2} - \mathbb E W_{j}^{g_1, g_2}| > t,~\hat \mu_{j}^{g_2} < \hat \mu_{j}^{g_1}) \\
	&\le \mbox{Pr}(|\mathbb E W_{j}^{g_1, g_2} - 1| > t,~\hat \mu_{j}^{g_2} > \hat \mu_{j}^{g_1})
	+ \mbox{Pr}(\hat \mu_{j}^{g_2} < \hat \mu_{j}^{g_1}) \\
	&\le I(t \le 8e^{-CN(\vartheta_j^2 \land 1)}) + \mbox{Pr}(\hat \mu_{j}^{g_2} - \hat \mu_{j}^{g_1} - (\mu_{j}^{g_2} - \mu_{j}^{g_1}) < - (\mu_{j}^{g_2} - \mu_{j}^{g_1})) \\
	&\le 2\exp(-e^{CN( \vartheta_j^2 \land 1)})
	+ 4\exp(-CN(\vartheta_j^2 \land 1)) \\
	&\le 6\exp(-CN(\vartheta_j^2 \land 1)).
	\end{align*}
        The fourth line inequality comes from \eqref{eq: bd-Wg12}, and the fifth line uses \eqref{eq:mean-diff-concentration-negative}.
	We have the same upper bound for the case of $ \mu_{j}^{g_1} > \mu_{j}^{g_2} $ by similar arguments.
	Putting together, we have 
	\begin{align*}
		\mathbb P(|\tilde \tau_j - \mathbb E \tilde \tau_j| > t) 
		&\le \mathbb P \left(
			\frac{2}{K(K-1)}
			\sum_{1 \le g_1 < g_2 \le K}
			\left \vert
			W_{j}^{g_1, g_2} - \mathbb E W_{j}^{g_1, g_2}
			\right \vert > t
		\right) \\
		&\le \mathbb P \left(
			\bigcup_{1 \le g_1 < g_2 \le K}
			\left\{
			\left \vert
			W_{j}^{g_1, g_2} - \mathbb E W_{j}^{g_1, g_2}
			\right \vert > t
			\right\}
		\right) \\
		&\le K^2 \max_{1 \le g_1 < g_2 \le K}
		\mathbb P \left(
			\left \vert
			W_{j}^{g_1, g_2} - \mathbb E W_{j}^{g_1, g_2}
			\right \vert > t
		\right) \\
		&\le 6K^2\exp(-CN(\vartheta_j^2 \land 1)). 
	\end{align*}
\end{proof}

Now we are ready to prove Theorem 1.

\begin{proof}[Proof of Theorem 1]
	Recall that the ordinal weight $ w_j $ of the $ j $th variable is given as
	$$
	w_j = I\left(\vert \hat \tau_j \vert > \frac{\Delta}{2}\right) I\left(\vert \tilde \tau_j \vert > 1-\frac{2}{K(K-1)}\right).
	$$
    Suppose that $ j \in J_{noise} $. By Condition (C2), $X_{i_2 j} - X_{i_1 j} \overset{d}{=} -(X_{i_2 j} - X_{i_1 j}) ~|~ \mathbf{Y} $ holds, and hence $ \mathbb E \left[\mbox{sign}(X_{i_2 j} - X_{i_1 j}) | \mathbf Y \right] = 0 $.
	As a consequence, we have 
	$ \mathbb E \hat \tau_j = \mathbb E [\mathbb E [\hat\tau_j | \mathbf Y]] = 0 $.
	Then Lemma \ref{lem: tau tail bound} implies that
	\begin{equation} \label{eq: proof-thm1-1}
	\mbox{Pr}\left( \vert \hat \tau_j \vert \ge \Delta / 2 \right) \le
	2\exp\left( -\frac{N\Delta^2}{32} \right).
	\end{equation}
	On the other hand, if $ j \in J_{md} $ then $\mbox{Pr}\left( \vert \hat \tau_j \vert \ge \Delta / 2  \right)
	\ge \mbox{Pr}\left( \vert \hat \tau_j - \mathbb{E} \hat \tau_j \vert < \Delta/2 \right)$ since  $|\hat \tau_j - \mathbb E \hat \tau_j| \ge |\mathbb E \hat \tau_j| - |\hat \tau_j| \ge \Delta - |\hat \tau_j|$.
        By Lemma \ref{lem: tau tail bound}, we have
	\begin{equation} \label{eq: proof-thm1-2}
	\mbox{Pr}\left( \vert \hat \tau_j \vert \ge \Delta / 2  \right)
	\ge
	1 - 2\exp\left(  -\frac{N\Delta^2}{32} \right).	
	\end{equation}
	Define the event
	$ A_j  = \{\vert \hat \tau_j \vert < \Delta / 2\}$ for $ j = 1, \dots, p $
	and let
	$ \mathcal{A} \coloneqq \left( \cap_{j \in J_{md}} A_j^c \right) \cap
	\left( \cap_{j \in J_{noise}} A_j \right) $.
	Then (\ref{eq: proof-thm1-1}) and (\ref{eq: proof-thm1-2}) imply
	\begin{align}
		\mbox{Pr}\left( \mathcal{A} \right)
		&\ge 1 - \sum_{j \in J_{md}} \mbox{Pr}\left( A_j\right) - \sum_{j\in J_{noise}}\mbox{Pr}\left(A_j^c\right) \nonumber \\
		&\ge 1 - 2|J_{md}|\exp\left(-\frac{N\Delta^2}{32}\right) - 2(p-|J_{md}|)\exp\left(-\frac{N\Delta^2}{32}\right) \nonumber \\
		&= 1 - 2p\exp\left(-\frac{N\Delta^2}{32}\right). \label{eq: pr Anoise}
	\end{align}
	
	From now on, we assume that $ \mathcal{A} $ is occurred.
	Clearly, $ w_j = 0 $ if $ j \notin J_{md} $ under the event $ \mathcal{A} $.
	Now, suppose that $ j \in J_{md} $.
	We note that $ \vert \tau^o_j \vert = 1 $ if
	$ \mu_{j}^{1} < \dots < \mu_{j}^{K} $ or $ \mu_{j}^{1} > \dots > \mu_{j}^{K} $,
	which holds for any $ j \in J_{ord} $ since $ \vartheta_{\min} > 0 $ by assumption.
	Otherwise, if $ j \in J_{nom} $ then
	\begin{equation} \label{eq: proof-thm1-5}
		\vert \tau^o_j \vert \le 1 - \frac{4}{K(K-1)}.
	\end{equation}
	The maximum is achieved when group means are all monotone except that only
	one consecutive pair has the opposite direction (see Section 3.4).
	By Lemma 8, for the $j$th variable, we have
	\begin{equation} \label{eq: proof-thm1-3}
	\vert \mathbb{E} \tilde{\tau}_j - \tau^o_j \vert
	\le 8\exp(-CN(\vartheta_j^2 \land 1))
	\le 8\exp(-CN(\vartheta_{\min}^2 \land 1)).
	\end{equation}
	We may choose sufficiently large $ N $ such that
	\begin{equation} \label{eq: proof-thm1-4}
	\epsilon_{N,  \vartheta_{\min}} \coloneqq
	8\exp(-CN(\vartheta_{\min}^2 \land 1))
	< \frac{1}{K(K-1)}.
	\end{equation}
	Now suppose that $ \vert \tilde \tau_j - \mathbb{E} \tilde \tau_j \vert \le \frac{1}{K(K-1)}$.
	If $ j \in J_{ord} $, by (\ref{eq: proof-thm1-3}), (\ref{eq: proof-thm1-4})
	and the triangle inequality, we have
	$$
	\vert \mathbb{E}\tilde \tau_j \vert
	\ge \vert \tau_j^o \vert - |\mathbb E \tilde \tau_j - \tau_j^o|
	\ge \vert \tau_j^o \vert - \epsilon_{N,  \vartheta_{\min}}
	\ge 1 - \frac{1}{K(K-1)},
	$$
	and
	$$
	\vert \tilde \tau_j \vert
	\ge \vert \mathbb{E} \tilde \tau_j \vert - |\tilde \tau_j - \mathbb E \tilde \tau_j|
	\ge \vert \mathbb{E} \tilde \tau_j \vert - \frac{1}{K(K-1)}
	\ge 1 - \frac{2}{K(K-1)}.
	$$
	Otherwise, if $ j \in J_{nom} $ then, by (\ref{eq: proof-thm1-5}),
	(\ref{eq: proof-thm1-3}) and (\ref{eq: proof-thm1-4}) and the triangle inequality,
	$$
	\vert \mathbb{E}\tilde \tau_j \vert
	\le \vert \tau_j^o \vert + \epsilon_{N,  \vartheta_{\min}}
	\le 1 - \frac{3}{K(K-1)},	
	$$
	and
	$$
	\vert \tilde \tau_j \vert
	\le \vert \mathbb{E} \tilde \tau_j \vert + \frac{1}{K(K-1)}
	\le 1 - \frac{2}{K(K-1)}.
	$$
	Define the event
	$ B_j = \{\vert \tilde{\tau}_j - \mathbb E \tilde \tau_j \vert \le \frac{1}{K(K-1)}\} $
	for $ j = 1, \dots, p $
	and denote $ \mathcal{B} = \cap_{1 \le j \le p} B_j $.
	Then under the event $ \mathcal A \cap \mathcal B $, we have
	$ \vert \tilde \tau_j \vert \ge 1 - \frac{2}{K(K-1)}$ if $ j \in J_{ord} $ and
	$ \vert \tilde \tau_j \vert \le 1 - \frac{2}{K(K-1)}$ if $ j \in J_{nom} $.
	By Lemma \ref{lem: tilde tau conc}, we have that
	\begin{align*}
        \mbox{Pr}\left(\mathcal B \right)
		\ge
		1 - \sum_{j=1}^p \mbox{Pr}\left( B_j^c \right)
		\ge
		1 - p \cdot 6K^2\exp(-CN(\vartheta_{\min}^2 \land 1)).
	\end{align*}
	Aggregating this bound and (\ref{eq: pr Anoise}) completes the proof.
\end{proof}

Let  $ \mathcal E_{ord} = \mathcal A \cap \mathcal B $ where $ \mathcal A $ and $ \mathcal B $ are events defined in the proof of Theorem 1. We note that under $ \mathcal E_{ord} $, $ w_j = 1 $ if $ j \in J_{ord} $ and $ w_j = 0 $ otherwise.

\subsection{Proofs for Section 4} \label{appendix:pfsec4}

\subsubsection{Technical lemmas}

The following lemma implicitly comes from \cite{mai2012direct}.
\begin{lem} \label{lemma: dividing}
	For $ V, \Delta \in \mathbb{R}^{p \times d} $ and $ \Upsilon \in \mathbb{R}^{p \times p} $, suppose that
	$ V = \Upsilon^{-1}\Delta $.
	Let $ A = \{i: V_{i,:} \neq 0\} $, i.e., $ V = \begin{bmatrix}
		V_{A,:} \\ 0
	\end{bmatrix}. $
	Then,
	\begin{enumerate}
		\item $ \Delta_{A^c, :} = \Upsilon_{A^c A} \Upsilon_{A A}^{-1} \Delta_{A, :} $
		\item $ V_{A,:} = \Upsilon_{A A}^{-1} \Delta_{A, :}$
	\end{enumerate}
	(Here, $ \Upsilon_{A A}^{-1} = (\Upsilon_{A A})^{-1} .$)
\end{lem}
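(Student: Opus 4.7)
The plan is to exploit the block structure induced by the index partition $(A, A^c)$ of $\{1,\ldots,p\}$, and to rewrite the identity $V = \Upsilon^{-1}\Delta$ as $\Upsilon V = \Delta$ in block form. After (possibly) permuting rows and columns so that the indices in $A$ come first, $\Upsilon$ has the conformal $2\times 2$ block form with blocks $\Upsilon_{AA}$, $\Upsilon_{A A^c}$, $\Upsilon_{A^c A}$, $\Upsilon_{A^c A^c}$, and $\Delta$ splits into $\Delta_{A,:}$ and $\Delta_{A^c,:}$. By definition of $A$, the rows of $V$ indexed by $A^c$ vanish, so $V$ equals the block column with $V_{A,:}$ on top and $\mathbf{0}$ below.

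Multiplying out $\Upsilon V = \Delta$ in this block form, the columns of $\Upsilon$ indexed by $A^c$ are annihilated by the zero block of $V$, leaving the two row-block equations
\[
\Upsilon_{AA}\, V_{A,:} = \Delta_{A,:}, \qquad \Upsilon_{A^c A}\, V_{A,:} = \Delta_{A^c,:}.
\]
The first equation is solvable once I argue that $\Upsilon_{AA}$ is invertible; solving it gives part (2), and substituting the resulting expression for $V_{A,:}$ into the second equation yields part (1).

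The only thing to be careful about is the invertibility of the principal submatrix $\Upsilon_{AA}$, since the hypothesis only guarantees invertibility of $\Upsilon$ itself. In the context of Section \ref{sec: theory}, $\Upsilon$ plays the role of a (population) covariance-type matrix that is assumed positive definite, so every principal submatrix, including $\Upsilon_{AA}$, is automatically positive definite and hence invertible. Thus there is no real obstacle: the proof is pure block linear algebra, and this invertibility remark is the only piece that needs to be flagged explicitly.
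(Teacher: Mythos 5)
Your proof is correct, but it takes a genuinely different (and more elementary) route than the paper. The paper's proof starts from $V=\Upsilon^{-1}\Delta$ and expands $\Upsilon^{-1}$ via the block matrix inversion formula with the Schur complement $S = \Upsilon_{A^cA^c} - \Upsilon_{A^cA}\Upsilon_{AA}^{-1}\Upsilon_{AA^c}$; it then reads part (1) off the $A^c$ row block of the resulting identity and uses part (1) to cancel the Schur-complement term in the $A$ row block, yielding part (2). You instead multiply through by $\Upsilon$ and work with the forward equation $\Upsilon V = \Delta$: since $V_{A^c,:}=\mathbf{0}$, the two row-block equations $\Upsilon_{AA}V_{A,:}=\Delta_{A,:}$ and $\Upsilon_{A^cA}V_{A,:}=\Delta_{A^c,:}$ drop out immediately, and solving the first and substituting into the second gives both claims in one line each. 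Your route avoids the inversion formula entirely, so you never need to invoke (even implicitly) the invertibility of $S$, only that of $\Upsilon_{AA}$ --- which the lemma's statement already presupposes by writing $(\Upsilon_{AA})^{-1}$, and which you correctly justify in context since $\Upsilon$ plays the role of a positive-definite covariance matrix whose principal submatrices are automatically positive definite. (Note the paper's proof needs the same hypotheses: the block inversion formula it quotes is valid precisely when $\Upsilon_{AA}$ and $S$ are invertible, with $S$ invertible following from invertibility of $\Upsilon$ and $\Upsilon_{AA}$.) What the paper's approach buys is an explicit expression for all four blocks of $\Upsilon^{-1}$, but for this lemma that machinery is unnecessary, and your argument is the leaner of the two.
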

\begin{proof}
	By applying the block matrix inversion formula, we get
	\[
	\Upsilon^{-1} =
	\begin{bmatrix}
		\Upsilon^{-1}_{AA} + \Upsilon^{-1}_{AA} \Upsilon_{AA^c} S^{-1} \Upsilon_{A^cA}
		\Upsilon^{-1}_{AA} & -\Upsilon^{-1}_{AA} \Upsilon_{AA^c} S^{-1} \\
		- S^{-1}\Upsilon_{A^cA}\Upsilon^{-1}_{AA}   &  S^{-1}
	\end{bmatrix},
	\]
	where $ S = \Upsilon_{A^cA^c} - \Upsilon_{A^cA} \Upsilon^{-1}_{AA}\Upsilon_{AA^c} $.
	Then,
	\[
	\begin{bmatrix}
		V_{A,:} \\
		0
	\end{bmatrix}
	=
	\begin{bmatrix}
		(\Upsilon^{-1}_{AA} + \Upsilon^{-1}_{AA} \Upsilon_{AA^c} S^{-1} \Upsilon_{A^cA}
		\Upsilon_{AA}^{-1})\Delta_{A,:}
		- \Upsilon^{-1}_{AA} \Upsilon_{AA^c} S^{-1} \Delta_{A^c, :} \\
		-S^{-1}\Upsilon_{A^cA}\Upsilon^{-1}_{AA} \Delta_{A,:} + S^{-1} \Delta_{A^c,:}	
	\end{bmatrix}.
	\]
	From the second component, the first part of the proposition is proved.
	Observe that
	\begin{equation*}
		V_{A,:}
		= \Upsilon^{-1}_{AA}\Delta_{A,:}  +
		\Upsilon^{-1}_{AA} \Upsilon_{AA^c} S^{-1} (\Upsilon_{A^cA}\Upsilon^{-1}_{AA}\Delta_{A,:} - \Delta_{A^c,:})
		= \Upsilon^{-1}_{AA}\Delta_{A,:}.
	\end{equation*}
	The last equality holds by the first part of the lemma.
\end{proof}

\begin{lem} \label{lem:norms}
	Let $A, B$ be multiplicative matrices. Then
	\begin{enumerate}
		\item $ \Vert A \Vert_{\infty, 2} \le \Vert A \Vert_{\infty} $,
		\item $ \Vert AB \Vert_{\infty, 2} \le \Vert A \Vert_{\infty} \Vert B \Vert_{\infty, 2} $
	\end{enumerate}
\end{lem}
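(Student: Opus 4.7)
The plan is to prove both inequalities directly from the row-wise definitions of the two matrix norms. Part (1) reduces to the pointwise vector inequality $\|v\|_2 \le \|v\|_1$ applied to each row, and part (2) reduces to expressing rows of $AB$ as linear combinations of rows of $B$ and invoking the triangle inequality. Neither step requires any additional machinery beyond the definitions given just above the statement.

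For part (1), I would fix an index $i$, note that for any vector $v \in \mathbb{R}^n$ we have $\|v\|_2 \le \|v\|_1$ (e.g.\ since $\|v\|_2^2 = \sum_j v_j^2 \le (\sum_j |v_j|)^2 = \|v\|_1^2$), apply this to $v = A_{i,:}$ to get $\|A_{i,:}\|_2 \le \|A_{i,:}\|_1$, and then take the maximum over $i$ on both sides. By the definitions of $\|\cdot\|_{\infty,2}$ and $\|\cdot\|_{\infty}$, this yields exactly $\|A\|_{\infty,2} \le \|A\|_{\infty}$.

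For part (2), fix an index $i$ and write the $i$th row of $AB$ as $(AB)_{i,:} = \sum_{k} A_{ik}\, B_{k,:}$. The triangle inequality in $\ell_2$ gives
\begin{equation*}
\|(AB)_{i,:}\|_2 \;\le\; \sum_{k} |A_{ik}|\, \|B_{k,:}\|_2
\;\le\; \Bigl(\sum_{k} |A_{ik}|\Bigr) \max_{k} \|B_{k,:}\|_2
\;=\; \|A_{i,:}\|_1 \cdot \|B\|_{\infty,2}.
\end{equation*}
Taking the maximum over $i$ on the left and pulling the $B$-factor out on the right gives $\|AB\|_{\infty,2} \le \|A\|_{\infty}\, \|B\|_{\infty,2}$, as desired.

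There is no real obstacle here: the lemma is a routine norm inequality assembled from standard one-line facts, and the only ``care'' needed is to keep the row-index quantifiers straight so that the max over $i$ on the left-hand side gets matched with the right combination of a max (defining $\|B\|_{\infty,2}$) and a row-wise $\ell_1$ sum (defining $\|A\|_{\infty}$) on the right. I do not expect to need any auxiliary lemma.
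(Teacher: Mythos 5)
Your proof is correct, and it takes essentially the same route as the paper: part (1) is verbatim the paper's argument ($\|v\|_2 \le \|v\|_1$ applied row-wise, then maximize over $i$). For part (2) the paper does not actually write out an argument but instead cites the proof of Lemma~1 in the supplementary material of \cite{gaynanova2016sparseLDA}; your row-decomposition $(AB)_{i,:} = \sum_k A_{ik} B_{k,:}$ followed by the triangle inequality is exactly the standard computation that citation points to, so you have simply made the paper's deferred step self-contained.
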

\begin{proof}
	For any vector $ v $, it holds that $ \|v\|_2 \le \|v\|_1 $.
	From the definitions,
	$$
		\|A\|_{\infty, 2} = \max_i \|A_{i, :}\|_2 \le \max_i \|A_{i, :}\|_1 = \|A\|_{\infty}.
	$$

	For the second part,
	see the proof of Lemma 1 in Section S2.1 of the supplementary material of
	\cite{gaynanova2016sparseLDA}.
\end{proof}
	
	For the proofs, we only consider the case of SOBL based on MSDA \citep{mai2019multiclass}.
	In the case of MGSDA \citep{gaynanova2016sparseLDA}, it is straightforward to proceed proofs
	by using similar arguments.
	Thus, we let $ \widehat \Sigma = \widehat \Sigma_w $ and
	$ \widehat M = [\hat \mu_2 - \hat \mu_1, \dots, \hat \mu_K - \hat \mu_1] $.
	Also, we denote $ A \coloneqq J_{disc} $,
	$ A_1 \coloneqq J_{disc}^{ord} = J_{disc} \cap J_{ord} $,
	$ A_2 \coloneqq J_{disc} \cap J_{ord}^c $,
	$ A_3 \coloneqq J_{disc}^c \cap J_{ord} $, and
	$ A_4 \coloneqq J_{disc}^c \cap J_{ord}^c$ for
	notational simplicity.
	
	First, we introduce some concentration results.
	Let $ \epsilon > 0 $.
	Two event sets  $\mathcal E_{M}(\epsilon)$ and $\mathcal E_{\Sigma}(\epsilon)$ are defined by
	\begin{align*}
		\mathcal E_{M}(\epsilon) &= \{|\widehat M_{ij} - M_{ij}| < \frac{\epsilon}{d} ~
		\mbox{ for any } i =1, \dots, p \mbox{ and } j=1, \dots, K-1\}, \\
		\mathcal E_{\Sigma}(\epsilon) &= \{|\widehat \Sigma_{ij} - \Sigma_{ij}| < \frac{\epsilon}{d} ~
		\mbox{ for any } i = 1, \dots, p \mbox{ and } j \in A\},
	\end{align*}
	where $ d = \vert A \vert $.
	The following lemma provides lower bound of probability that event sets
	$ \mathcal E_{\Sigma}(\epsilon) $ and $ \mathcal E_{M}(\epsilon) $ occur.
	\begin{lem} \label{lem: event sets}
		For any $ \epsilon > 0$, we have
		$$
		\Pr(\mathcal E_{M}(\epsilon)\cap \mathcal E_{\Sigma}(\epsilon)) \ge 1 - CKp e^{-\frac{CN\epsilon^2}{d^2\sigma^2}} - CKpd e^{-C\min (\frac{\epsilon^2}{d^2\sigma^4 K^2}, \frac{\epsilon}{d \sigma^2 K})N}.
		$$
		Here, $C > 0$ is a generic constant.
	\end{lem}
	\begin{proof}
		From the proof of Lemma \ref{lem: tau-o}, for any $ t > 0 $, 
		we have
		$$
			\mbox{Pr}(\vert\widehat{M}_{ij} - M_{ij}\vert > t)
			\le 4\left(\exp(-\frac{CNt^2}{\sigma^2}) + \exp(-CN)\right).
		$$
		
		Next, we will bound the probability of $ \{|\widehat \Sigma_{ij} - \Sigma_{ij}| > t\} $.
		Observe that 
		$$
			\widehat{\Sigma}_{ij}
			= \frac{1}{N}\sum_{g=1}^K \sum_{Y_l = g} (X_{li} - \mu_i^g)(X_{lj} - \mu_j^g)
			- \sum_{g=1}^K \frac{n_g}{N} (\hat \mu_i^g - \mu_i^g)(\hat \mu_j^g - \mu_j^g) =: I_1 - I_2.
		$$
		By the Hanson-Wright inequality for a bilinear form (Theorem 1 of \cite{park2023sparse}),
		\begin{align*}
			\Pr\left(
				\bigg\vert 
					\frac{1}{n_g} \sum_{Y_l = g}
					(X_{li} - \mu_i^g)(X_{lj} - \mu_j^g) - \Sigma_{ij}
				\bigg\vert
				> t ~\bigg\vert~
				\mathbf{Y}
			\right)
			\le 2\exp\left(-C n_g \min\left(\frac{t^2}{\sigma^4}, \frac{t}{\sigma^2}\right)\right).
		\end{align*}
		Thus, 
		\begin{align*}
			&\Pr(|I_1 - \Sigma_{ij}| > t) \\
			&= \Pr\left(
				\left|\sum_{g=1}^K \frac{n_g}{N} \left(\frac{1}{n_g} \sum_{Y_l = g} (X_{li} - \mu_i^g)(X_{lj} - \mu_j^g) - \Sigma_{ij}\right) \right| > t
			\right) \\
			&= \mathbb E \left(
				\Pr\left(
					\bigg\vert\sum_{g=1}^K \frac{n_g}{N} (\frac{1}{n_g} \sum_{Y_l = g} (X_{li} - \mu_i^g)(X_{lj} - \mu_j^g) - \Sigma_{ij}) \bigg\vert > t
				 ~\bigg|~ \mathbf Y \right)
			\right) \\ 
			&\le 
			\sum_{g=1}^K \mathbb E \left(
				\Pr\left(\bigg|\frac{1}{n_g} \sum_{Y_l = g} (X_{li} - \mu_i^g)(X_{lj} - \mu_j^g) - \Sigma_{ij}\bigg| > \frac{Nt}{K n_g} ~\bigg|~ \mathbf{Y}
				\right)
			\right) \\
			&\le \sum_{g=1}^K \mathbb E \left(
				\Pr\left(\bigg|\frac{1}{n_g} \sum_{Y_l = g} (X_{li} - \mu_i^g)(X_{lj} - \mu_j^g) - \Sigma_{ij}\bigg| > \frac{t}{K} ~\bigg|~ \mathbf{Y}
				\right) 
			\right) \\ 
			&\le \sum_{g=1}^K \mathbb E 
			\left(
				2\exp\left(-C n_g \min\left(\frac{t^2}{\sigma^4 K^2}, \frac{t}{\sigma^2 K}\right)\right)
			\right)
			\\
			&\le 
			4K e^{-C\min (\frac{t^2}{\sigma^4 K^2}, \frac{t}{\sigma^2 K})N}.
		\end{align*}
		For $ I_2 $, we have
		\begin{align*}
			&\Pr\left( \left| \frac{n_g}{N}(\hat \mu_i^g - \mu_i^g)(\hat \mu_j^g - \mu_j^g) \right| > \frac{t}{K} ~\bigg\vert~ \mathbf{Y} \right) \\
			&\le \Pr\left((\hat \mu_i^g - \mu_i^g)^2 + (\hat \mu_j^g - \mu_j^g)^2 > \frac{2Nt}{n_g K} ~\bigg\vert~ \mathbf{Y} \right) \\
			&\le \Pr\left(|\hat \mu_i^g - \mu_i^g| > \sqrt \frac{Nt}{n_g K}~\bigg\vert~ \mathbf{Y} \right)
			+ \Pr\left(|\hat \mu_j^g - \mu_j^g| > \sqrt \frac{Nt}{n_g K}~\bigg\vert~ \mathbf{Y} \right) \\
			&\le 4\exp\left(- \frac{ C N t}{K \sigma^2}\right).
		\end{align*}
		Then  we get 
		$$
			\Pr(|I_2| > t)
			\le \sum_{g=1}^K \Pr\left( \left| \frac{n_g}{N}(\hat \mu_i^g - \mu_i^g)(\hat \mu_j^g - \mu_j^g) \right| > \frac{t}{K}\right)
			\le 4K\exp\left(- \frac{ C N t}{K \sigma^2}\right).
		$$
		From these bounds, we have
		$$
			\Pr(|\widehat \Sigma_{ij} - \Sigma_{ij}| > t)
			\le \Pr(|I_1| > t/2) + \Pr(|I_2| > t/2)
			\le 8K e^{-C\min (\frac{t^2}{\sigma^4 K^2}, \frac{t}{\sigma^2 K})N}.
		$$
		By the union bounds, we get the result.

	\end{proof}

	\begin{lem} \label{lem: concentrations}
		Assume that both $ \mathcal E_{\Sigma}(\epsilon) $ and $ \mathcal E_{M}(\epsilon) $ defined in Lemma \ref{lem: event sets} have occurred.
		\begin{enumerate}
			\item We have
			\begin{align*}
                \max\{\|\widehat \Sigma_{AA} - \Sigma_{AA} \|_{\infty}, \|\widehat \Sigma_{A^cA} - \Sigma_{A^cA} \|_{\infty}, \|\widehat M - M\|_{\infty}\} < \epsilon.
			\end{align*}
			
			\item For $ \epsilon < \frac{1}{\phi} $, we have
			\begin{align*}
				\|(\widehat{\Sigma}_{AA})^{-1} - (\Sigma_{A A})^{-1}\|_{\infty} &< \frac{\epsilon\phi^2}{1-\phi\epsilon}; \\
				\|\widehat{\Sigma}_{A^c A}(\widehat{\Sigma}_{AA})^{-1} - \Sigma_{A^c A}(\Sigma_{A A})^{-1}\|_{\infty} &< \frac{\epsilon\phi(1 + \kappa)}{1-\phi\epsilon}.
			\end{align*}
			
			\item Let $ \phi_1 = \|(\Sigma_{A_1 A_1})^{-1}\|_{\infty} $ and
			$ \kappa_i =  \|\Sigma_{A_i A_1}(\Sigma_{A_1 A_1})^{-1}\|_{\infty}$ for $ i = 2, 3, 4 $.
			For $ \epsilon < \frac{1}{\phi_1} $, we have
			\begin{align*}
				\|(\widehat{\Sigma}_{A_1 A_1})^{-1} - (\Sigma_{A_1 A_1})^{-1}\|_{\infty} &< \frac{\epsilon\phi_1^2}{1-\phi_1\epsilon}; \\
				\|\widehat{\Sigma}_{A_i A_1}(\widehat{\Sigma}_{A_1 A_1})^{-1} - \Sigma_{A_i A_1}(\Sigma_{A_1 A_1})^{-1}\|_{\infty} &< \frac{\epsilon\phi_1(1 + \kappa_i)}{1-\phi_1\epsilon},
				\quad \mbox{for }~ i=2, 3, 4.
			\end{align*}
		\end{enumerate}
		\begin{proof}[Proof of Lemma 14]
			The first part of the lemma is a natural consequence of Lemma \ref{lem: event sets}.
			The second and third part can be proved by following the proof of Lemma A2 in \cite{mai2012direct}.
		\end{proof}
	\end{lem}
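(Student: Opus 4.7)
My plan is to handle the three parts in sequence, using (i) an entrywise-to-row-sum translation for Part 1 and (ii) a resolvent identity combined with a Neumann-series bound for Parts 2 and 3. Throughout I will exploit that $\|\cdot\|_\infty$ is sub-multiplicative (Lemma \ref{lem:norms}).

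For Part 1, recall that $\|V\|_\infty$ is the maximum row-sum of $|V_{ij}|$. Under $E_\Sigma(\epsilon) \cap E_M(\epsilon)$, each relevant entry of $\widehat\Sigma - \Sigma$ or $\widehat M - M$ is bounded by $\epsilon/d$. For the block $\widehat\Sigma_{AA} - \Sigma_{AA}$, each row has $|A| = d$ columns, so the row sum is at most $d \cdot (\epsilon/d) = \epsilon$; the bound for the $A^c \times A$ block is identical. For $\widehat M - M$, each row has only $K-1$ columns, yielding a row sum at most $(K-1)\epsilon/d \le \epsilon$ under the natural assumption $K-1 \le d$.

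For Part 2, I would first control $\|(\widehat\Sigma_{AA})^{-1}\|_\infty$. Writing $\widehat\Sigma_{AA} = \Sigma_{AA}[I + (\Sigma_{AA})^{-1}(\widehat\Sigma_{AA} - \Sigma_{AA})]$ and using $\|(\Sigma_{AA})^{-1}(\widehat\Sigma_{AA} - \Sigma_{AA})\|_\infty \le \phi\epsilon < 1$, the Neumann-series estimate $\|(I+E)^{-1}\|_\infty \le 1/(1-\|E\|_\infty)$ yields $\|(\widehat\Sigma_{AA})^{-1}\|_\infty \le \phi/(1-\phi\epsilon)$. Applying the resolvent identity $(\widehat\Sigma_{AA})^{-1} - (\Sigma_{AA})^{-1} = -(\Sigma_{AA})^{-1}(\widehat\Sigma_{AA} - \Sigma_{AA})(\widehat\Sigma_{AA})^{-1}$ together with sub-multiplicativity immediately gives the first inequality. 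For the second, I would decompose
\[
\widehat\Sigma_{A^cA}(\widehat\Sigma_{AA})^{-1} - \Sigma_{A^cA}(\Sigma_{AA})^{-1} = E_1(\Sigma_{AA})^{-1} - (\Sigma_{A^cA}+E_1)(\Sigma_{AA})^{-1} E_2 (\widehat\Sigma_{AA})^{-1},
\]
where $E_1 = \widehat\Sigma_{A^cA} - \Sigma_{A^cA}$ and $E_2 = \widehat\Sigma_{AA} - \Sigma_{AA}$. Bounding the first term by $\epsilon\phi$ and the second by $(\kappa + \epsilon\phi)\cdot\epsilon\cdot\phi/(1-\phi\epsilon)$, the common factor $\epsilon\phi/(1-\phi\epsilon)$ collapses the sum to $\epsilon\phi(1-\phi\epsilon+\kappa+\phi\epsilon)/(1-\phi\epsilon) = \epsilon\phi(1+\kappa)/(1-\phi\epsilon)$, as claimed.

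Part 3 is a direct transcription of the Part 2 argument with $A$ replaced by $A_1$, $\phi$ by $\phi_1$, and $\kappa$ by $\kappa_i$ for $i \in \{2,3,4\}$. The only observation needed is that $A_1 \subset A$, so the event $E_\Sigma(\epsilon)$ automatically controls all columns indexed by $A_1$; hence the row-sum bounds of Part 1 transfer with $|A_1| \le d$ causing no trouble. I do not foresee any real obstacle: everything reduces to careful application of the resolvent identity plus algebraic simplification, matching the authors' remark that the argument follows Lemma A2 of Mai, Zou, and Yuan (2012).
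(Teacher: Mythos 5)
Your proposal is correct and takes essentially the same route as the paper, which simply defers Parts 2 and 3 to the perturbation argument of Lemma A2 in \cite{mai2012direct} (Neumann series to control $\|(\widehat\Sigma_{AA})^{-1}\|_\infty$, then the resolvent identity and sub-multiplicativity of $\|\cdot\|_\infty$, exactly as you spell out); your decomposition for the second inequality and the algebra collapsing to $\epsilon\phi(1+\kappa)/(1-\phi\epsilon)$ check out. Your one genuine addition is flagging that the bound $\|\widehat M - M\|_\infty < \epsilon$ requires $K-1 \le d$, a condition the paper leaves implicit (it holds whenever the group means are affinely independent, since then $d \ge \mathrm{rank}(\Psi) = K-1$), and noting it explicitly is a worthwhile precision rather than a deviation in method.
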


	In the proofs of Theorems 2 and 3, we first assume $ \mathcal E_{\Sigma}(\epsilon) \cap \mathcal E_{M}(\epsilon) \cap \mathcal E_{ord} $ for an $\epsilon > 0$ and determine the $\epsilon$ to satisfy KKT conditions.  We follow the proof strategy of \cite{gaynanova2016sparseLDA}.

	\subsubsection{Proof of Theorem 2}
       
	\begin{proof}[Proof of 1.]
        For a $\epsilon > 0$, assume the event $ \mathcal E_{\Sigma}(\epsilon) \cap \mathcal E_{M}(\epsilon) \cap \mathcal E_{ord} $.
		From the KKT conditions, $ \widehat Z = [\widehat Z_{A_1}^\top, \widehat Z_{A_2}^\top,  0_{p-d}^\top]^\top $
		becomes the solution of the optimization problem (8) if
		\begin{align}
			\widehat\Sigma_{A_1, :} \widehat Z - \widehat M_{A_1} + \lambda \hat s_{A_1} &= 0, \label{eq: Thm5-1 KKT1} \\
			\widehat\Sigma_{A_2, :} \widehat Z - \widehat M_{A_2} + \lambda\eta \hat s_{A_2} &= 0, \label{eq: Thm5-1 KKT2} \\
			\|\widehat \Sigma_{A_3, :} \widehat Z - \widehat M_{A_3} \|_{\infty, 2} &\le \lambda, \label{eq: Thm5-1 KKT3} \\
			\|\widehat \Sigma_{A_4, :} \widehat Z - \widehat M_{A_4} \|_{\infty, 2} &\le \lambda \eta \label{eq: Thm5-1 KKT4}
		\end{align}	
		where $\hat s = \begin{bmatrix} \hat s_{A_1} \\ \hat s_{A_2}\end{bmatrix} $
		is a subgradient of $ \sum_{j=1}^p \|\widetilde Z_j\|_2 $ at
		$ \widehat Z $ such that for each $ j \in J $,
		$$
		s_{j, :} = \begin{cases}
			\frac{\widetilde Z_j}{\|\widetilde Z_j\|_2}, &\mbox{if }~ \widetilde Z_j \neq 0; \\
			\in \{v \in \mathbb R^{K-1} : \|v\|_2 \le 1 \}, &\mbox{if }~ \widetilde Z_j = 0.
		\end{cases}	
		$$
		Let $ \tilde \Psi_A = (\widehat \Sigma_{AA})^{-1}\widehat M_A $.
		If $ \widehat Z $ satisfies (\ref{eq: Thm5-1 KKT1}) and (\ref{eq: Thm5-1 KKT2}) then we can rewrite $ \widehat Z $ by
		\begin{align*}
			\widehat Z_{A_1}
			&= \tilde \Psi_{A_1} - \lambda (\widehat \Sigma_{AA})^{-1}_{A_1 A_1} \hat s_{A_1}
			- \lambda\eta (\widehat \Sigma_{AA})^{-1}_{A_1 A_2} \hat s_{A_2}, \\
			\widehat Z_{A_2}
			&= \tilde \Psi_{A_2} - \lambda (\widehat \Sigma_{AA})^{-1}_{A_2 A_1} \hat s_{A_1}
			- \lambda\eta (\widehat \Sigma_{AA})^{-1}_{A_2 A_2} \hat s_{A_2}.
		\end{align*}
		Here, $ (\widehat{\Sigma}_{AA})^{-1}_{A_1 A_1} $ denotes
		$ (A_1, A_1) $ submatrix of $ (\widehat{\Sigma}_{AA})^{-1} $, and so on.
		
		First, we want to find a sufficient condition for (\ref{eq: Thm5-1 KKT3}) to hold.
		We divide $ \widehat \Sigma_{A_3, :} \widehat Z - \widehat M_{A_3} $ as follows:
		\begin{align*}
			&\widehat \Sigma_{A_3, :} \widehat Z - \widehat M_{A_3} \\
			&= \widehat \Sigma_{A_3 A_1} \widehat Z_{A_1} + \widehat\Sigma_{A_3 A_2} \widehat Z_{A_2}
			- \widehat M_{A_3} \\
			&= (\widehat \Sigma_{A_3 A_1}\tilde\Psi_{A_1} + \widehat \Sigma_{A_3A_2}\tilde\Psi_{A_2})
			- \lambda(\widehat \Sigma_{A_3 A_1}(\widehat \Sigma_{AA})^{-1}_{A_1 A_1} +
			\widehat \Sigma_{A_3 A_2}(\widehat \Sigma_{AA})^{-1}_{A_2 A_1}) \hat s_{A_1} \\
			&\quad - \lambda\eta(\widehat \Sigma_{A_3 A_1}(\widehat \Sigma_{AA})^{-1}_{A_1 A_2} +
			\widehat \Sigma_{A_3 A_2}(\widehat \Sigma_{AA})^{-1}_{A_2 A_2}) \hat s_{A_2}
			- \widehat M_{A_3} \\
			&= (\widehat \Sigma_{A_3A}\tilde\Psi_{A} - \widehat M_{A_3})
			- (\lambda \widehat \Sigma_{A_3A}(\widehat \Sigma_{AA})^{-1}_{AA_1}\hat s_{A_1} +
			\lambda\eta \widehat \Sigma_{A_3A}(\widehat \Sigma_{AA})^{-1}_{AA_2}\hat s_{A_2}) \\
			&\eqqcolon I_1 - I_2.
		\end{align*}
		
		We now find upper bounds for each of $\|I_1\|_{\infty, 2}$, and $\|I_2\|_{\infty, 2}$.
		We note that Lemma \ref{lemma: dividing}
		with $ (V, \Upsilon, \Delta) = (\Psi, \Sigma, M) $
		implies that
		$ \Sigma_{A^c A} (\Sigma_{AA})^{-1}M_A = M_{A^c} $.
		Since $ A^c = A_3 \cup A_4 $, it also holds that
		$ \Sigma_{A_3 A} (\Sigma_{AA})^{-1}M_A = M_{A_3} $.
		Then, we have
		\begin{align*}
			I_1
			&= \widehat \Sigma_{A_3 A} (\widehat \Sigma_{AA})^{-1} \widehat{M}_A - \widehat M_{A_3} \\
			&= (\widehat \Sigma_{A_3 A} (\widehat \Sigma_{AA})^{-1} - \Sigma_{A_3 A}(\Sigma_{AA})^{-1})
			(\widehat M_A - M_A + M_A) \\
			&\quad + \Sigma_{A_3A}(\Sigma_{AA})^{-1}(\widehat M_A - M_A) -(\widehat M_{A_3} - M_{A_3}).
		\end{align*}
        Note that we assume the event $\mathcal E_M(\epsilon) \cap \mathcal E_{\Sigma}(\epsilon) \cap \mathcal E_{ord}$
        for some $\epsilon > 0$.
		From Lemma \ref{lem:norms} and the first and the second parts of Lemma \ref{lem: concentrations}, we have
		\begin{equation} \label{eq:I-1}
		\|I_1\|_{\infty, 2} \le \frac{\phi \epsilon(1+\kappa)}{1 - \phi \epsilon}
		(\epsilon + \delta) + \kappa \epsilon + \epsilon
		= \frac{\epsilon(1+\kappa)(1+\phi\delta)}{1 - \phi\epsilon}.
		\end{equation}
		To bound $ \|I_2\|_{\infty, 2}$, observe that
		$$
		I_2 = \begin{bmatrix}
			\widehat \Sigma_{A_3 A} (\widehat \Sigma_{AA})^{-1}_{A A_1}
			& \widehat \Sigma_{A_3 A} (\widehat \Sigma_{AA})^{-1}_{A A_2}
		\end{bmatrix}
		\begin{bmatrix}
			\lambda \hat s_{A_1} \\
			\lambda\eta \hat s_{A_2}
		\end{bmatrix}
		= \widehat \Sigma_{A_3 A} (\widehat \Sigma_{AA})^{-1}
		\begin{bmatrix}
			\lambda \hat s_{A_1} \\
			\lambda\eta \hat s_{A_2}
		\end{bmatrix}.
		$$
		Again from Lemmas \ref{lem:norms} and \ref{lem: concentrations}, we have
		\begin{equation} \label{eq:I-2}
		\|I_2\|_{\infty, 2}
		\le \left(
		\frac{\phi\epsilon(1+\kappa)}{1 - \phi \epsilon} + \kappa
		\right) \lambda \eta
		= \frac{(\phi\epsilon+\kappa)\lambda\eta}{1 - \phi \epsilon}.
		\end{equation}
		By summing up, (\ref{eq:I-1}) and (\ref{eq:I-2}) yield
		$$
		\| \widehat \Sigma_{A_3, :} \widehat Z - \widehat M_{A_3} \|_{\infty, 2}
		\le
		\frac{\epsilon(1+\kappa)(1+\phi\delta)}{1 - \phi\epsilon}
		+ \frac{(\phi\epsilon+\kappa)\lambda\eta}{1 - \phi \epsilon}.
		$$
		However, note that for $ 0 \le \epsilon < \frac{1}{\phi} $,
		\begin{align*}
			&\frac{\epsilon(1+\kappa)(1+\phi\delta)}{1 - \phi\epsilon}
			+ \frac{(\phi\epsilon+\kappa)\lambda\eta}{1 - \phi \epsilon} \le \lambda \\
			&\Longleftrightarrow
			\epsilon(\lambda\phi(1 + \eta) + (1 + \kappa)(1 + \phi\delta)) \le \lambda(1-\eta\kappa).
		\end{align*}	
		Thus, for $ \eta < \frac{1}{\kappa} $, (\ref{eq: Thm5-1 KKT3}) is satisfied if
		$$
		\epsilon < \frac{\lambda(1 - \eta \kappa)}{\lambda\phi(1 + \eta) + (1 + \kappa)(1 + \phi\delta)}.
		$$
		
		By arguing the similar calculation steps conducted just before, one can easily see that
		(\ref{eq: Thm5-1 KKT4}) holds if
		$$
		\epsilon < \frac{\lambda\eta(1 - \kappa)}{2\phi\lambda\eta + (1 + \kappa)(1 + \phi\delta)}.
		$$
		However,
		$$
		\frac{\lambda(1 - \eta \kappa)}{\lambda\phi(1 + \eta) + (1 + \kappa)(1 + \phi\delta)} \le
		\frac{\lambda\eta(1 - \kappa)}{2\phi\lambda\eta + (1 + \kappa)(1 + \phi\delta)}
		$$
		since $ \eta \ge 1 $.
		So, for $ \lambda > 0 $ and $1 \le  \eta < 1/\kappa $, it is enough to satisfy that
		\begin{equation} \label{eq:eps-ub}
			\epsilon < \min \left\{
			\frac{1}{\phi},~
			\frac{\lambda(1 - \eta \kappa)}{\lambda\phi(1 + \eta) + (1 + \kappa)(1 + \phi\delta)}
			\right\}
		\end{equation}
		to guarantee $ \widehat Z $ to be a solution.

		Next, we will determine $ \epsilon $.
		One can check that if
		$
			\lambda < (1+\kappa)(\phi^{-1} + \delta)/2
		$
        and $\eta < \frac{1+\kappa^{-1}}{2}$,
		then
		\begin{align*}
			\frac{1}{4} \cdot \frac{\lambda(1-\kappa)}{(1+\kappa)(1+\phi\delta)} 
            < \frac{\lambda(1 - \eta \kappa)}{\lambda\phi(1 + \eta) + (1 + \kappa)(1 + \phi\delta)},
		\end{align*}
        since $1-\eta\kappa \ge (1-\kappa)/2$.
		Furthermore, if $ \lambda < 2\phi^{-1} $ then we get 
		\begin{align*}
   	    \frac{1}{4} \cdot \frac{\lambda(1-\kappa)}{(1+\kappa)(1+\phi\delta)} < \frac{1}{2\phi}.
		\end{align*}
		Thus, if we take $ \epsilon = \frac{1}{4} \frac{\lambda(1-\kappa)}{(1+\kappa)(1+\phi\delta)} $ when 
		$
			\lambda < \min( (1+\kappa)(\phi^{-1} + \delta)/2, 2\phi^{-1})
		$, and
		$ 1 \le \eta < (1 + \kappa^{-1})/2 $ then $ \widehat{Z} $ becomes the solution.
	\end{proof}
	
	\begin{proof}[Proof of 2.]
        For a $\epsilon > 0$, assume the event $ \mathcal E_{\Sigma}(\epsilon) \cap \mathcal E_{M}(\epsilon) \cap \mathcal E_{ord} $.
		From the KKT conditions, $ \widehat Z = [\widehat Z_{A_1}^\top,  0_{p-|A_1|}^\top]^\top $ becomes a solution of
		the optimization problem (8) if
		\begin{align}
			\widehat\Sigma_{A_1, :} \widehat Z - \widehat M_{A_1} + \lambda \hat s_{A_1} &= 0, \label{eq: Thm5 KKT1} \\
			\| \widehat \Sigma_{A_2, :} \widehat Z - \widehat M_{A_2}\|_{\infty, 2} &\le \lambda \eta , \label{eq: Thm5 KKT2} \\
			\|\widehat \Sigma_{A_3, :} \widehat Z - \widehat M_{A_3} \|_{\infty, 2} &\le \lambda, \label{eq: Thm5 KKT3} \\
			\|\widehat \Sigma_{A_4, :} \widehat Z - \widehat M_{A_4} \|_{\infty, 2} &\le \lambda \eta \label{eq: Thm5 KKT4}
		\end{align}	
		are satisfied.
		
		First, we want to find sufficient conditions that (\ref{eq: Thm5 KKT2}) holds.
		From (\ref{eq: Thm5 KKT1}), we have
		$
		\widehat Z_{A_1}
		= (\widehat \Sigma_{A_1 A_1})^{-1}(\widehat M_{A_1} - \lambda \hat{s}_{A_1})
		$
		since $ \widehat Z_{A_{1}^{c}} = 0 $.
		Note that $ \Sigma_{A_1 A_2} = 0 $.
		So, we have $ \kappa_2 = 0 $ and $ \phi_1 \le \phi $.
		From the first and the third parts of Lemma \ref{lem: concentrations}, we have
		\begin{align*}
			&\|\widehat \Sigma_{A_2 A_1} \widehat Z_{A_1} - \widehat M_{A_2}\|_{\infty, 2} \\
			&\le \|\widehat \Sigma_{A_2 A_1} (\widehat \Sigma_{A_1 A_1})^{-1}
			- \Sigma_{A_2 A_1} (\Sigma_{A_1 A_1})^{-1}\|_{\infty}
			(\|\widehat M_{A_1} - M_{A_1}\|_{\infty} + \|M_{A_1}\|_{\infty, 2} + \lambda) \\
			&\quad + \|\widehat{M}_{A_2} - M_{A_2}\|_{\infty} + \|M_{A_2}\|_{\infty, 2} \\
			&\le
			\frac{\phi_1\epsilon}{1 - \phi_1 \epsilon} (\epsilon + \delta_1 + \lambda)
			+ \epsilon + \delta_2 \\
			&\le \frac{\phi\epsilon}{1 - \phi \epsilon} (\epsilon + \delta_1 + \lambda)
			+ \epsilon + \delta_2,
		\end{align*}
		if $ \phi \epsilon < 1 $.
		Assuming so, it is enough to bound the last term of previous inequalities by
		$ \lambda\eta $ to guarantee (\ref{eq: Thm5 KKT2}).
		Then, we have
		\begin{align*}
			&\quad \frac{\phi\epsilon}{1 - \phi \epsilon} (\epsilon + \delta_1 + \lambda)
			+ \epsilon + \delta_2 \le \lambda \eta  \\
			&\Longleftrightarrow
			\epsilon(1 + \phi(\delta_1 - \delta_2 + \lambda + \lambda\eta))
			\le \lambda\eta - \delta_2.
		\end{align*}
		To make both sides of the inequality positive, we should choose sufficiently
		large $ \eta \ge 1 $ such that $\lambda\eta > \delta_2.$
		In summary, if $ \eta > \frac{\delta_2}{\lambda} $ and
		\begin{equation} \label{eq:proof-thm2-2-eps-bound1}
		\epsilon < \frac{\lambda\eta - \delta_2}
		{1 + \phi(\delta_1 - \delta_2 + \lambda + \lambda\eta)} \wedge \frac{1}{\phi}
		\end{equation}
		then (\ref{eq: Thm5 KKT2}) holds.
		
		In the next step, we characterize sufficient conditions that (\ref{eq: Thm5 KKT3}) holds.
		By repeating similar calculation, we get
		\begin{align*}
		&\widehat \Sigma_{A_3,:} \widehat Z - \widehat M_{A_3} \\
		&= (\widehat \Sigma_{A_3 A_1} (\widehat \Sigma_{A_1A_1})^{-1} - \Sigma_{A_3 A_1} (\Sigma_{A_1A_1})^{-1})
		(\widehat M_{A_1} -  M_{A_1} +  M_{A_1} - \lambda \hat s_{A_1})  \\
		&~+ \Sigma_{A_3 A_1} \Sigma_{A_1 A_1}^{-1}(\widehat M_{A_1}-M_{A_1}-\lambda\hat s_{A_1})
		+ M_{A_3} - \widehat M_{A_3}.
		\end{align*}
		Here, we use the fact that
		$$
			M_{A_3} = \Sigma_{A_3 A} (\Sigma_{AA})^{-1}M_{A}
			= \Sigma_{A_3 A_1} (\Sigma_{A_1A_1})^{-1}M_{A_1},
		$$
		since $ \Sigma_{A_2 A_2^c} =  0 $.
		Then,
		\begin{align*}
			\|\widehat \Sigma_{A_3,:} \widehat Z - \widehat M_{A_3}\|_{\infty, 2}
			\le \frac{\phi\epsilon(1+\kappa)}{1 - \phi \epsilon} (\epsilon + \delta_1 + \lambda) + \kappa(\epsilon + \lambda) + \epsilon.
		\end{align*}
		Here, we use the fact that $ \phi_1 \le \phi $ and $ \kappa_3 \le \kappa $.
		For $ \epsilon < \frac{1}{\phi} $,
		it holds that
		$$
			\frac{\phi\epsilon(1+\kappa)}{1 - \phi \epsilon} (\epsilon + \delta_1 + \lambda) + \kappa(\epsilon + \lambda) + \epsilon
			\le \lambda
		$$
		which is equivalent to
		$$
			\epsilon \le \frac{\lambda(1-\kappa)}{2\lambda \phi + (1+\phi\delta_1)(1+\kappa)}.
		$$
		In a similar manner, for $ \epsilon < \frac{1}{\phi} $,
		(\ref{eq: Thm5 KKT4}) holds when
		$$
			\epsilon \le \frac{\lambda(\eta-\kappa)}{\lambda(1+\eta) \phi + (1+\phi\delta_1)(1+\kappa)}.
		$$
		From the assumption $ \kappa < 1 $ and the choice of $ \eta \ge 1 $, we get
		\begin{equation} \label{eq:proof-thm2-2-eps-bound2}
			\frac{\lambda(1-\kappa)}{2\lambda \phi + (1+\phi\delta_1)(1+\kappa)}
			\le
			\frac{\lambda(\eta-\kappa)}{\lambda(1+\eta) \phi + (1+\phi\delta_1)(1+\kappa)}.
		\end{equation}
		
		So, for $ \lambda > 0 $ and $ \eta > 1 \vee \frac{\delta_2}{\lambda} $,
		it is enough to satisfy that
		$$
			\epsilon < \min\left\{
			\frac{1}{\phi},~
			\frac{\lambda\eta - \delta_2}
			{1 + \phi(\delta_1 - \delta_2 + \lambda + \lambda\eta)},~
			\frac{\lambda(1-\kappa)}{2\lambda \phi + (1+\phi\delta_1)(1+\kappa)}
			\right\}
		$$
		to guarantee $ \widehat Z $ to be a solution.
		The upper bound of $\epsilon$ comes from
		(\ref{eq:proof-thm2-2-eps-bound1}) and (\ref{eq:proof-thm2-2-eps-bound2}).

		Finally, we will determine $ \epsilon $ as in the previous proof.
		If $ \lambda < \frac{(1+\phi\delta_1)(1+\kappa)}{2\phi}$, then 
		$$
		 	\frac{1}{2}\cdot \frac{\lambda(1-\kappa)}{(1+\phi\delta_1)(1+\kappa)}
			< \frac{\lambda(1-\kappa)}{2\lambda \phi + (1+\phi\delta_1)(1+\kappa)}.
		$$
		On the other hand, if $\lambda < \frac{1}{\phi}$ and $ \beta < \lambda\eta - \delta_2 < \beta'$ for some $\beta, \beta' > 0$, it holds that
		$$
		\frac{\beta}{2+\phi(\beta'+\delta_1)} < \frac{\lambda\eta - \delta_2}
			{1 + \phi(\delta_1 - \delta_2 + \lambda + \lambda\eta)}.
		$$
		Thus, if 
		$$
			\lambda < \frac{2\beta(1+\phi\delta_1)(1+\kappa)}{(2+\phi(\beta' + \delta_1))(1-\kappa)}
		$$
		then 
		$$
			\frac{1}{2}\cdot \frac{\lambda(1-\kappa)}{(1+\phi\delta_1)(1+\kappa)} <  \frac{\lambda\eta - \delta_2}
			{1 + \phi(\delta_1 - \delta_2 + \lambda + \lambda\eta)}.
		$$
        If we take $ \epsilon = \frac{1}{2}\cdot \frac{\lambda(1-\kappa)}{(1+\phi\delta_1)(1+\kappa)} $ when $ \lambda < \min(\frac{2\beta(1+\phi\delta_1)(1+\kappa)}{(2+\phi(\beta' + \delta_1))(1-\kappa)}, \frac{1}{2\phi}) $ and $ \beta < \lambda\eta - \delta_2 < \beta'$ then $ \widehat Z $ becomes a solution.
	\end{proof}

	\subsubsection{Proof of Theorem 3}
        In the proof, we assume the event $ \mathcal E_{\Sigma}(\epsilon) \cap \mathcal E_{M}(\epsilon) \cap \mathcal E_{ord} $, as in the proof of Theorem 2. 
	\begin{proof}[Proof of 1]
	From the proof of the first part of Theorem 2, we have
		$ \widehat Z_A = (\widehat \Sigma_{AA})^{-1}(\widehat M_A - \Lambda_A \hat s_A) $ and
		$\widehat Z_{A^c} =0$,
		where $ \Lambda_A = \mbox{diag}(\lambda I_{A_1}, \lambda\eta I_{A_2}) $.
		Also, from Lemma \ref{lemma: dividing},
		$\Psi_A = (\Sigma_{AA})^{-1}M_A$.
		Then, we have
		\begin{align*}
			\widehat Z_A - \Psi_A
			&= (\widehat \Sigma_{AA})^{-1}(\widehat M_A - \Lambda_A \hat s_A) - (\Sigma_{AA})^{-1}M_A \\
			&= \left( (\widehat \Sigma_{AA})^{-1} - (\Sigma_{AA})^{-1} \right)
			(\widehat M_A - M_A + M_A - \Lambda_A \hat s_A) \\
			&~ +(\Sigma_{AA})^{-1} (\widehat M_A - M_A - \Lambda_A \hat s_A).
		\end{align*}
		Lemma \ref{lem: concentrations} implies that
		\begin{align*}
			\|\widehat Z_A - \Psi_A \|_{\infty, 2}
			&\le \frac{\epsilon \phi^2}{1 - \phi \epsilon} (\epsilon + \delta + \lambda \eta)
			+ \phi(\epsilon + \lambda\eta) \\
			&= \frac{\phi \epsilon}{1-\phi\epsilon} (1+\phi\delta) + \frac{\phi\lambda\eta}{1-\phi\epsilon}.
		\end{align*}
		If
          \begin{equation}\label{eq:thm3_eps_bd}
              \epsilon < \frac{1}{2\phi} \wedge \frac{2\lambda\eta}{1+\phi\delta},
          \end{equation}
        then it follows that
		$$
		\frac{\phi \epsilon}{1-\phi\epsilon} (1+\phi\delta) + \frac{\phi\lambda\eta}
		{1-\phi\epsilon}
		< 2\phi\frac{2\lambda\eta}{1+\phi\delta}(1+\phi\delta)  + 2\phi\lambda\eta
		= 6\phi\lambda\eta.
		$$
        From the proof of the first part of Theorem 2, for $\lambda < \min( (1+\kappa)(\phi^{-1} + \delta)/2, 2\phi^{-1})$ and $ 1 \le \eta < (1 + \kappa^{-1})/2 $, $$ \epsilon = \frac{1}{4} \frac{\lambda(1-\kappa)}{(1+\kappa)(1+\phi\delta)} $$
        satisfies \eqref{eq:thm3_eps_bd}. This completes the proof.
	\end{proof}

	\begin{proof}[Proof of 2]
		From the proof of the second part of Theorem 2, we have
		$$\widehat Z_{A_1} = (\widehat \Sigma_{A_1 A_1})^{-1}(\widehat M_{A_1} - \lambda
		\hat s_{A_1})$$
		and $\Psi_{A_1} = (\Sigma_{A_1 A_1})^{-1}M_{A_1}$ since we assume that
		$\Sigma_{A_1 A_2} = 0$.
		Then, we have
		\begin{align*}
			\widehat Z_{A_1} - \Psi_{A_1}
			&= \left( (\widehat \Sigma_{A_1A_1})^{-1} - (\Sigma_{A_1A_1})^{-1} \right)
			(\widehat M_{A_1} - M_{A_1} + M_{A_1} - \lambda \hat s_{A_1}) \\
			&~ +(\Sigma_{A_1A_1})^{-1} (\widehat M_{A_1} - M_{A_1} - \lambda \hat s_{A_1}).
		\end{align*}
		So, by Lemma \ref{lem: concentrations} and $ \phi_1 \le \phi $, we have
		\begin{align*}
			\|\widehat Z_{A_1} - \Psi_{A_1}\|_{\infty, 2}
			\le \frac{\phi\epsilon}{1-\phi\epsilon}(1+\phi \delta_1)
			+ \frac{\phi \lambda}{1 - \phi \epsilon}.
		\end{align*}
		If $\epsilon < \frac{1}{2\phi} \wedge \frac{2\lambda}{1+\phi\delta_1}$, then
		$$
		\frac{\phi\epsilon}{1-\phi\epsilon}(1+\phi \delta_1)
		+ \frac{\phi \lambda}{1 - \phi \epsilon}
		< 4\phi\lambda + 2\phi\lambda = 6\phi\lambda.
		$$
		As in the previous proof, the choice of $ \epsilon $ in the proof of the second part of Theorem 2 satisfies the condition, and the proof is done.
	\end{proof}
	
	\subsubsection{Proof of Corollaries}
        \textbf{Proof of Corollaries 4 and 5}
	\begin{proof}
		We only prove the Corollary 4. 
	    To make sense that \eqref{eq: proof-thm1-4} in the proof of Theorem 1 with varying $\delta_{\min}$, $N$ must grows fast to satisfy $\sqrt N \wedge \delta_{\min} \to \infty$ as $N \to \infty$.
	    Actually, this holds from the conditions (AC1) and (AC3). 
		Thus, it is enough to show that (AC1)-(AC3) imply $\gamma \to 0$ as $N \to \infty$.
		The following simple lemma is the key ingredient.

		\begin{lem}\label{lem: asymp}
		Suppose that $a_n, b_n \to \infty$. If $\frac{\log a_n}{b_n} \to 0$ then
		$$a_n\exp(-Cb_n) \to 0.$$
		\end{lem}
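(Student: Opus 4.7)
The plan is to reduce the claim to a single exponential of a sequence that clearly diverges to $-\infty$. The key algebraic manipulation is
\[
a_n\exp(-Cb_n) \;=\; \exp\bigl(\log a_n - Cb_n\bigr) \;=\; \exp\!\left(b_n\!\left(\tfrac{\log a_n}{b_n} - C\right)\right),
\]
so the task is just to show that the exponent tends to $-\infty$.

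First, I would observe that the hypothesis $\frac{\log a_n}{b_n}\to 0$ implies that for all sufficiently large $n$ we have $\frac{\log a_n}{b_n} - C \le -\frac{C}{2}$ (assuming, as is implicit in the application, that $C>0$). Plugging this into the displayed identity gives
\[
a_n \exp(-Cb_n) \;\le\; \exp\!\left(-\tfrac{C}{2}\,b_n\right)
\]
for all large $n$. Since $b_n \to \infty$, the right-hand side converges to $0$, and the conclusion follows.

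The argument is essentially one line; there is no real obstacle, and no additional structural assumption on $a_n, b_n$ beyond what is stated is needed. The only subtlety worth flagging is the implicit sign condition $C > 0$: without it the statement is false (take $C \le 0$ and $a_n = n$, $b_n = n$). In the applications invoked by the surrounding text (controlling terms like $pd\exp(-CN\lambda_N^2/d^2)$ under condition (AC2)), the constant $C$ is always strictly positive, so this interpretation is clearly the intended one.
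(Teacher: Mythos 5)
Your proof is correct: the paper states this lemma without proof (treating it as immediate), and your one-line reduction $a_n\exp(-Cb_n)=\exp\bigl(b_n(\tfrac{\log a_n}{b_n}-C)\bigr)$ together with the observation that the exponent is eventually bounded above by $-\tfrac{C}{2}b_n\to-\infty$ is exactly the argument the authors implicitly rely on. Your remark that $C>0$ is needed is a fair point and consistent with the paper's convention, since $C$ there always denotes a strictly positive generic constant.
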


		From (AC1) and (AC2),
		$
		\frac{\log(pd)}{N\lambda_N^2/d^2} \to 0.
		$
		and by Lemma \ref{lem: asymp}, we have
		$
		pd\exp(-CN\lambda_N^2/d^2) \to 0
		$
		which shows that the first term of $ \gamma$ tends to $ 0 $.
		Similarly, the second and third terms of $ \gamma $ converge to $ 0 $ under the asymptotic conditions.

	\end{proof}

        \noindent
        \textbf{Proof of Corollary 6}
        \begin{proof}
            The first part of the corollary can be deduced from the results of \cite{gaynanova2016sparseLDA} and \cite{mai2019multiclass} for consistent variable selection of sparse multiclass LDA basis and then applying Theorem 1 for the screening procedure.

            For the second part, we use the result of \cite{gaynanova2015optimal} for the exact variable selection property of MGSDA. Under the asymptotic conditions of the corollary, $\widehat{D}(\widehat Z_{\lambda}) = J_{disc}$ holds with a high probability, and then applying Theorem 1 for the screening procedure completes the proof.
        \end{proof}

\section{Block-coordinate descent algorithm for proposed methods} \label{appendix:alg}
The convex minimization problem (6) of the main paper can be efficiently  solved by a block-coordinate descent algorithm, which is guaranteed to converge to the global minimum \citep{tseng1993desc}.
Write $ Z = [z_1, \dots, z_{K-1}]$, and
$ \widehat M = [\hat m_1, \dots, \hat m_d] = [\widetilde M_1, \dots, \widetilde M_p]^\top$
and recall that $ \widetilde Z_j $ is the $ j $th row vector of $ Z $.
We write $z_{ij}$, $\hat m_{ij}$ and $ \hat \sigma_{ij}$ for the  $ (i,j) $th element of
$ Z$, $\widehat M $ and $ \widehat \Sigma $.
Then the objective function of (6) becomes
\begin{align*}
    &\mbox{trace}\left(
    \frac{1}{2} Z^\top \widehat \Sigma Z - Z^\top \widehat M
    \right)	+ \lambda \sum_{j=1}^{p} \eta^{1-w_i} \|\widetilde Z_j\|_2 \\
    &= \sum_{r=1}^{K-1} \left(
    \frac{1}{2} z_r^\top\widehat \Sigma z_r - z_r^\top \hat m_r
    \right)	+ \lambda \sum_{j=1}^{p} \eta^{1-w_j} \|\widetilde Z_j\|_2\\
    &= \sum_{r=1}^{K-1} \left(
    \frac{1}{2} \sum_{i=1}^{p} \sum_{k=1}^{p} z_{ir} \hat\sigma_{ik} z_{kr}
    - \sum_{i=1}^{p} z_{ir} \hat m_{ir}
    \right) + \lambda \sum_{j=1}^{p} \eta^{1-w_j} \|\widetilde Z_j\|_2\\
    &= \frac{1}{2} \sum_{i=1}^{p} \sum_{k=1}^{p} \hat\sigma_{ik} \widetilde Z_i^T \widetilde Z_k
    - \sum_{i=1}^{p} \widetilde Z_i^\top \widetilde M_i
    + \lambda \sum_{j=1}^{p} \eta^{1-w_j} \|\widetilde Z_j\|_2 \\
    &\eqqcolon f(\widetilde Z_1, \dots, \widetilde Z_p).
\end{align*}
Thus, the convex problem (6) is block separable with respect to the rows of $ Z $.
Our algorithm updates rows of $ Z $ iteratively in a cyclic manner.
At the $ (t+1) $th iteration, the algorithm updates $ \widetilde Z_i^{(t+1)} $ by
\begin{equation*}
    \widetilde Z_i^{(t+1)} = \underset{\widetilde Z \in \mathbb{R}^{K-1}}{\arg\min}
    f(\widetilde Z_1^{(t+1)}, \dots, \widetilde Z_{i-1}^{(t+1)}, \widetilde Z,
    \widetilde Z_{i+1}^{(t)}, \dots, \widetilde Z_{p}^{(t)}),
\end{equation*}
for $ i = 1, \dots, p $ consecutively.
More precisely, $ \widetilde Z_i^{(t+1)} $ is the solution of
\begin{equation}\label{eq:subproblem}
    \min_{\widetilde Z \in \mathbb{R}^{K-1}} \left\{
    \frac{1}{2} \hat\sigma_{ii} \widetilde Z^\top\widetilde Z - a_i^\top \widetilde Z
    + \lambda \eta^{1-w_i} \|\widetilde{Z}\|_2
    \right\}
\end{equation}
where $ a_i = \widetilde M_i- \sum_{j \neq i} \hat\sigma_{ij} \widetilde Z_j^{(t')} $
with $ t' = t+1 $ if $ j < i $ and $ t' = t $ if $ j > i $.
The solution of (\ref{eq:subproblem}) is given by
\[
\widetilde Z_i^{(t+1)} = \frac{1}{\hat\sigma_{ii}}
\left(
1 - \frac{\lambda \eta^{1-w_i}}{\|a_i\|_2}
\right)_+ a_i
\]
where $ (x)_+ = \max(x, 0) $.

    \begin{remark*}
        Note that both of our proposed methods, SOBL and OSBL, as well as existing approaches of SLDA and FWOC, use the block-coordinate descent algorithm to obtain their sparse basis. Thus, the computing times required for each method depend on the parameter tuning procedures (as well as the computation of ordinal weights). In our numerical studies, SOBL (including the two-step tuning procedure) is on average two times slower than SLDA (including the cross-validation), but is slightly faster than FWOC. 
        The computation times for OSBL are comparable to SLDA.
    \end{remark*}

\section{Simulation settings for Section 3.5} 
In Section 3.5 of the main paper, we present a numerical comparison of the ordinal weights, based on Spearman's rank correlation, Kendall's $ \tau $, the $t$-test-based trend test, and the two-step procedure.
To generate a simulation data, we consider $ K=4 $ groups with group means
$$  
[\mu_1, \mu_2, \mu_3, \mu_4] =
0.5 \times
\left[\begin{array}{ccccc:ccccc:c}
	0 & 0 & 0 & 0 & 0 & 0  & 0 & 6 & 6 & 2 & \mathbf 0^{\top}_{p-10}\\
	2 & 1 & 1 & 3 & 2 & -2 & 4 & 3 & 0 & 0 & \mathbf 0^{\top}_{p-10}\\
	4 & 3 & 4 & 5 & 3 & -4 & 2 & 0 & 3 & 6 & \mathbf 0^{\top}_{p-10}\\
	6 & 6 & 6 & 6 & 6 & 2  & 6 & 5 & 5 & 4 & \mathbf 0^{\top}_{p-10}
\end{array}\right]^\top
$$
and common variance $ \Sigma_{ij} = 0.6^{|i-j|} $, where $ p = 200 $.
We assume Gaussian model, so $ X\:|\:Y = j \sim N(\mu_j, \Sigma) $.
We generate $ 50 $ random samples per group and calculate ordinal weights. The ordinal weights, averaged over 50 repetitions, are shown in
Figure 2 of the main paper.

\section{Biological findings in real data analysis}

Here, we briefly investigate the variables selected by our methods in the glioma and ALL  data analyses. 

\begin{figure}
	\begin{center}
		\includegraphics[width=\textwidth]{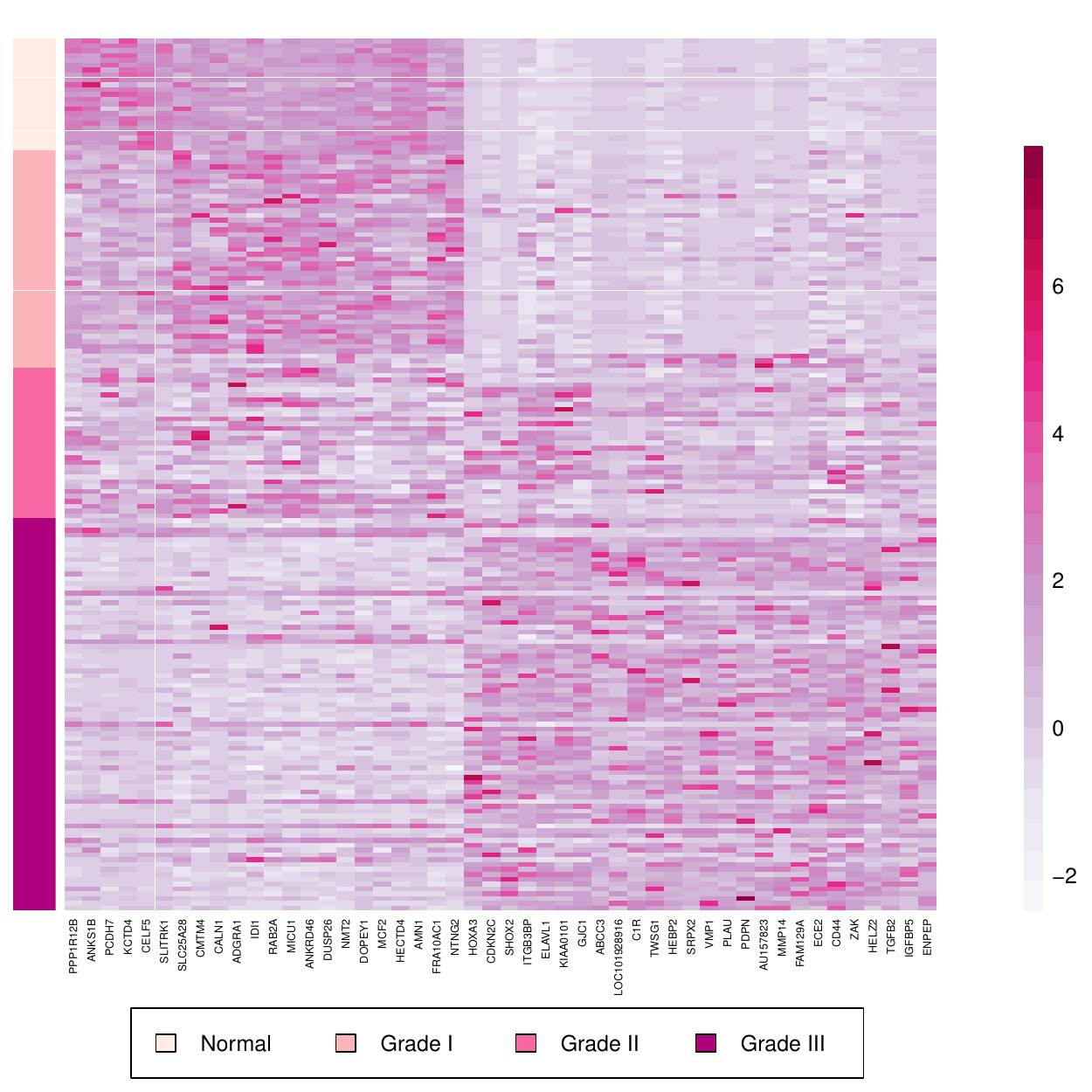}
		\caption{Heatmap of selected variables of Glioma dataset. Each column corresponds to one of the selected genes, and each row denotes gene expression levels of an observation.}
		\label{fig:heatmap-glioma}
	\end{center}
\end{figure}
\begin{figure}
	\begin{center}
		\includegraphics[width=\textwidth]{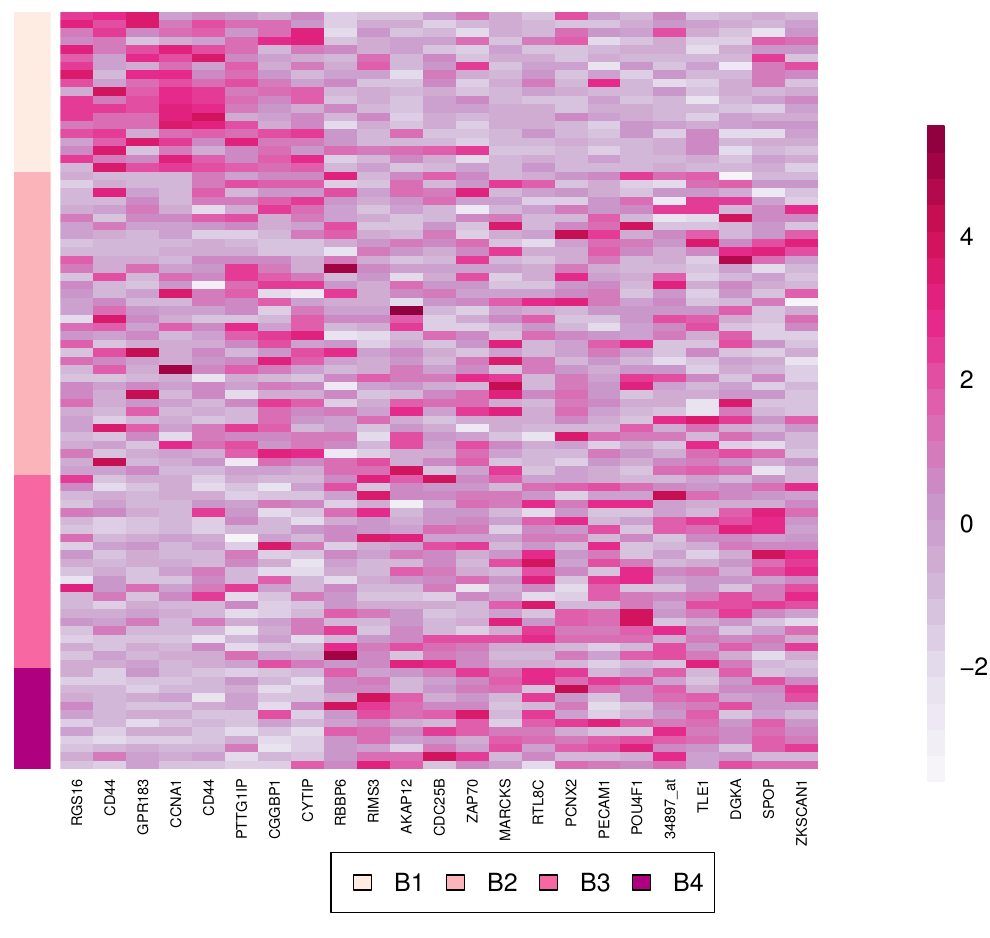}
		\caption{Heatmap of selected variables of ALL dataset. Each column corresponds to one of the selected genes, and each row denotes gene expression levels of an observation.}
		\label{fig:heatmap-all}
	\end{center}
\end{figure}

Figures \ref{fig:heatmap-glioma} and \ref{fig:heatmap-all} show the heatmaps of the gene expression level of the selected variable for each dataset. For Glioma data, two clusters of genes exist. Genes from PPP1R12B to NTNG2 are one cluster, and the other cluster consists of genes from HOXA3 to ENPEP. Genes in the first cluster have higher expression levels in Normal and Grade I patients, which decreases as the group varies from Normal to Grade III. Conversely, genes in the second cluster have higher expression levels at Grade II and Grade III and have low levels at Normal and Grade I. 
For ALL dataset, the pattern of the heatmap is not as clear as the case of the glioma dataset, but it has a similar tendency. For genes RGS16 to CYTIP, expression levels decrease as the label increases (from $B_1$ to $B_4$). For the rest of the genes, RBBP6 to ZKSCAN1, expression levels are increasing as the group label increases.

To decide on important variables among the selected variables, we plot Kendall's rank correlation between each variable and gene expression level; see Figures \ref{fig:kendall-glioma} and \ref{fig:kendall-all}. Though the all selected variables have significantly high correlation value, we only focus on few variables here. Below discussion shows that our proposed methods well finds biologically important variables and may find potential biomarkers which does not be investigated.

\begin{figure}
	\begin{center}
		\includegraphics[width=\textwidth]{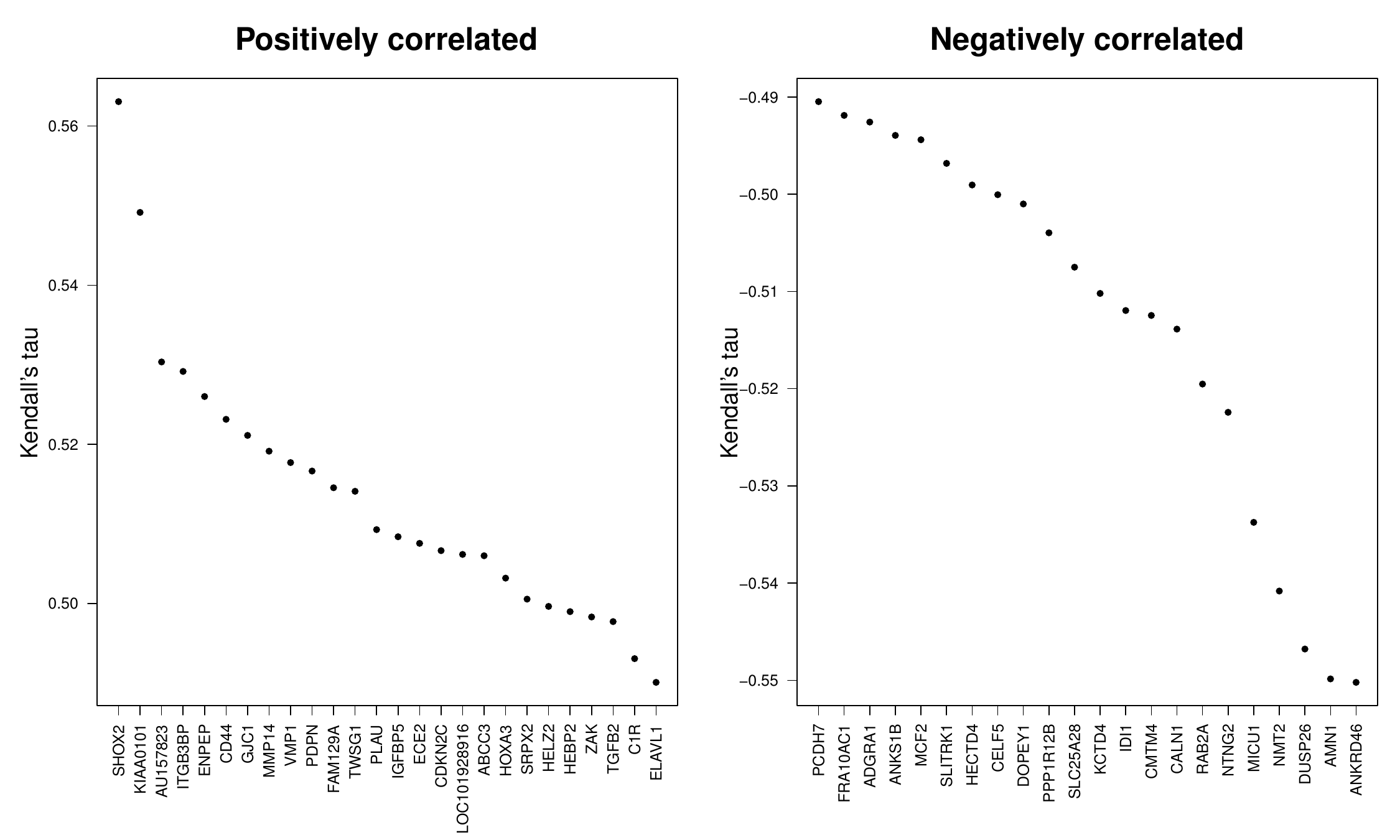}
		\caption{Kendall's rank correlation plots of selected variables in Glioma data analysis.}
		\label{fig:kendall-glioma}
	\end{center}
\end{figure}
\begin{figure}
	\begin{center}
		\includegraphics[width=\textwidth]{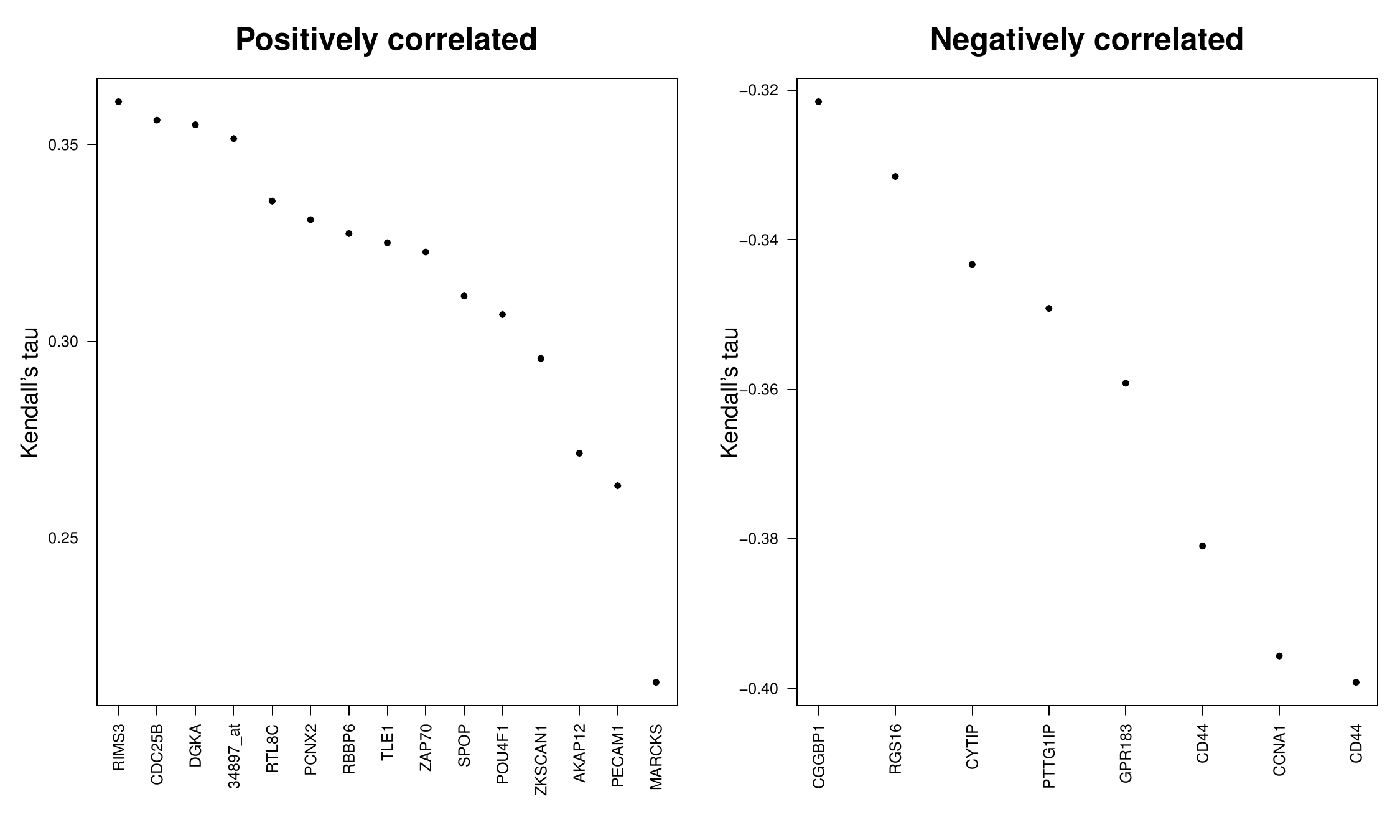}
		\caption{Kendall's rank correlation plots of selected variables in ALL data analysis.}
		\label{fig:kendall-all}
	\end{center}
\end{figure}

\subsection{Selected genes from Glioma data analysis}
In Glioma data analysis, the most positively correlated variable is SHOX2. (See the left panel of Figure \ref{fig:kendall-glioma}.) This gene was suggested as a potential prognostic indicator for grade II and III diffuse gliomas \citep{zhang2016shox2}. We remark that SHOX2 is recognized as a meaningful gene in the progression of various cancers. For example, \cite{yang2023shox2} experimentally studied how SHOX2 is associated with prostate cancer. The second most positively correlated gene KIAA0101 also has a known relationship between Glioma with experimental evidence \citep{wang2022kiaa0101}. Also, ITGB3BP \citep{liu2022itgb3bp}, ENPEP \citep{oudenaarden2022upregulated, xu2022whole}, and CD44 \citep{du2022association} are also known to be important genes related to the prognosis of Glioma.

Next, we examine some negatively correlated genes. It is worth noting that the two most negatively correlated genes, ANKRD46 and AMN1, do not have a known relationship to Glioma to the best of our knowledge. Elucidating relations between these undiscovered genes and Glioma may be helpful to progress understanding of Glioma.

There are also known important negatively correlated genes that may be involved with Glioma. DUSP26 is potentially guessed as an oncogene \citep{navis2010protein}, and NMT2 was used to target an inhibitor as a radiosensitizer in glioblastoma \citep{dinakaran2023radiosensitization}. As discoverd by \cite{aulestia2018quiescence}, MICU1, which is the modulator of the mitochondrial $\mbox{Ca}^{2+}$ uniporter, is related to the status of proliferating glioblastoma stem-like cells.

\subsection{Selected genes from ALL data analysis}
In ALL data analysis, RIMS3 is the most positively correlated variables as illustrated in Figure \ref{fig:kendall-all}. In the literature of leukemia study, RIMS3 was already discovered that it is highly upregulated with acute leukemia \citep{hicks2016molecular, malmberg2019accurate}. However, we cannot find some experimental results for RIMS3.
CDC25B is known to be involved in acute myeloid leukemia (ALM), but it is not clear that CDC25B is also related to ALL \citep{nakamura2011transcriptional}.
Similarly, DGKA is also known as related to ALM \citep{tan2023ritanserin} but not known for ALL to the best of our knowledge.
It is interesting that the selected gene of probe ID `34987\_at', which has a high correlation, does not have any gene symbol in Affymetrix Human Genome U95 Set annotation data \citep{hgu95av2}. Investigating the gene `34987\_at' may provide helpful direction to understanding ALL.

Next, we consider negatively correlated variables. The most negatively correlated variable is CD44, which is importantly involved in ALL \citep{garcia2018notch1, piya2022targeting}.
CCNA1 is known to be related to AML \citep{leung2020evaluation, yang2021histone} and \cite{niss2017ebi2} experimentally showed that GPR183 is involved in ALL.

\end{document}